\newtheorem{theorem}{Theorem}[section]
\newtheorem{corollary}[theorem]{Corollary}
\newenvironment{proof}[1][Proof]{\begin{trivlist}
\item[\hskip \labelsep {\bfseries #1}]}{\end{trivlist}}
\newenvironment{definition}[1][Definition]{\begin{trivlist}
\item[\hskip \labelsep {\bfseries #1}]}{\end{trivlist}}
\newcommand{\qed}{\nobreak \ifvmode \relax \else
      \ifdim\lastskip<1.5em \hskip-\lastskip
      \hskip1.5em plus0em minus0.5em \fi \nobreak
      \vrule height0.75em width0.5em depth0.25em\fi}
\begin{document}

\title{Multiorbital effects in the functional renormalization group:\\ A weak-coupling study of the Emery model}
\author{Stefan A. Maier$^1$\thanks{Electronic address: {\tt smaier@physik.rwth-aachen.de}} %
Jutta Ortloff$^{1,2}$, and Carsten Honerkamp$^1$}
\affiliation{$^1$ Institute for Theoretical Solid State Physics, RWTH Aachen University, D-52074 Aachen, Germany 
\\ and JARA - FIT Fundamentals of Future Information Technology \\
$^2$ Theoretical Physics, University of W\"{u}rzburg, D-97074 W\"{u}rzburg, Germany\\
}

\date{\today}

\begin{abstract}
We perform an instability analysis of the Emery three-band model at hole doping and
weak coupling within a channel-decomposed functional renormalization
group flow proposed in Phys.\ Rev.\ B {\bf 79}, 195125 (2009). In our approach, momentum dependences are taken into account with improved precision compared to previous studies of related models. Around a generic
parameter set, we find a strong competition of antiferromagnetic and
$d$-wave Cooper instabilities with a smooth behavior under a variation of
doping and additional hopping parameters. For increasingly incommensurate ordering
tendencies in the magnetic channel, the $d$-wave 
pairing gap is deformed at its maxima.
 Comparing our results for the Emery model to
those obtained for the two-dimensional one-band Hubbard model with
effective parameters, we find that, despite considerable qualitative
agreement, multi-orbital effects have a significant impact on a
quantitative level.
\end{abstract}

\pacs{05.10.Cc,71.10.Fd,74.20.Mn,74.72.Gh}

\maketitle

\section{Introduction}

Unconventional superconductivity is an extensively discussed topic of condensed matter physics.
For cuprate high-$T_{\rm c} $ materials, resonating valence-bond\cite{RVB} and spin-fluctuation\cite{scalapino-review} mechanisms have been proposed besides other approaches that require low-energy bosons in addition to an electron-electron interaction. While the former relies on a strong-coupling scenario, the latter may also apply to iron-based and other types of unconventional superconductors where the electrons interact more weakly.

A particular problem in the theoretical description of spin-fluctuation induced superconductivity is that often there is no  clear separation of energy scales, i.e.\ the spin-fluctuations are built up at least in parts by the same  electronic degrees of freedom as the pairing tendencies. At weak coupling, renormalization group (RG) methods can be used to study both types of electronic fluctuations on equal footing.  Both,
for the two-dimensional one-band Hubbard model at weak coupling\cite{schulz,lederer,furukawa} and for iron-based superconductors,\cite{chubukov} the interplay of the spin-density wave (SDW) and superconducting (SC) channels has been investigated
within RG approaches involving a small number of running couplings. Such studies have helped to understand the interacting ground states qualitatively.
More generally, the RG interpolates between microscopic models at high scales and and effective models at lower scales. As they take into account the dominant fluctuations on equal footing, RG methods may provide a many-body framework that can become quantitatively precise at least for some unconventional superconductors whose parameters fall into the weak-coupling sector.

In the attempt to become more quantitative, functional renormalization group (fRG) techniques for fermions (for a recent review, see Ref.~\onlinecite{Metzner-Rmp}) shall be useful, as they allow for studying the flow of momentum-dependent coupling functions around the Fermi surfaces which can resolve more details of the models under investigation. Using fRG, the ordering tendencies of the one-band Hubbard model at weak coupling have been classified in a number of works (see, e.g.\ Refs.~\onlinecite{halboth-metzner,n-patch,rohe}). Among others, also models for iron-based superconductors have been investigated within the fRG framework.\cite{wang-2009,wang-2010,platt-first} Here, relevant variations of the superconducting gap function due to material-dependent electrons structure differences have been found.\cite{pnictides-gap} 

Generically, the  multiple Fermi surfaces (hole and electron pockets) of iron-based superconductors with their varying orbital character have been proven to add new aspects such as competing pairing channels and gap anisotropies.\cite{pnictides-gap,sidplatt} Triggered by these insights, multiband effects are now explored and revealed in a number of other systems (e.g. Refs. \onlinecite{uebelacker,kiesel1,kiesel2}). 
It therefore appears rewarding to study multiband effects
in models for the cuprates. The main question is whether the underlying multiband character gives rise to deviations from the one-band picture, even though the single band at the Fermi level has overwhelming $d_{x^2-y^2}$-character. Furthermore, some proposed ordering phenomena in Copper oxide planes such as ring currents\cite{varma,fischer} are not describable by models involving one $d$-orbital on the Copper atoms only. Hence, also from this perspective, an fRG study of the Emery model,\cite{Emery,andersen-YBCO}  which includes the oxygen $p$-orbitals, appears to be worthwhile.

 Of course, within an fRG framework, cuprate models have to be considered at (possibly unrealistic) weak coupling strengths. However, as for the one-band Hubbard model, similarities between weak and strong coupling behavior can be expected. Moreover, multi-orbital cuprate models
seem very well suited as a testbed for methodological developments.
For example, the impact of an interaction rendered nonlocal by the transformation to the band language\cite{tmaier} 
can be more easily studied than in the context of pnictide models. A recent fRG study of two- and three-orbital models for the cuprates with orbitals only residing on the Copper atoms has revealed that these so-called \emph{orbital-makeup} effects may have a significant impact on the phase diagram.\cite{uebelacker}

In the strong-coupling case, other powerful many-body methods are applicable, such as dynamical mean-field theory\cite{Kotliar-review-DMFT,Vollhardt-review-DMFT,weber-2008,Medici-emery-model,weber-2010a,weber-2010b,wang-tpp,weber-charge-trans} (DMFT), dynamical cluster quantum Monte Carlo (DCQMC) techniques\cite{Hettler-DCA,Kent-emery-model} and the so-called variational cluster approach\cite{Potthof-VCA,wuerzburg-graz-VCA,VCA-twogap,Hanke-Kiesel-VCA} (VCA). Also these methods are still being extended and developed further in order to hopefully promote a deeper understanding of the cuprate superconductors. 
Unfortunately, a practicable strong-coupling truncation of the fRG flow equations for fermions, which would allow for direct comparison between fRG and those other methods, is not known so far.
As the fRG approach pursued in this work is hence confined to weak coupling, we do not primarily seek to compare the results of these strong-coupling approaches to our own findings for weak coupling. 

In addition to a weak-coupling truncation which neglects three-particle and higher interaction terms, the fRG studies mentioned so far share a common feature: While the dependence of the interaction on the Matsubara frequencies of the fermionic fields is omitted by projecting to zero frequency, their momentum dependence is projected to the Fermi surface, which is divided into $ n$ patches. In order to keep the numerics tractable at increasing resolution, a channel decomposition has been proposed for the frequency dependence\cite{karrasch} and for the momentum dependence\cite{husemann_2009} of the coupling functions. In such an approach, the coupling function of the two-particle interaction is split into at least three contributions which depend strongly only on one frequency/momentum variable and weakly on the other two ones.
For the one-band Hubbard model in two dimensions, the projection to zero frequency has been relaxed in a channel-decomposed approach.\cite{HGS-freq-dep,giering-new} In all those works, the weak momentum dependences have been accounted for by simple ans\"atze within an \emph{exchange parametrization}. These ans\"atze have been found to describe the weak momentum dependences well in large parts of the phase diagram with the exception of a region around the transition from $d$-wave SC to ferromagnetism (FM).\cite{jutta} Let us note in passing that also graphene and the kagome lattice models have recently been studied within a related channel-decomposed fRG treatment, dubbed singular-mode-fRG (SMFRG).\cite{ch-dec-graphene,ch-dec-kagome}

As for fRG studies of models for Copper oxide planes, the phase diagrams in Ref.~\onlinecite{uebelacker} apparently show some discretization artifacts related to the projection onto the Fermi surface. Hence, employing a channel-decomposed fRG approach for multi-orbital models appears natural:  Since the orbital makeup induces some non-locality in the interaction of multiband Hubbard models, it would be advantageous to resolve its momentum dependence away from the Fermi surface. Moreover, the material characteristics contained in such a multi-orbital model should appear in the high- rather than in the low-energy sector of these models, which makes some momentum resolution perpendicular to the Fermi surface even more desirable. Furthermore, the ans\"atze for the weak momentum dependences underlying the 
exchange-parametrization of Ref.~\onlinecite{husemann_2009} seem questionable in the presence of orbital makeup.
Therefore, instead of restricting the study to a few form factors, we will use the channel-decomposed flow with momentum dependences discretized on chess-boards in the present work. Note that this can be viewed as wavevector-based band-picture analog of the representation of the channels in terms of fermion bilinears on (short) bonds of the lattice in the real-space/orbital picture that is at the heart of the SMFRG approach by Q.-H. Wang \emph{et al.} \cite{ch-dec-graphene,ch-dec-kagome}. As we are primarily interested in a comparison to previous work in the band picture, the wavevector-based approach seems more adequate, but in principle the two  setups can be transformed into each other. The SMFRG approach usually deals with all orbitals or bands of the effective model together, while our current study is reduced to the conduction band at the Fermi surface.

This work is dedicated to a channel-decomposed fRG instability analysis of the Emery model and to the comparison to the one-band Hubbard model with effective parameters. It is organized as follows: In Sec.~\ref{sec:model}, we introduce the Emery model and the corresponding effective action for its conduction band. Then we give a prescription how the parameters of an effective one-band Hubbard model can be calculated and classify different kinds of multi-orbital effects. As we show in Appendix~\ref{sec:eff-pointgroup}, our coupling functions of the Emery model show the same
trivial point-group behavior as those of the one-band Hubbard model and therefore this prescription is viable. Sec.~\ref{sec:fRG} is devoted to the fRG formalism. After introducing the general form of the RG flow equations, we elaborate on a channel decomposition of those flow equations in Sec.~\ref{sec:chan-dec}. In Sec.~\ref{sec:FFE}, we further comment on the exchange parametrization of Refs.~\onlinecite{husemann_2009,HGS-freq-dep,giering-new} and on how approximations made in these studies can be understood in a group-theoretic sense.

Our numerical results for the Emery model are discussed in Sec.~\ref{sec:results}. First, we comment on the strong competition of the antiferromagnetism (AFM) and $d$SC instabilities observed for most parameter sets considered in this work in Sec.~\ref{sec:leading_inst}. The subsequent discussion of the dependence of the stopping scale on doping and oxygen-oxygen hopping suggests that the system exhibits a first-order transition between these to phases. In Sec.~\ref{sec:eff-short-range}, we compare the phase diagrams of the Emery model and of the one-band Hubbard model with the corresponding parameters. We finally conclude in Sec.~\ref{sec:conclusion} by discussing the importance of different kinds of orbital makeup effects.

\section{Model} \label{sec:model}
\subsection{Three-orbital Emery model}
 In this work, we study a three-orbital model introduced by Emery\cite{Emery,andersen-YBCO} for the description of the Cu-O planes of the high-$T_{\rm c} $ compounds. Its Hamiltonian reads as
\begin{equation*}
  H  = \sum_{{\bf k},\sigma} \Psi_\sigma^\dagger ({\bf k}) \left( {\cal H}_0  - \mu \right) \Psi_\sigma ({\bf k}) \, + : H_\mathrm{int} : \, ,
\end{equation*}
where $ {\bf k} $ denotes the 2D-momentum, $ \mu $ the chemical potential and $ \sigma $ the spin index
and where the interaction term $H_\mathrm{int} $ enters in normally ordered form.
The annihilation operators $ d_\sigma ({\bf k}) $ for the Cu $3d$-orbitals and $ p_{x,\sigma} ({\bf k}) , p_{y,\sigma} ({\bf k})  $ for the oxygen $2p$-orbitals form the components of the orbital pseudo-spinor
\begin{equation*}
\Psi_\sigma ({\bf k}) = \left( \begin{array}{c} d_\sigma ({\bf k}) \\ p_{x,\sigma} ({\bf k}) \\ p_{y,\sigma} ({\bf k}) \end{array}  \right) \, .
\end{equation*} 
Here and throughout we set the lattice constant (i.e.\ the spacing between neighboring Copper atoms) to unity. The one-particle part of $H$ is then determined by the matrix
\begin{align} \label{eqn:1pHam}
 {\cal H}_0 & = \left( \begin{array}{ccc}  \epsilon_d &  t_{pd} \, s_x &  t_{pd} \,s_y \\ 
 t_{pd} \, s_x& \epsilon_p+ t_{pp} \, c_x& 2 t_{pp}\, s_x  s_y\\
 t_{pd} \,s_y & 2 t_{pp} \, s_x s_y& \epsilon_p+ t_{pp} \, c_y\end{array} \right)  \, , \\ \notag
  s_{x,y} & = \sin \left(k_{x,y}/2 \right) \, , \quad c_{x,y} = \cos \left(k_{x,y} \right) \, ,
\end{align}
with on-site energies $ \epsilon_d, \epsilon_p $ and hopping integrals $ t_{pp}, t_{pd} $. 
We use LDA-values of these parameters\cite{hybertsen} for ${\mathrm{La}}_{2}$${\mathrm{CuO}}_{4}$ as a starting point of our analysis.
The importance of the oxygen-oxygen hopping $t_{pp} $ has been extensively discussed\cite{andersen-YBCO,weber-2008,weber-2010a,weber-2010b,wang-tpp}
 for the strong coupling case. At weak coupling, in its absence the leading instability of the system would correspond to commensurate AFM due to perfect nesting.
In the basis chosen in Eq.~(\ref{eqn:1pHam}), $ {\cal H}_0 $ is not $ 2 \pi $-periodic due to the $s_{x/y}$-entries. As pointed out in Ref.~\onlinecite{ours-point-group}, all momenta still must be folded back to the first Brillouin zone.

In addition to this one-particle Hamiltonian, we consider a short-ranged interaction term
\begin{align*}
 H_\mathrm{int} & =   U_d \sum_i  n_{d,\uparrow} ({\bf r}_i) \, n_{d,\downarrow} ({\bf r}_i) + U_p \sum_j   n_{p,\uparrow} ({\bf r}_j) \, n_{p,\downarrow} ({\bf r}_j)  \\
  & \quad + U_{pd} \sum_{\langle i j \rangle}  n_d ({\bf r}_i) \, n_p ({\bf r}_j) + U_{pp} \sum_{\langle j j' \rangle}   n_p ({\bf r}_j) \, n_p ({\bf r}_{j'})  \, ,
\end{align*}
where the brackets $ \langle i j \rangle $ and $ \langle j j' \rangle $  indicate that the sum only runs over neighboring orbitals of the respective types.
We will restrict our study to weak interaction, i.e.\ the typical energy scales of the interaction are
 about one order of magnitude below the values given in Ref.~\onlinecite{hybertsen} for ${\mathrm{La}}_{2}$${\mathrm{CuO}}_{4}$. Interaction terms involving the oxygen $p$-orbitals are weak compared the the dominating $U_d$-term and are thus often neglected in the literature (see Ref.~\onlinecite{wang-tpp} for example).

Within the fermionic fRG approach pursued in this work, it turns out to be advantageous to write the quadratic part of the Hamiltonian in its diagonalized form, i.e.\ in terms of bands instead of orbitals. The field operators then correspond to Bloch states that do not get mixed by the one-particle part of the Hamiltonian. The band dispersion of the Emery model is depicted in Fig.~\ref{fig:bs} for typical parameter values. The chemical potential $ \mu $ is chosen to values around van-Hove filling $ \mu_{\rm vH} $ where the Fermi surface touches the saddle points at $ (0, \pi ) $ and $ (\pi,0) $ of the uppermost band. We then obtain one conduction band which is separated
from two valence bands by an energy gap of about four times its width.
   Through the orbital weight imposed by the unitary transformation from the orbital to the band picture,
the interaction acquires a nontrivial momentum dependence, dubbed \emph{orbital makeup}\cite{tmaier} by some authors.
From the form of the one-particle Hamiltonian Eq.~(\ref{eqn:1pHam}), one finds that the hybridization of the $d$- and $p$-orbitals grows from the center to the boundary of the Brillouin zone.
\begin{figure}
 \includegraphics[width=8.4cm]{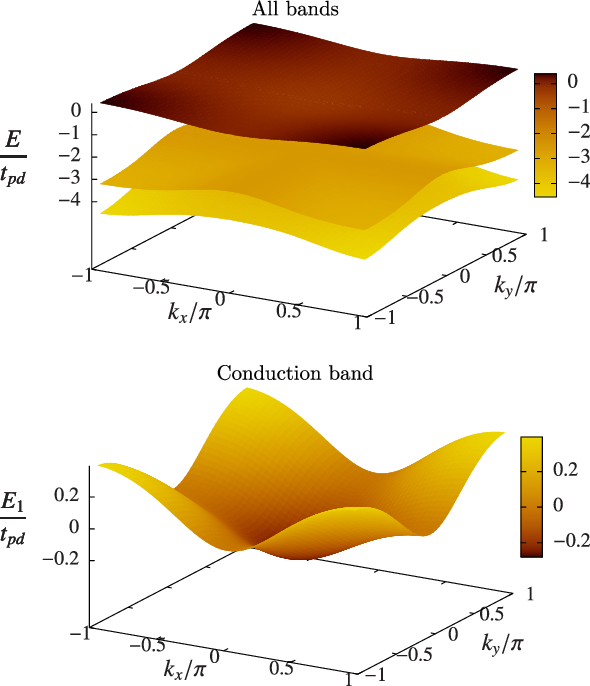}
 \caption{(Color online) Bandstructure (upper part) of the Emery model for the dispersion parameters given in Tab.~\ref{tab:params} and at van-Hove filling and a blow-up for the conduction band (lower part).}
\label{fig:bs}
\end{figure}
\subsection{Effective one-band action} \label{sec:eff-1bd}
 In principle, an appropriate  low-energy solver could be directly applied to the full three-band model. Such a solver would effectively resum diagrams up to infinite order, even if the underlying concept is of non-perturbative nature.  The result of such a resummation at weak to moderate coupling will be dominated by diagrams with internal legs on the conduction band. It should therefore suffice to treat the valence bands perturbatively only up to a certain order. 

In the fRG literature on multi-orbital models (see for example Ref.~\onlinecite{Thomale-Platt-pnictides}) high-energy modes above some ultraviolet cutoff or in bands that do not cross the Fermi level are usually neglected. 
In a recent publication,\cite{3particle} we have discussed
the impact of the most relevant perturbative corrections from the modes above this cutoff. In the present work, those modes correspond to the valence bands. As a starting point, we have considered an effective action
\begin{align} \notag
 S_{\rm eff} [\bar{\chi}_-,\chi_-] 
& = \boldsymbol{\bar{\chi}}_- \boldsymbol{D}_- \boldsymbol{\chi}_- + {\cal V} [\bar{\chi_-},\chi_-] \, ,  \\ \label{eqn:Seff}
e^{-{\cal V} [\bar{\chi_-},\chi_-]} & =  \int \!\! {\cal D} \chi_+ \, e^{-\boldsymbol{\bar{\chi}}_+ \boldsymbol{D}_+ \boldsymbol{\chi}_+}
 e^{-S^{(4)} [\bar{\chi}_+ + \bar{\chi}_-, \chi_+ + \chi_-]} \, ,
\end{align}
 for the conduction band represented by the Grassmann field $ \chi_- $ with the inverse propagator $ \boldsymbol{D}_- $ as a starting point.
 Here, the functional integral defining the effective interaction
$ {\cal V} $ runs over the Grassmann field $ \chi_+ $ representing the valence bands with inverse propagator $ \boldsymbol{D}_+ $ and $ S^{(4)} $ denotes the bare interaction of 
the original model. Formally, all connected Greens functions of the conduction band can be reproduced exactly from $ S_{\rm eff} $. In practice, however, the functional integral in Eq.~(\ref{eqn:Seff}) is evaluated perturbatively. In Ref.~\onlinecite{3particle}, we have argued that the most relevant corrections to the bare interaction of the conduction electrons (or holes)
are given by a second-order tree-diagram  with six external legs on the conduction band. This diagram is depicted in the second line of Fig.~\ref{fig:6pt}.
 In the following, we will also study the impact of the three-particle interaction on the fRG flow.
\begin{figure}
 \includegraphics[width=8.4cm]{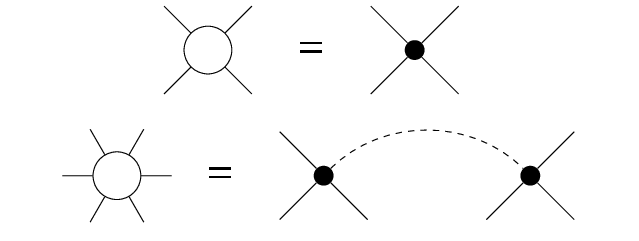}
 \caption{Two- and three-particle vertex of $ S_{\rm eff} $ for the effective one-band model. Small filled vertices correspond to the bare interaction $ S^{(4)} $, whereas the vertices of $ S_{\rm eff} $ on the left hand sides are denoted by empty circles. Solid lines represent the conduction band and dashed lines the valence bands. There are no propagators attached to the external legs. Self-energy effects will be neglected.}
 \label{fig:6pt}
\end{figure}

 If this three-particle term is neglected, the $(t,t')$ one-band Hubbard model
seems to be a good candidate for an effective low-energy model.
 Such a description can be considered valid if a low-energy solver yields similar results for $ S_{\rm eff} $  and for the action corresponding to the one-band \emph{Hamiltonian}
\begin{align} \notag
 H = & \left[ t \sum_{\langle i j \rangle,\sigma}  c^\dagger_{\sigma,i} c_{\sigma,j} + t' \! \sum_{\langle \langle i j \rangle \rangle,\sigma} c^\dagger_{\sigma,i} c_{\sigma,j} + \text{h.c.} \right] \\ \label{eqn:hubbard} & + U_{\rm eff} \sum_i n_{\uparrow,i} \, n_{\downarrow,i} \, ,
\end{align}
 where $ n_{\sigma,i} = c_{\sigma,i}^\dagger c_{\sigma,i} $ and where $ \langle \langle i j \rangle \rangle $ indicates that the sum runs over second neighbors.

 Before we give a prescription how to calculate the effective parameters $t,t'$ and $ U_{\rm eff}$, let us elaborate
on the differences between the effective action Eq.~(\ref{eqn:Seff}) for the conduction band and the effective one-band Hamiltonian Eq.~(\ref{eqn:hubbard}).
 The frequency dependence of the action corresponding to Eq.~(\ref{eqn:hubbard}) will of course be trivial in contrast to the frequency dependence in $ S_{\rm eff} $ which is generated by integrating out the valence
 bands. Throughout this work, we will, however, project to zero frequency and hence we will not discuss such effects. Recently, frequency-dependent RG flows have been analyzed for the two-dimensional one-band Hubbard model.\cite{HGS-freq-dep,giering-new} In this context, also frequency-dependent multi-orbital effects appear to be worth further study.

 As far as the momentum dependence is concerned, the Hamiltonian Eq.~(\ref{eqn:hubbard}) is short-ranged in the sense that all terms are either on-site or describe hopping between first and second neighbors at most. In contrast, $ S_{\rm eff} $ may contain quite long-ranged terms both in the dispersion and in the two-particle interaction. At van-Hove filling, the restriction to only two
hopping integrals in Eq.~(\ref{eqn:hubbard}) can be justified in the spirit of a gradient expansion. Since we have a diverging density of states at the saddle points $ A =(0,\pi) $ and $ B= (\pi,0) $, the integrals over momenta of internal lines in a perturbation expansion will be dominated by a small region around these saddle points. Consequently, in leading order in an expansion around the van-Hove points, the exact and an approximate dispersion should coincide. Since a $ k_x k_y $-term is forbidden by symmetries, only two expansion coefficients remain in leading (second) order. They can be expressed in terms of effective first and second neighbor hoppings $ t $ and $ t'$. 

Away from van-Hove filling, the situation may be more involved and hopping terms between non-neighboring unit cells of the direct lattice may be needed for an effective model.
Since this corresponds to Wannier functions that have support on more than one unit cell, such a description is not really of tight-binding type.
The effective two-particle interaction will also bear traces of the multi-orbital character of the underlying model. More precisely, the orbital makeup renders the interaction nonlocal. Whether this feature plays a role for the low-energy physics remains a question to be answered by applying a low-energy solver.

 Before doing so, we should however give a prescription how to obtain the effective Hubbard parameters $ t $, $ t'$ and $ U_{\rm eff} $. From the comments made above, a gradient expansion around the saddle points of 
the dispersion appears natural as a guiding principle. So the calculation of $ t $ and $ t' $ is straightforward. 

For the interaction, we proceed as follows.
First, let $ U (k_1,k_2,k_3) $ denote the coupling function that appears in the parametrization of the effective two-particle interaction
\begin{align} \notag
 {\cal V}^{(4)}  [\bar{\chi}_-,\chi_-] &= - \frac{1}{4} \int \prod_i d (\sigma_i,k_i) \,  \bar{\chi}_{-,\sigma_1} (k_1) \, \bar{\chi}_{-,\sigma_2} (k_2) \\ \notag
 & \quad \times \chi_{-,\sigma_3}  (k_3) \, \chi_{-,\sigma_4} (k_4) \, \delta (k_1 +k_2 -k_3 -k_4)\\ \notag
 & \quad \times  \left[ U (k_1,k_2,k_3) \, \delta_{\sigma_1, \sigma_4}
 \delta_{\sigma_2, \sigma_3} \right. \\ \label{eqn:parameter}
 & \quad \left. - U (k_2,k_1,k_3) \, \delta_{\sigma_1, \sigma_3} \delta_{\sigma_2, \sigma_4} \right] 
\end{align}
which exploits the U(1), SU(2) and translation symmetries.\cite{salm_hon_2001} We further assume a basis of Bloch states
that ensures the $ C_{4v} $-symmetry of this coupling function.
In Appendix~\ref{sec:eff-pointgroup}, we will show that such a Bloch basis exists. This property is nontrivial, since the oxygen $p$-orbitals are mapped onto one another upon a rotation of $ \pi/2$.

As a second step, we also expand the two-particle interaction around the saddle-points. In leading order, we then have four two-particle 
couplings%
, namely
\begin{align*}
 g_1 & = U (A,B,B) = U(B,A,A) \\
 g_2 & = U (A,B,A) = U(B,A,B) \\
 g_3 & = U (A,A,B) = U(B,B,A) \\
 g_4 & = U (A,A,A) = U(B,B,B) \, .
\end{align*}
These are the couplings of the so-called two-patch approximation.\cite{furukawa}
For the Hubbard model, all four two-patch couplings $ g_i $ are equal to $ U_{\rm eff} $. For given $ {\cal V}^{(4)} $, we therefore take the average of the two-patch couplings
$ U_{\rm eff} = \sum_i g_i /4 $ as the effective Hubbard interaction, while the hopping parameters $t$ and $t'$ are calculated from a gradient expansion. This means that we choose the parameters of the one-band Hubbard model such that it has a common two-patch approximation
with $ S_{\rm eff} $ and  that we further restrict the hopping terms to neighboring unit cells and the interaction
to an on-site density-density term in the effective one-band Hamiltonian.

Note that, in contrast to the famous work by Zhang and Rice,\cite{zhang-rice} which has been tailored rather for the strong-coupling case, this method is non-perturbative in the hybridization between the orbitals. 
\subsection{Classification of multi-orbital effects}
 Before we discuss our method, a classification of multi-orbital effects seems to be in order.
 Clearly, if we have a full model [in the present case $ S_{\rm eff}$ given in Eq.~(\ref{eqn:Seff})] 
 and an effective one-orbital Hamiltonian [e.g.\ the Hubbard Hamiltonian in Eq.~(\ref{eqn:hubbard})], effects contained in the full, but not in the effective model are of multi-orbital character.
 Of course, the multi-orbital nature cannot be attributed to a certain effect without referring to a prescription according to which the full model is mapped to its effective single-orbital counterpart.
 So, in the following, multi-orbital effects will be classified with respect to the above prescription based on a
 gradient expansion around the saddle points of the dispersion.

In this picture, multi-orbital effects decay into three classes, namely
\begin{enumerate}[i)]
 \item effects resulting from the three-particle interaction, and other higher-order vertices generated by the high-energy bands, \label{enu:3pe}
 \item orbital-makeup effects, in particular a
 detuning of the two-patch couplings $ g_i $, \label{enu:om}
 \item hopping between non-neighboring unit cells.
\end{enumerate}
 The three-particle and higher-order vertices responsible for the effects listed as \ref{enu:3pe}) appear as (perturbative) corrections to the bare interaction in the effective interaction $ {\cal V}$ in Eq.~(\ref{eqn:Seff}). Since these corrections also contain internal loops with valence-band propagators, they are in general frequency-dependent. 
Throughout this work, we will however neglect four-particle and higher-order terms as well as contributions with closed valence-band loops (see Fig.~\ref{fig:6pt} for the remaining diagrams).
Since we then
 include only tree-diagrams of bare vertices in the perturbative expansion of ${\cal V} $ (i.e.\ diagrams that are reducible to bare vertices by cutting one internal line), the two- and three particle interaction terms of $ S_{\rm eff}$ are frequency-independent in this approximation. Hence, the frequency dependence of $ S_{\rm eff} $ is completely neglected in this work. 

In the present work, the three-particle term of $ S_{\rm eff}$ is either neglected or fed back into the flow equation of the two-particle vertex using an extended truncation of the fRG flow equations.\cite{3particle} It will turn out to play a minor role due to the large gap between the conduction band and the valence bands.

 As far as orbital makeup effects are concerned, we thus restrict the discussion to the two-particle interaction although the three-particle term obviously bears signatures of orbital makeup.
 Deviations of the two-particle interaction from the on-site Hubbard type manifest themselves in a detuning of 
the two-patch couplings $g_i$ and in a non-trivial momentum dependence also away from the saddle points of the dispersion. These effects have been listed as \ref{enu:om}) above. This implies that orbital makeup effects can be partly understood with the help of the flow equations in the two-patch approximation.\cite{furukawa}
By looking at the bare one-particle part Eq.~(\ref{eqn:1pHam}) of the Emery Hamiltonian, however, we observe that
the hybridization between the $d$ and $p$ orbitals is stronger at the saddle points than in other parts of the Brillouin zone. For example, at the origin in $k$-space, there is no hybridization at all. Therefore, we have a pronounced momentum dependence which may lead to effects that cannot be captured in the two-patch approximation.

The effective action $ S_{\rm eff} $ and the effective single-orbital Hamiltonian $ H_{\rm eff} $ also differ in their quadratic parts. If the dispersion of the conduction band in the former is expanded around its saddle points, also hopping terms between non-neighboring unit cells appear in the coefficients in subleading orders.
As already mentioned, these longer-range hopping integrals do not fit well into a tight-binding picture, as the corresponding Wannier functions would have long tails. Since the conduction band is predominantly of $d$-orbital character, the hybridization with oxygen $p$-orbitals can be said to create such tails. 
We expect hopping between non-neighboring unit cells to play a minor role for the fRG results at weak coupling.

\section{Functional renormalization group} \label{sec:fRG}
 \subsection{General framework}
  In order to now extract physical properties at low temperatures for the effective one-band action, we have to resort to  low-energy solvers. In the fRG approach used in this work, an exact flow equation
  interpolates between a microscopic action and the generating functional of the one-particle irreducible (1PI) vertices. More precisely, we are dealing with a first-order differential equation describing the renormalizations of this functional that occur when some infrared cutoff $ \lambda $ is lowered.  The exactness of the flow equation allows for non-perturbative truncations,\cite{btw-review,Metzner-Rmp} where fermionic two-particle interaction is decoupled by a Hubbard-Stratonovich transform. At weak to moderate coupling, one may as well resort to a vertex expansion\cite{salm_hon_2001,Metzner-Rmp} in order to avoid some biases of the partially bosonized approach. The purely fermionic vertex-expansion approach is perturbative in the \emph{effective} interaction (and hence of weak-coupling nature) and \emph{resums} diagrams up to infinite order in the \emph{bare}  interaction in this way.

 If the two-particle interaction is parametrized in the same way as in Eq.~(\ref{eqn:parameter}), but with a renormalized coupling function $ V (k_1,k_2,k_3) $ instead of $ U (k_1,k_2,k_3) $,
the right-hand side of the resulting flow equation
\begin{align*}
 \partial_\lambda  V (k_1,k_2,k_3) & = {\cal T}_{\rm pp} (k_1,k_2,k_3)
 +  {\cal T}^{\rm cr}_{\rm ph}  (k_1,k_2,k_3) \\ & \quad + {\cal T}^{\rm d}_{\rm ph} (k_1,k_2,k_3)
\end{align*}
 for the two-particle vertex comprises five diagrams. $ {\cal T}_{\rm pp} $ denotes the particle-particle diagram and $ {\cal T}_{\rm ph}^{\rm cr} $ the crossed particle-hole diagram while $ {\cal T}_{\rm ph}^{\rm d} $ comprises three direct particle-hole diagrams including vertex corrections and screening.  For their precise form, we refer to Appendix~\ref{sec:flow-eq}.
Three-particle-feedback corrections to these flow equations are given in Appendix~\ref{sec:3p-feedback}.

In this article, we employ the $\Omega $-scheme\cite{husemann_2009} regularization, where the propagator $ G (k) $ is multiplied by a smooth frequency regulator according to
\begin{equation} \label{eqn:Omega-reg}
 G (k) \to G (k) \, R_\lambda (k_0) \, , \quad R_\lambda (k_0) = \frac{k_0^2}{k_0^2+\lambda^2} \, .
\end{equation}
This regularization scheme does not suppress a Stoner instability as a momentum-shell cutoff would.

 \subsection{Channel decomposition} \label{sec:chan-dec}

 In many previous works, frequencies have been projected to zero and the flow of the self-energy has been neglected. More recent studies\cite{HGS-freq-dep,giering-new,uebelacker-se}   on one-band models taking into account parts of the self-energy and frequency-dependent vertices show that the flows to strong coupling are not changed in character if these two approximations are made \emph{simultaneously}. They considerable facilitate the loop integrations in the flow of the two-particle interaction and reduce the number of running couplings. Since our focus
rather lies on orbital makeup effects, we will hence resort to these approximations in the following.

Although this simplifies the flow equations to be solved, a direct and unbiased discretization of all external momenta in the remaining flow equation for the coupling function $ V (k_1,k_2,k_3) $  is still too costly from a numerical viewpoint.
In an older approach to handle this problem, the momentum dependence of $ V $ was projected to a finite number of patches on the Fermi surface.\cite{n-patch} This Fermi-surface 
patching was designed to reproduce the low-energy physics properly, but renormalizations away from the Fermi surface are only crudely approximated. In multi-orbital problems, already the bare interactions, expressed in band representation, show a significant wavevector-dependence away from the Fermi surface.
 As a consequence, the Fermi-surface patching in Ref.~\onlinecite{uebelacker} is plagued by discretization artifacts.
 Hence, an approach that is more suited to capture orbital-makeup effects is desirable.
 (Nevertheless, meaningful results for multiband models can by obtained from Fermi-surface patching fRG as in Refs.~\onlinecite{platt-first,pnictides-gap,sidplatt,Thomale-Platt-pnictides,uebelacker,kiesel1,kiesel2,kiesel3}.)

As in Refs.~\onlinecite{husemann_2009,HGS-freq-dep,giering-new,ch-dec-graphene,ch-dec-kagome}, we therefore decompose the flow into three channels
\begin{align*}
 V(k_1,k_2,k_3)  = U(&k_1,k_2,k_3) - \Phi_{\rm SC} (k_1+k_2,k_1,k_3) \\ & + \Phi_{\rm M} (k_3-k_1,k_1,k_2) \\ &
 + \frac{1}{2} \Phi_{\rm M} (k_2-k_3,k_1,k_2) \\ & - \frac{1}{2} \Phi_{\rm K} (k_2-k_3,k_1,k_2) \, ,
\end{align*}
with the bare interaction $ U $ and the pairing, spin and charge coupling functions $ \Phi_{\rm SC} $, $ \Phi_{\rm M} $ and $ \Phi_{\rm K} $.
These coupling functions are generated during the flow according to
\begin{align*}
 \dot{\Phi}_{\rm SC} (k_1+k_2,k_1,k_3) & =  - {\cal T}_{\rm pp} (k_1,k_2,k_3) \\
 \dot{\Phi}_{\rm M} (k_3-k_1,k_1,k_3) & =  {\cal T}_{\rm ph}^{\rm cr} (k_1,k_2,k_3) \\
 \dot{\Phi}_{\rm K} (k_3-k_1,k_1,k_3) & =  - 2 {\cal T}_{\rm ph}^{\rm d} (k_1,k_2,k_3)  \\ & \quad + {\cal T}_{\rm ph}^{\rm cr} (k_1,k_2,k_1+k_2-k_3) \, .
\end{align*}
The precise form of the one-loop terms on the right-hand side of these equations is given in Appendix~\ref{sec:flow-eq}.
 The first argument of the $ \Phi $s corresponds to the total or transfer momentum of the loops in the particle-particle and particle-hole channels, respectively.
 For weak coupling, the coupling functions therefore should depend strongly on their first argument (bosonic momentum) and only weakly on the other two (fermionic) momenta.
 The dependence on the fermionic momenta can be accounted for by an exchange parametrization based on a form-factor expansion (FFE). In Refs.~\onlinecite{husemann_2009,HGS-freq-dep,giering-new}, only $s$- and $d_{x^2-y^2}$-wave form factors have been taken into account. It has however been shown\cite{jutta} that the remainder terms neglected in that work can have a substantial impact on the stopping scale in some parts of the phase diagram
of the one-band 2D Hubbard model.

In the present article, we therefore pursue a different approach. Instead of employing a FFE, we directly patch all three momenta of the $\Phi$s while we project to zero frequency. The fermionic momenta then live on a much coarser grid than the bosonic ones. The finest resolution is only used for the bosonic momenta around potential divergencies and for the internal loop momenta close to the Fermi surface. If not indicated otherwise, we use $ 6 \times 6 $ fermionic quadratic patches and a bosonic resolution of $ 18 \times 18 $ and of $ 126 \times 126 $ patches away from
and close to possible ordering vectors, respectively.
The Pauli principle, point-group and particle-hole symmetries reduce the number of independent couplings further.  
 Note that, close to the transition from $d_{x^2-y^2}$-wave SC to ferromagnetism in the one-band Hubbard model, this resolution would be still
too coarse in the fermionic momenta.\cite{jutta} For the parameters considered in this work, however, we are far away from such a transition and a deformation of the form factors at
low energies should at least qualitatively be captured within our approach. 

Since we project to zero frequency, the Matsubara sums over internal frequencies can be performed analytically as in Ref.~\onlinecite{husemann_2009}. For the $\Omega$-scheme regularization employed in this work, the remaining two-dimensional loop integrals are numerically challenging. They can be efficiently calculated using an adaptive routine.\cite{genz} For different patches, these loop integrals are evaluated in parallel using OpenMP and then stored. When we subsequently assemble the diagrams, we again OpenMP parallelize the calculation for different external momenta. 

We have sent external fields breaking the U(1), SU(2) and/or space group symmetries to zero right from the beginning. The resulting decrease of computational effort, however, comes
with a price. The limit of vanishing external fields is only physical \emph{after} the thermodynamic limit has been performed. Consequently, spontaneous breaking of a symmetry manifests
itself in a flow to strong coupling. We therefore stop the flow when the maximum of the coupling functions reaches $ 7.7 \,  t_{pd} $, and interpret the stopping scale $ \lambda_{\rm c} $
as an estimate for the critical scale of the respective instability.%
\footnote{As for a conventional momentum shell cutoff, the particle-particle susceptibility in the $\Omega$-scheme diverges at a scale $ \lambda_{\rm c} \propto e^{- 1/(U \rho_0)} $ for a
BCS-type interaction and a constant density of states near the Fermi level. Therefore, $ \lambda_{\rm c} $ is proportional to the BCS critical temperature.}
 Note that this stopping condition is rather weak, since $ 7.7 \, t_{pd} $ corresponds to about thirteen times the bandwidth of the conduction band. In comparison, in Ref.~\onlinecite{husemann_2009}, the flow is stopped once the interactions exceed $2.5$ times the bandwidth. For the present work, we have not chosen a stronger stopping condition in order to keep track of the strong competition of AFM and $d$SC instabilities discussed in Sec.~\ref{sec:leading_inst}.
 \begin{table}
 \begin{tabular}{cccccc}
   $ t_{pp} $ & $ \epsilon_p- \epsilon_d $ & $ U_d $ & $ U_p $ & $U_{pd} $ & $U_{pp} $ \\
\hline
\hline
   $ 0.5 \, t_{pd} $ &  $ - 2.77 \, t_{pd} $ & $ 0.385 \, t_{pd} $ & $ U_d/ 8 $ & $ U_d /16 $ & $ 0 $ \\ 
 \hline
 \end{tabular}
 \caption{Generic parameter set with $ t_{pd} > 0 $. The values for the dispersion have been chosen according to Ref.~\onlinecite{hybertsen}, while we have lowered the interaction parameters given in that work by a factor of $1/20$. The parameters of the corresponding one-band Hubbard model are $ t= 9.8 \cdot 10^{-2} \, t_{pd} $, $ t' = - 0.26 \, t $ and $ U_{\rm eff} = 2.73 \, t $. }
 \label{tab:params}
 \end{table}
 
\subsection{Remarks on form-factor expansions} \label{sec:FFE}
Now we comment on the form-factor expansion (FFE) put forward in Ref.~\onlinecite{husemann_2009}. 
 In particular, we shall analyze to what extend a channel-decomposed, renormalized two-fermion interaction can be conveniently expressed as one resulting from a small number of bosonic channels.
 Regarding the classification of such order-parameter fields, we proceed similarly to Vojta \emph{et\ al.}\ in Ref.~\onlinecite{vojta-class}, where different
types of commensurate ordering within the $d_{x^2-y^2}$-wave superconducting phase have been classified according 
to the irreducible representations (IRs) of the point group. The underlying group-theoretical lemmata will be laid out in Appendix~\ref{sec:group-th}.
 For the SMFRG approach, similar considerations have been undertaken.\cite{wang-private}

In the present case,
 the coupling functions $\Phi_\mathrm{SC}, \Phi_\mathrm{M} $ and $ \Phi_\mathrm{K} $ may be decomposed in the spirit of a Hubbard-Stratonovich transform using a set of orthonormal form factors $ f_i $.
 In the Cooper channel, for example, we have
\begin{equation*}
 \Phi_{\rm SC} (l,q,q') = \sum_{i,j} f_i ( {\bf l}/2 -{\bf q} ) \,  f_j ( {\bf l}/2 -{\bf q'} ) \, D_{ij} (l) \, ,
\end{equation*}
with bosonic propagators  $ D_{ij} (l) $. The form factors play the role of fermion-boson vertices, with indices $i $ and $j$ labeling different bosonic species (flavors). In the present from of the exchange parametrization, they are frequency-independent and can therefore be chosen real. They obey the orthonormality relation $ \int \! d {\bf q} \, f_i ({\bf q}) \, f_j({\bf q})  = \delta_{i,j} $. Here and throughout, integrals $ \int \! d{\bf q} $ run over the whole Brillouin zone and a normalization factor has been absorbed into the measure such that $ \int \! d{\bf q} \, 1 = 1$. For given $ \Phi_{\rm SC} $, the matrix elements $ D_{ij} (l) $
are thus uniquely defined.
 Since the two-particle coupling functions are $C_{4v}$-symmetric,
 it appears natural to choose basis functions of the IRs of $ C_{4v} $ as form factors, for example $ f_s ({\bf q}) = 1 $ for $s$-wave , $ f_{p,\pm} ({\bf q})= \sin (q_x) \pm \sin (q_y) $ for  $p$-wave, and $ f_d ({\bf q}) = \cos(q_x) - \cos (q_y) $ for $d_{x^2-y^2}$-wave. (Note that the IR corresponding to a $p$-wave is two-dimensional, while the other ones are one-dimensional.)
\begin{table} 
\begin{tabular}{|l||c|c|c|c|c|}
 \hline
 & $ s$-wave & $p$-wave & $d_{x^2-y^2}$-wave & $ d_{xy} $-wave & $g $-wave \\
\hline
\hline
$ E $ & $ 1 $ & $ 2 $ & $ 1 $ & $1 $ & $1$ \\
\hline
$2 C_4 $ & $ 1 $ & $ 0 $ & $ -1 $ & $-1 $ & $1$ \\
\hline
$ C_2 $ & $ 1 $ & $ -2 $ & $ 1 $ & $ 1 $ & $1$\\
\hline
$ 2I$ & $ 1$ & $ 0 $ & $ 1 $ & $ -1 $ & $-1 $\\
\hline
$ 2I'$ & $ 1$ & $ 0 $ & $ -1 $ & $ 1 $ & $-1 $\\
\hline
\end{tabular}
 \caption{Character table of $ C_{4v} $. The classes $ I $ and $ I' $ correspond to reflections with respect to the $ (0,1), (1,0) $ axes or the $ (1,1), (1,-1) $ axes,
 respectively.}
 \label{tab:transf-prop}
\end{table}

When one projects to zero frequency, the coupling function can be fully recovered by using a complete set of form factors. By (anti)symmetrizing the real Fourier basis functions
on the first Brillouin zone with respect to the $ C_{4v} $ point-group operations, one can easily construct a complete basis set with elements that transform according to the IRs (cf.\ Tab.~\ref{tab:transf-prop}). By equivalence transformations of the IR, these form factors can be rendered well-behaved under $ C_{4v} $ in the sense of Appendix~\ref{sec:group-th}.
The form factors mentioned above are the most slowly varying basis functions of the respective IRs, which corresponds to the formation of exchange bosons from constituents residing on the same site or on neighboring
unit cells.

For the bosonic ordering vectors $ {\bf l} = (0,0) $ and $ (\pi,\pi) $,
the little group ${\cal L}_{\bf l} $ equals the full point group $ C_{4v} $.
 According to Corollary~\ref{th:nomix}, which is proven in Appendix~\ref{sec:group-th},
matrix elements of $ D(l) $ mixing bosons of inequivalent IRs vanish at these momenta. 
Different form factors transforming according to equivalent IRs may however mix. In the following,
we shall refer to this effect as to the admixture of higher harmonics.
In flavor space, the non-vanishing matrix elements of $D_{ij}$ appear in $d \times d$ blocks corresponding to a $d$-dimensional IR. Note that Schur's first lemma\cite{tinkham-groupth} implies that  all these remaining blocks are then a multiple of the unit matrix, if the form factors are well-behaved in the sense of Appendix~\ref{sec:group-th}, where this statement is proven as Corollary~\ref{th:unit-mat}.

At $ {\bf l} = (0,\pi)$ and  $(\pi,0)$, the little group reduces to $ {\cal L}_{\bf l} = C_{2v} $ and 
therefore, again by virtue of Corollary~\ref{th:nomix}, the five IRs
  of $ C_{4v} $ decay into three sets of form factors that do not mix with another. One contains $s$- and $d_{x^2-y^2} $-wave, one $d_{xy} $- and $g$-wave while the third one purely consists of $p$-wave form factors. This $p$-wave set splits into two, each transforming with a one-dimensional IR of $C_{2v}$. Altogether, this corresponds to the four IRs of the little group. For example, the most slowly varying $p$-wave basis functions $ \sin (q_x) $ and $ \sin (q_y) $ then transform with two inequivalent one-dimensional IRs of $ C_{2v}$, which may be referred to as $p_x$- and $p_y$-wave.

For bosonic momenta $ {\bf l} $ on the boundary of the first Brillouin zone, i.e.\ for $ {\cal L}_{\bf l} = C_s $, there are two such sets, one for $s,p_{x/y},d_{x^2-y^2}$- and the other one for $ p_{y/x},d_{xy}$ and $ g$-wave. 
Again, those two sets correspond to the IRs of the little group.
For bosonic momenta that do not lie on any of the symmetry axes, the little group just contains the identity element and all form factors may get mixed.

Let us now assume that the form-factors are well-behaved in the sense of Appendix~\ref{sec:group-th}.
If the mixing between inequivalent IRs of $C_{4v}$ is neglected, the bosonic propagators of the four one-dimensional IRs then inherit the full $ C_{4v} $ symmetry of the coupling function according to Corollary~\ref{th:1dtrans}.
In contrast, the $p$-wave block still transforms with two-dimensional IR matrices.
Let us note in passing that a mixing of different IRs has already been observed in Ref.~\onlinecite{Otsuki} for the RPA pairing susceptibility at incommensurate Copper pair momenta.

  So far, we have only considered a FFE in the Cooper channel. Of course, such an expansion may as well be performed in the other channels, which are then decomposed as
\begin{align*}
 \Phi_{\rm M} (l,q,q') & =  \sum_{i,j} f_i ( {\bf l}/2 +{\bf q} ) \,  f_j ( - {\bf l}/2 + {\bf q'} ) \, M_{ij} (l) \, , \\
 \Phi_{\rm K} (l,q,q') & =  \sum_{i,j} f_i ( {\bf l}/2 + {\bf q} ) \,  f_j ( - {\bf l}/2 +{\bf q'} ) \, K_{ij} (l) \, .
\end{align*}
If one wishes to simplify the RG flow equations by resorting to a FFE, the expansion has to be truncated behind a few terms in order not to exceed available computational resources. This may be conveniently done in the following way.
\begin{enumerate}[i.)]
 \item Neglect the mixing between inequivalent IRs of $C_{4v}$ (or the respective point group 
for other lattice geometries). \label{enu:no-irmix}
 \item Only consider the most slowly varying form factor among equivalent IRs, i.e.\ neglect the admixture of higher harmonics. \label{enu:slow-fs}
\end{enumerate}
(In addition to these approximations, the $p$-, $d_{xy}$, and $g$-wave sectors have not been taken into account in Ref.~\onlinecite{husemann_2009}, as such form factors can be expected to play a minor role for the one-band Hubbard model at van-Hove filling.) If these approximations are adequate, 
the truncated FFE of the RG flow equations should in principle capture
 the important momentum dependences well. 
If, in contrast, the admixture of higher harmonics plays a role, a large number of bosonic channels might be needed.
At least for the one-band Hubbard model, the above approximations seem to be fine in a large region of the
parameter space.\cite{jutta} The question now is, whether
important orbital makeup effects are still captured within a viable truncation of an FFE.

Since the fermionic momenta are directly put on a grid in this work, we are in a position where we can easily keep track of mixing between inequivalent IRs. We expect this mixing to
 play a minor role, if the ordering vectors of leading and subleading instabilities are $ {\bf l} = 0 $ or $ (\pi,\pi) $ or very close. By diagonalization of the
coupling functions as matrices in $ q $ and $ q'$ with $ {\bf l}$ fixed to one of these ordering vectors, optimized form factors can then be attributed to the respective instabilities. These optimized
 form factors will turn out to be close to the most slowly varying ones for most parameters considered in this paper, but in some cases also higher harmonics play a role. A sensible truncation of the FFE then consists in only retaining  the terms corresponding to the most relevant eigenvalues.
Clearly, the optimized form factors are scale-dependent in an fRG flow. In principle, it should be possible to parametrize this scale dependence.\cite{giering-personal}
Let us note in passing that similar effects have already been discussed within a Bethe-Salpether equation approach.\cite{katanin-gap}

For incommensurate antiferromagnetism, however, the potentially non-zero mixing between inequivalent IRs of the point-group symmetry of the lattice may prohibit the calculation
 of an optimized form factor that is defined on the whole Brillouin zone. In such a case, a faithful truncation 
of the FFE would already contain too many terms to be numerically tractable. We will come back to the question of the applicability of a FFE when we discuss our numerical results in the next section.

\section{Numerical results} \label{sec:results}
 We use the parameters displayed in Tab.~\ref{tab:params} as a generic parameter set for which we run the RG flow in the conventional truncation, i.e.\ without the three-particle feedback. We then test the stability of our results against variation of some of these parameters and against three-particle terms.
 We further compare the Emery model to the corresponding one-band Hubbard model. All results presented in this paper are for zero temperature and all parameter sets considered are on the hole-doped side.

 Note that spectral functions are inaccessible in our approach, since we project to zero frequency and neglect the self-energy.
 But even if these approximations were relaxed within an instability analysis, the flow would be stopped at a finite scale, where contributions to the
 interaction start to diverge. Therefore, the spectral functions obtained from fRG and successive analytic
 continuation then would still not really comparable their DMFT counterparts in Refs.~\onlinecite{wang-tpp,Medici-emery-model}.
 \subsection{Nature of the leading instability} \label{sec:leading_inst}
  For the generic parameters and at van-Hove filling, we observe a flow to strong coupling at about $ \lambda_{\rm c} = 2.9 \cdot 10^{-3} \,  t_{pd} $, which roughly corresponds to $ 50 \, {\rm K} $ for $ t_{pd} = 1.3 \, {\rm eV} $. In order to determine the nature of this instability, we diagonalize the coupling functions $ \Phi_{\rm SC}$, $ \Phi_{\rm M}$ and $ \Phi_{\rm K} $
at the stopping scale. We attribute the largest of the eigenvalues of these three coupling functions to the leading instability, which is characterized by an optimized form factor given by the corresponding eigenvector.

For the parameter sets considered in the following, the most relevant eigenvalues in the pairing and the magnetic channels compete. Let us first describe our results for the generic parameter set of Tab.~\ref{tab:params}. 
The optimized form factors for this parameter set are depicted in Fig.~\ref{fig:genff}. 
In the pairing channel, contributions with total wavevector ${\bf l}=0$ dominate clearly. The optimized form factor corresponds to a $d_{x^2-y^2}$-wave, with peaks 
that are a little broader than for $ f_d = \cos (q_x) - \cos (q_y) $ (cf.\ Fig.~\ref{fig:genff}). We will comment on the admixture of higher harmonics to this optimized form factor further below. In the magnetic channel, the optimized form factor corresponds to a deformed $s$-wave with small admixtures of higher harmonics. For the generic parameter set we find slightly incommensurate magnetic ordering
vectors on the boundary of the Brillouin zone. Inequivalent IRs of $ C_{4v}$ may hence mix. In the case of our generic parameter set, we can indeed observe small admixtures of other IRs to the $s$-wave contributions of the optimized form factor. 
For example, a small $d_{x^2-y^2}$-wave admixture is clearly visible.

 Other types of instabilities such as
a Pomeranchuk instability\cite{halboth-metzner,pominst-wegner,neumayr-metzner,bulut-nematic,Husemann-pomer} and the formation of different types of loop currents\cite{varma,fischer,thomale} do not participate in the competition of the most relevant instabilities.
 For the former type of ordering, our results are in agreement with Ref.~\onlinecite{Husemann-pomer}, where the experimentally observed nematic tendency\cite{ando-nema-exp,daou-nema-exp,Hinkov-nema-exp2004,Hinkov-nema-exp2008} in
 cuprate materials appears to be of strong-coupling nature. Similarly, the absence of loop currents is not surprising in a weak-coupling scenario, since 
 the critical interaction strengths are found to be quite large in mean-field calculations.\cite{fischer}
 
At van-Hove filling, one can expect that it is possible to tune the Emery model to a Stoner-like ferromagnetic instability by raising the value of $ t_{pp} $. For the parameter sets considered in this work, however, we do not find ferromagnetism to prevail over other ordering tendencies. We hence conclude that a Stoner instability only occurs for oxygen-oxygen hopping parameters that are far away from the generic value of $t_{pp}$ in Tab.~\ref{tab:params}.

 At the stopping scale, we also consider a FFE of the full coupling function $V$ obtained within our new approach 
for comparison. We truncate this expansion behind the most slowly varying form factors. This is not to be confused with the results obtained from RG flow equations in an exchange parametrization, 
as we take mixing between different IRs of $ C_{4v} $ and of different harmonics within an IR into account in the integration of the flow equations. 
 The propagator $ D_d$ of $d_{x^2-y^2}$-wave Copper pairs rescaled by the square strength of the fermion-boson interaction then reads as 
\begin{align*}
 D_d (l) = \int d {\bf q} \, d {\bf q'} \, &f_d ( {l}/2 -{q} ) \,  f_d ( {l}/2 -{q'} ) \\ & \times \left.  V(q,l-q,q') \right|_{q_0=q_0'=l_0/2} \, , 
\end{align*}
where $ f_d (k) = \cos(k_x) - \cos (k_y) $. In the following, we will simply refer to $ D_d $ as a bosonic propagator despite the rescaling by an energy-squared factor. Likewise, the propagator of the magnetic $ s $-wave exchange-boson is obtained as
\begin{align*}
 M_s (l) = \int d {\bf q} \, d {\bf q'} \, &f_s ( {l}/2 +{q} ) \,  f_s (- {l}/2 +{q'} ) \\ & \times \left.  V(q,q',l+q) \right|_{q_0=-q_0'=-l_0/2} \, ,
\end{align*}
with $ f_s (k) = 1 $.
Note that this FFE is only viable if the basis of Bloch states is properly chosen such that $ V $ is invariant under all point-group transformations of the lattice.

In Fig.~\ref{fig:props-gen}, $D_d (l) $ and $M_s (l) $ are depicted for the generic parameter set in Tab.~\ref{tab:params} at the stopping scale.
 Both  $d_{x^2-y^2}$-wave Cooper-pair and magnetic $s$-wave
 propagators show peaks  with values close to the corresponding eigenvalues of the coupling functions of the respective channel. Since the $d$SC peak is quite sharp while the incommensurate peaks of the magnetic propagator have a broader width, $d_{x^2-y^2}$-wave superconductivity might prevail in a situation where the two most relevant instabilities are closely competing. 
 In any case, the system is in a regime of two competing, mutually reinforcing instabilities. 
 In this place, we would like to recapitulate that our stopping condition is quite weak. Therefore, if the magnetic and pairing channels are still of comparable strength at the stopping scale, these two channels are then indeed closely competing and this competition itself might have some physical content.

 In Ref.~\onlinecite{n-patch}, the parameter-space region of strong AFM-$d$SC competition was dubbed the \emph{saddle-point regime} and interpreted as an insulating spin-liquid phase.
It may also contain a region of homogeneous coexistence as described for the iron pnictides
 in Ref.~\onlinecite{schmalian}. It is likely that a large part of this regime has a non-vanishing superconducting gap. 
Unfortunately, order parameters are not directly accessible within the present instability analysis. In a recent fRG approach to the one-band Hubbard model using rebosonization techniques\cite{friederich} it has however been found that pairing is avoided inside the antiferromagnetic phase. This is not surprising, since at least parts of the Fermi surface are gapped away once spontaneous symmetry-breaking in one channel sets in, which hampers symmetry-breaking in the other channels.
A very recent purely fermionic study on the fRG flow of the repulsive single-band Hubbard model into the superconducting phase is in full agreement with this picture.\cite{eberlein-private} In that work, a non-vanishing pairing
gap is indeed found in a large subregion of the saddle-point regime.
 So a putative coexistence phase should be considerably smaller than the saddle-point regime.
 In principle, a two-order-parameter mean-field approach is viable below the stopping scale,\cite{rohe} but of course such a treatment is not free of bias.

As we shall find in the following, the character of the instability is quite robust against slight variations of the parameters. The stopping scale will turn out to be  sensitive to the chemical potential $ \mu $ and to the diagonal oxygen-oxygen hopping $ t_{pp} $ further below. This latter dependence is already rather smooth, if the oxygen-oxygen hopping
is not to far away from its generic value. So the behavior around the generic parameters suggests that the system is close to a first-order phase transition between AFM and $d$SC, as second-order transition would probably go along with a kink of the critical scale. 
 Since the RG stopping scale is an upper estimate for the critical scale, such a kink may however be hidden.
 For the one-band Hubbard model close to van-Hove filling, a similar behavior has been observed.\cite{n-patch,rohe,husemann_2009} 
\begin{figure*}
 \includegraphics[width=\linewidth]{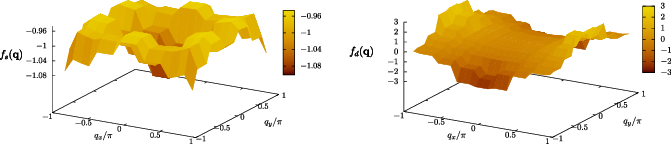}
 \caption{(Color online) Optimized from factors $ f_d$ and $ f_s $ as functions of the `fermionic' wavevector ${\bf q}$. These form factors are obtained as eigenvectors corresponding to the most relevant eigenvalues for the generic parameter set of Tab.~\ref{tab:params} at the stopping scale in the pairing and magnetic channels, respectively. Note that both optimized form factors are close to the most slowly varying basis functions of the respective irreducible representation of $ C_{4v} $. Due to the incommensurability of the ordering vector, $ f_s $ shows slight admixtures of other irreducible representations.}
 \label{fig:genff}
\end{figure*}
\begin{figure*}
 \includegraphics[width=\linewidth]{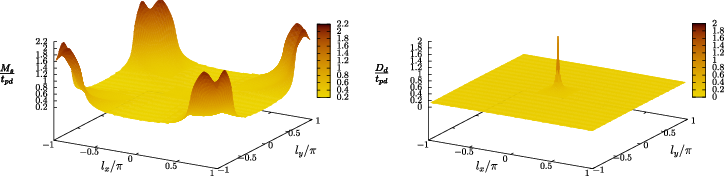}
	\caption{(Color online) Bosonic propagators at zero frequency corresponding to the most slowly varying basis functions for $s$-wave magnetism (left) and $d_{x^2-y^2} $-wave Cooper pairs (right) at the stopping scale for the parameters in Tab.~\ref{tab:params}, as functions of the bosonic wavevector ${\bf l}$.}
\label{fig:props-gen}
\end{figure*}
 \subsection{Doping dependence}
 The system also stays in the saddle-point regime when the doping level is slightly varied. In Fig.~\ref{fig:doping}, the stopping scale for the parameters in Tab.~\ref{tab:params} is plotted as a function of $ \mu $ ($+$-markers). In the following, we define the hole filling factor as
the number of holes per unit cell and spin orientation that have been doped into the originally half-filled conduction band 
\begin{equation} \label{eqn:def-nh}
 n_{\rm h} = \frac{1}{2} - \int \! d {\bf k} \, \Theta \left[ \mu - \epsilon ( {\bf k}) \right] \, ,
\end{equation}
where $ \epsilon ({\bf k}) $ and $ \Theta (x)$ denote the dispersion of the conduction band and the Heaviside 
step function, respectively.
 Note that $ n_{\rm h}$ may differ for the dispersions of the Emery model and the single-band models with effective parameters calculated according to 
Sec.~\ref{sec:eff-1bd}. In the following, all values of $n_{\rm h}$ will be for the Emery model.

The doping level varies between $ n_{\rm h} = 6.5 \cdot 10^{-2}$ and $ 14.3 \cdot 10^{-2} $ and at van-Hove filling the filling factor is $ 0.115 $. We observe that between half-filling ($n_{\rm h} = 0$) and van-Hove filling, the stopping scale only varies slightly. The AFM and $ d$SC instabilities are closely competing, except for $ n_{\rm h} < 0.09 $, where AFM clearly prevails. At hole doping beyond van-Hove filling, the stopping scale decreases rapidly and the tendency to $d_{x^2-y^2}$-wave pairing gets  a little stronger.
Qualitatively, this behavior is analogous to the hole-doped one-band Hubbard model.\cite{n-patch}
\begin{figure}
 \includegraphics[width=\linewidth]{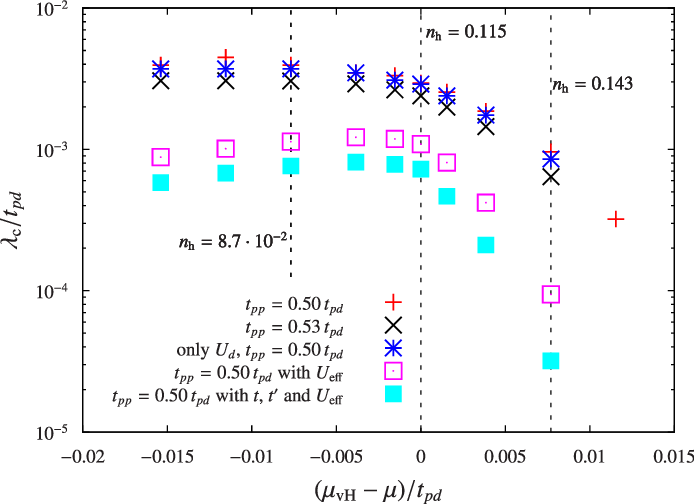}
	\caption{(Color online) Doping dependence of the stopping scale around van-Hove filling. The curve with +-markers is for the generic parameter set in Tab.~\ref{tab:params},
and the one with x-markers is for $ t_{pp} = 0.53 t_{pd} $ while all over parameters are left unchanged. The star-like markers depict the stopping scales for an interaction with a $U_d$-term only. The other two curves are for effective models: Empty squares represent data
for an effective  on-site Hubbard interaction with the full dispersion of the Emery model for the generic parameter set. Filled square markers are for the effective $t$-$t'$-$U_{\rm eff}$
Hubbard model.
 The corresponding hole-fillings $n_{\rm h}$ indicated by dotted vertical lines (see also Tab.~\ref{tab:fillings}) are for the Emery model (and not the single-band Hubbard model).}
\label{fig:doping}
\end{figure}
\begin{table}
 \begin{tabular}{l||cccccc}
 \hline
   $ \left(\mu_{\rm vH} - \mu \right) / t_{pd} \cdot 10^3 $ &  $ - 15.4 $ & $ -7.69 $ & $ -3.85 $ & $ 0 $ & $ 3.85 $ & $ 7.69 $ \\
\hline
   $  n_{\rm h} \cdot 10^2 $ & $ 6.5 $ & $ 8.7 $ & $ 9.9 $ & $ 11.5 $ & $ 13.1 $ & $ 14.3 $ \\
 \hline
 \end{tabular}
 \caption{Filling factors $ n_{\rm h} $ defined as in Eq.~(\ref{eqn:def-nh}) for the parameters set in Tab.~\ref{tab:params}.}
 \label{tab:fillings}
\end{table}
\begin{figure*}
 \includegraphics[width=\linewidth]{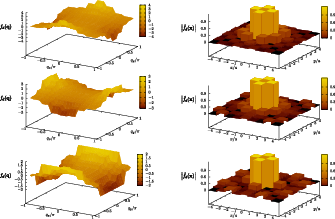}
	\caption{(Color online) Deformation of the optimized $d_{x^2-y^2}$-wave form factor $ f_d ({\bf q}) $ in the Cooper channel at the stopping scale (left column) and the absolute value of its Fourier components $ \hat{f}_d ({\bf x}) $ on the real lattice obtained from FFT (right column). The central row is for van-Hove filling and the upper and lower ones for $ n_{\rm h} = 8.7 \cdot 10^{-2} $ and for $ n_{\rm h} = 0.14 $, respectively.
All other parameters are chosen as in Tab.~\ref{tab:params}. The form factors have been normalized to $ \int \! d {\bf q} \, \left| f_d({\bf q}) \right|^2 =1 $.
  $f_d ({\bf q}) $ gets broadened at the saddle points with increasing hole doping.
  The (discrete) direct-space coordinate corresponds to the distance of two electrons forming a Cooper pair.
  For all three filling factors considered here, the main contribution to the pairing comes from electrons residing on neighboring sites, which corresponds to $ f_d = \cos (q_x)- \cos (q_y) $ in reciprocal space. Admixtures of higher harmonics are present in all three cases and get shifted away from the origin at hole doping beyond van-Hove filling.}
\label{fig:ffs}
\end{figure*}
\begin{figure}
 \includegraphics[width=\linewidth]{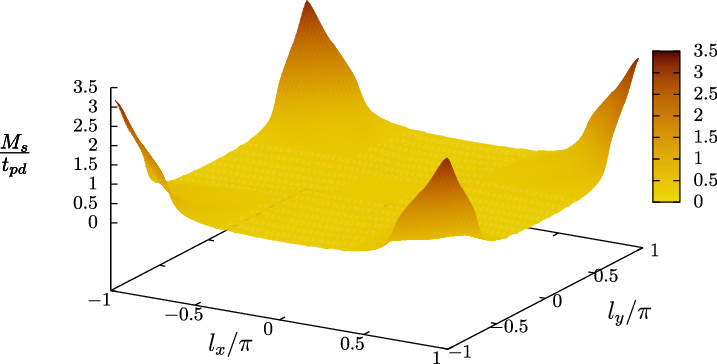}
	\caption{(Color online) Magnetic $s$-wave propagator as function of the bosonic wavevector ${\bf l}$ at the stopping scale for $ n_{\rm h} = 8.7 \cdot 10^{-2}$. All other parameters are chosen as in Tab.~\ref{tab:params}.}
\label{fig:incomm-0p005}
\end{figure}
\begin{figure}
 \includegraphics[width=\linewidth]{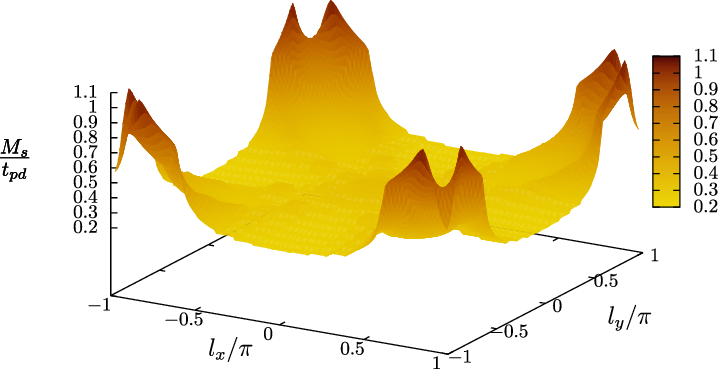}
	\caption{(Color online) Magnetic $s$-wave propagator as function of the bosonic wavevector ${\bf l}$ at the stopping scale for $ n_{\rm h} = 0.13$. All other parameters are chosen as in Tab.~\ref{tab:params}.}
\label{fig:incomm+0p002}
\end{figure}
\begin{figure}
 \includegraphics[width=\linewidth]{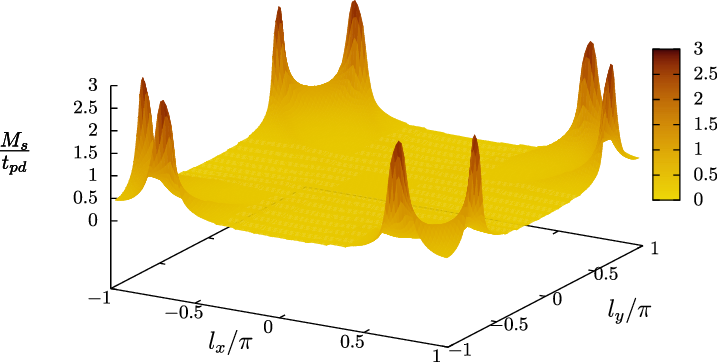}
	\caption{(Color online) Magnetic $s$-wave propagator as function of the bosonic wavevector ${\bf l}$ at the stopping scale for $ n_{\rm h} = 0.14$. All other parameters are chosen as in Tab.~\ref{tab:params}.}
\label{fig:incomm+0p010}
\end{figure}

The enhanced $d_{x^2-y^2}$-wave pairing at hole doping beyond van-Hove filling, however, comes with a broadening of the $d_{x^2-y^2}$-wave form factor at the van-Hove points as can be seen from Fig.~\ref{fig:ffs}.
Moreover, the magnetic $s$-wave propagator is of interest. In Figs.~\ref{fig:incomm-0p005}-\ref{fig:incomm+0p010}, we observe that for $ n_{\rm h} \leq 0.09$ the ordering vector $ {\bf l} = ( \pi,\pi) $ is commensurate. At about van-Hove filling, the commensurate peak of $ M_s $ is split into four peaks at incommensurate ordering
vectors. The deviation of the ordering vector from $ (\pi,\pi) $ then increases with hole doping, corresponding
to the shift of Fermi surface segments at high density of states. 

The highly incommensurate peaks at $ n_{\rm h} = 0.14 $ (i.e.\ at hole doping significantly beyond van-Hove filling) still allow the Kohn-Luttinger effect to generate an attractive $ d$-wave pairing component.
This however goes along with a deformation of the form factor $ f_d$ for $d$-wave pairing (see Fig.~\ref{fig:ffs}). Such a simultaneous occurrence of incommensurability in the magnetic channel and a deformation of the fermion-boson vertex in the Cooper channel has already been observed in the one-band $(t,t')$ Hubbard model\cite{jutta} and may be explained as follows.
Consider a singlet Copper pair with momenta $ ({\bf k},-{\bf k})$ scattered  to $ ({\bf k'},-{\bf k'})$ by the interaction in the Cooper channel which shall be mimicked by a one-loop particle-particle diagram with two spin-channel vertices. If these vertices have their peaks at transfer momentum $ {\bf l} = {\bf Q} = (\pi,\pi)$, the main contribution to the
Cooper channel comes from $ {\bf k'} = {\bf k} $. For incommensurate ordering vectors, the important contributions come from   $ {\bf k'} = {\bf k} $ as well as from $ {\bf k'} = {\bf k} + {\bf Q}_i + {\bf Q}_j $, where $ i$ and $j$ may correspond to all possible combinations of the ordering vectors. The dependence of fermion-boson vertex in the Cooper channel on the fermionic momenta is hence smeared out around $ (0,\pi ) $ and $ (\pi,0)$ resulting in a broadening of the form factor. A shoulder-like broadening of the peaks of $ M_s $ would give rise to the same effect in a 
similar way.

 While higher harmonics do not contribute at the so-called anti-nodal points $(0,\pi)$ and $(\pi,0)$, they may change the slope of the gap at the nodal points. 
 Such an occurrence of multiple energy scales for the gap has been observed in Raman spectroscopy\cite{leTacon-twogap} and angle-resolved photoemission spectroscopy\cite{ARPES-twogap}
 experiments. In contrast to our results for weak coupling, Ref.~\onlinecite{leTacon-twogap} suggests decreasing contributions of higher harmonics with hole doping.
 This may be due to the strong-coupling nature of real cuprate materials.
 
As the Fourier transform of the form factor in the pairing channel corresponds to the distribution of the distance between the electrons forming a Cooper pair, deviations from the $ \cos (q_x) - \cos (q_y) $-form may also be analyzed in real space (see lower row in Fig.~\ref{fig:ffs}). 
One should, however, be aware that an interpretation in real space requires some care, since
basis sets of Wannier functions may strongly differ in their localization properties (cf.\ Ref.~\onlinecite{ours-point-group}). 
The position argument in the real-space form factors then corresponds to the relative distance between the
 constituents of such a pair, i.e.\ the two electrons or holes involved.
 Note that this distance can be resolved up to only $n$ sites in all directions for $ 2n \times 2n $ fermionic patches.
 We have therefore studied the flow for some parameters with a resolution of $ 8 \times 8 $ fermionic and $ 24 \times 24 $ or $ 120 \times 120 $ bosonic patches away from and close to possible ordering vectors, respectively.
The results are displayed in Fig.~\ref{fig:ffs}. We observe that the most important contribution corresponds to a $ \cos (q_x) - \cos (q_y) $-form. But already at van-Hove filling, an admixture of higher harmonics is visible, which partly get shifted further away from the origin at hole doping beyond van-Hove filling. A thorough discussion of the minor contributions corresponding to Cooper pairing beyond nearest neighbors may require a resolution higher than $ 8 \times 8 $ fermionic patches.  

Before we analyze the impact of the $t_{pp}$-hopping parameter, a remark on the effects of the coupling between the different channels seems to be in order. These effects go far 
beyond the spin-fluctuation induced generation of an attractive pairing interaction. In particular, if the magnetic propagator was calculated within RPA, i.e.\ if the Cooper and forward scattering channel were neglected in the flow, the
stopping scale would be about one decade higher. Moreover, the magnetic propagator would be less sharply peaked. This behavior can be attributed to the feedback of the Cooper on the magnetic channel, which hampers antiferromagnetism before the $d_{x^2-y^2}$-wave pairing interaction gets attractive. On a qualitative level, this effect is already captured in the two-patch approximation.\cite{furukawa}
\subsection{Dependence on hopping-between the $p$-orbitals} \label{sec:tpp}
So far, we have only investigated the impact of doping away from van-Hove filling, but not the interplay between
$ \mu $ and $ t_{pp} $.
  If, in this spirit, the hopping between the $ p$-orbitals is now changed to $ t_{pp} = 0.53 \, t_{pd} $, the effective second-neighbor hopping in the conduction band gets stronger and the tendency to $d_{x^2-y^2}$-wave pairing should be 
enhanced. Indeed, the flow can now be attributed to the saddle point regime for all filling factors considered. The corresponding curve in Fig.~\ref{fig:doping} (x-markers), however, looks similar to the one for $ t_{pp} =  0.50 \, t_{pd} $ except for the insignificantly lower stopping scale. 
Moreover, at hole doping beyond van-Hove filling and  at $ t_{pp} =  0.53 \,  t_{pd} $, the tendency to $d$SC is only slightly enhanced compared to $ t_{pp} = 0.5 \, t_{pd} $.
It therefore seems that a considerable region of the parameter space has to be attributed to the saddle-point regime as for the one-band Hubbard model in Ref.~\onlinecite{n-patch}.
 \begin{figure}
 \includegraphics[width=\linewidth]{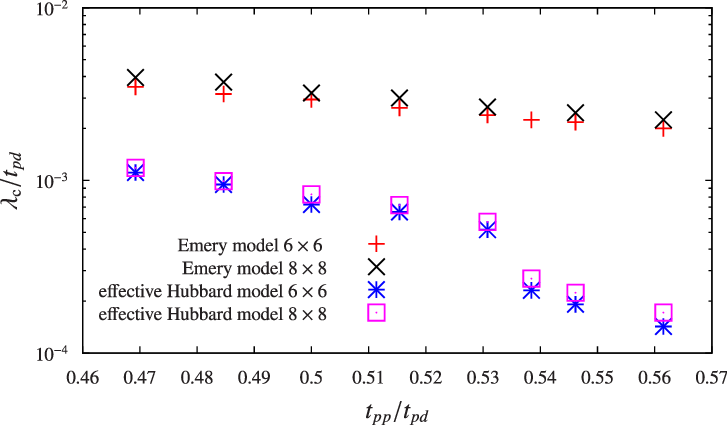}
\caption{(Color online) Variation of the stopping scale with $ t_{pp} $ at van-Hove filling. All other parameters are chosen according to Tab.~\ref{tab:params}. The $d_{x^2-y^2}$-wave pairing tendency increases with $  t_{pp} / t_{pd} $.} 
\label{fig:tpp}
\end{figure}
 
In Fig.~\ref{fig:tpp}, the dependence of the stopping scale on $ t_{pp} $ is depicted for van-Hove filling both for $ 6 \times 6 $ and for $ 8 \times 8 $ fermionic patches. The curves for the two resolutions
almost coincide, indicating that $ 6 \times 6 $ fermionic patches are sufficient. The stopping scale behaves as follows:

For the Emery model, we find a decrease of $ \lambda_{\rm c} $ with increasing  $ t_{pp} $ of less than one decade. Such a behavior is quite generic as, in the absence of orbital makeup, a more rounded Fermi surface depresses the stopping scale in other models.\cite{uebelacker} Once  $ t_{pp} $ exceeds $ 0.54 \, t_{pd} $, the flow to strong coupling bears rather the signatures of pure $d_{x^2-y^2}$-wave pairing than of the saddle-point regime. For the one-band Hubbard model with effective parameters, however, the situation is different: 
 The stopping scales are much lower and drop abruptly as soon as the system enters the pure $d$SC regime at about
$ t_{pp} = 0.54 \, t_{pd} $. This is apparently caused by an abrupt growth of the hybridization at the saddle points which reduces the effective interaction strength $ U_{\rm eff} $. 
Comparing the Emery model to its effective one-band Hubbard counterpart, we find that orbital-makeup effects in the Emery model partly counteract the decrease of $ \lambda_{\rm c} $ with a more rounded Fermi surface and that they prevent the stopping scale from dropping abruptly.

 In Ref.~\onlinecite{uebelacker}, a similar behavior as been found for multiband models only involving orbitals on the Copper atoms.
 Also the DCQMC results of Ref.~\onlinecite{Kent-emery-model} support this conjecture. Namely, the critical temperature is found to increase with the value of $t_{pp}$ in that work. At first glance, this seems to contradict our fRG results. However, the calculations in Ref.~\onlinecite{Kent-emery-model} have been performed
 at strong coupling, where the orbital makeup seems to overcompensate the effect of the more rounded Fermi surface observed at weak coupling.
 Of course, this argument is not fully stringent, since, at strong coupling, mechanisms may be at work that do not occur in a perturbative picture.
 \begin{figure}
 \includegraphics[width=\linewidth]{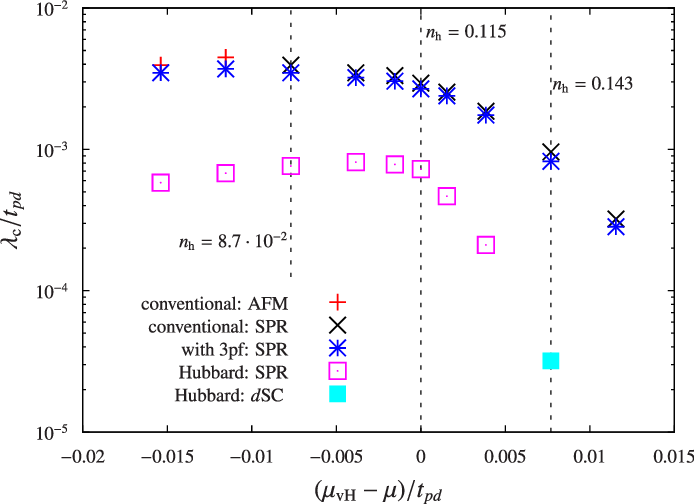}
\caption{(Color online) Phase diagrams for the generic parameter set of Tab.~\ref{tab:params} with and without three-particle feedback (3pf) and for the corresponding one-band Hubbard model. Large parts of these phase diagrams can be attributed to the saddle-point regime (SPR).
 As in Fig.~\ref{fig:doping}, the corresponding hole-fillings $n_{\rm h}$ indicated by dotted vertical lines (see also Tab.~\ref{tab:fillings}) are for the Emery model (and not the single-band Hubbard model).}
\label{fig:phase-diag}
\end{figure}

 \subsection{Three-band vs.\ single-band model} \label{sec:eff-short-range}
Let us now look at multiband effects more systematically.
So far, we have discussed results for the full Emery model as given by the parameters in Tab.~\ref{tab:params} and variations of $ t_{pp} $ and $ \mu $ in a conventional truncation of the flow equations, i.e.\ without a three-particle term. If the feedback of this term is taken into account within the
truncation proposed in Ref.~\onlinecite{3particle}, we do only find insignificant changes of the stopping scale.
Moreover, the tendency to $d$SC is slightly enhanced between half-filling and van-Hove filling, so that the system stays in the 
saddle point regime down to $ n_{\rm h} = 6.5 \cdot 10^{-2} $, as can be seen from the phase diagrams depicted in Fig.~\ref{fig:phase-diag}.
 Such minor modifications of the phase diagram seem quite surprising as those feedback terms had a great impact
on the phase diagram of a two-band model in a two-patch approximation. Such a behavior of the three-particle feedback 
may, however, may be an artifact of the two-patch approach pursued in that older study.
Namely, the two-patch approximation only allows for a
small number of strong-coupling fixed points which results in a mutual exclusion of the Cooper and Stoner instabilities. Moreover, in the Emery model, the large gap between the conduction band and the valence bands results in a flat momentum structure of the diagrams corresponding to three-particle feedback terms. 
 In a frequency-resolved study, the three-particle feedback may however play a more important role.

 Before we elaborate on other multiband effects, the role of the oxygen $p$-orbitals in the two-particle term shall be discussed. 
 First, we turn our attention to the impact of a $ U_{pp} $-term, which should be absent according to Ref.~\onlinecite{hybertsen}. We have varied $ U_{pp} $ from zero to $ 0.1 \, U_d $. The stopping scale then only changes insignificantly and the systems stays in the saddle-point regime.

In a second step, the role of the interaction terms involving legs on the $ p$-orbitals is now discussed, i.e.\ we compare the result for the full Emery model to what is obtained
if all interaction terms except $ U_d $ are ignored. Note that this interaction is still non-local in the band language, and that it is therefore different from an on-site Hubbard term. 
Data points for this level of approximation with dispersion parameters according to Tab.~\ref{tab:params} are represented by stars in Fig.~\ref{fig:doping}. These points almost coincide
with the ones for the full interaction ($+$-symbols). Therefore interaction terms involving the $p$-orbitals only slightly affect the stopping scale.

However, this does not mean that these terms do not have any impact at all. Actually, the tendency to AFM is enhanced if the $ p$-orbital interaction terms are neglected. In particular, between half filling and van-Hove filling,
the flow to strong coupling shows features of an AFM-instability rather than of the saddle-point regime. A form factor deformation above
van-Hove filling still occurs, but this effect is a little weaker without $ U_{p} $ and $ U_{pd} $.
One may now try to understand the enhancement of the AFM tendency by considering the two-patch couplings $ g_i $ in the ultraviolet. For the full interaction $ g_1 $, $ g_2 $ and $g_4 $ have the same value of about $ 0.269 \, t_{pd} $ while $ g_3 $ is lowered by a few percent to $ 0.264 \, t_{pd} $. For an interaction involving the $ d$-orbital only, in contrast, $ g_1 = g_4 = g_2 = 0.255 \, t_{pd} $  and $ g_3 $  is now enhanced to $ 0.257 \, t_{pd} $. So we have an overall decrease of the two-patch couplings and their detuning differs from the case of full interaction,
resulting in a relative increase of the $ d$-wave pairing and the AFM components of the interaction in the two-patch approximation, $ g_3 - g_4 $ and $ g_1 + g_3 $, respectively. Apparently, a lowering of the stopping scale induced by the former is compensated by the latter and, as a net effect, the AFM tendency gets stronger.

We now continue with the discussion of the other multiband effects that have been listed in Sec.~\ref{sec:eff-1bd}. 
First, we replace the two-particle interaction for the conduction band  by an on-site Hubbard interaction while we leave the dispersion unchanged. Its strength $ U_{\rm eff} $ is chosen to be the average of the two-patch couplings that correspond to the full interaction.
From the open squares in Fig.~\ref{fig:doping}, we infer that the stopping scale is significantly lowered in this approximation.
As the detuning of the two-patch couplings is rather small on this level of approximation, we conclude that phase-space regions away from the saddle points
play some (minor) role even at van-Hove filling. Since the hybridization of $p$- and $d$-orbitals is strongest at the saddle points, the contributions of those regions to the diagrams on the right-hand side of the flow equations are underestimated by an on-site interaction with strength $ U_{\rm eff} $. Therefore a flow to strong coupling occurs at lower scales. 

One may therefore wonder, whether another prescription for choosing $ U_{\rm eff} $ might give results that are more close to those for the three-band model.
Since the hybridization between $d$ and $p$-orbitals is strongest at the saddle points and since the $U_p$ and $U_{pd}$ interactions are of minor importance, the choice $ U_{\rm eff} = U_d $ seems appealing as well.
If one were to follow this alternative prescription,
the stopping scale of the effective model would overshoot the value for the original model by roughly a factor of two.
 This is not surprising, since the averaged interaction strength of the two models then already differs at the saddle points. Moreover, the prescription $ U_{\rm eff} = U_d $ can be regarded as the leading-order result of a gradient expansion around $ {\bf k} = (0,0) $. However, this is inconsistent with the expansion around the saddle points of the conduction band underlying the calculation of the effective hopping parameters $ t$ and $t'$. We therefore continue to use the prescription given in Sec.~\ref{sec:eff-1bd}, since the approach seems to be the most systematic one.

 The suppression of orbital makeup then generically lowers the stopping scale. Away from van-Hove filling (in particular at hole doping beyond), this lowering is more pronounced as the gradient expansion gets worse. 
Moreover, the system stays in the saddle-point regime for all filling factors considered. There are now several possible mechanisms giving rise to the enhanced tendency to $d$SC between half filling and van-Hove filling.
As the attractive $d_{x^2-y^2}$-wave pairing component is generated by fluctuations in the magnetic channel, this enhancement may simply be caused by the lowering of the stopping scale.
Furthermore, a detuning of the two-patch couplings that hampers the $d$SC instability is now absent. 
Let us note in passing that the broadening of the $d_{x^2-y^2}$-wave form factors around the saddle points is still restricted to hole doping beyond van-Hove filling.

Finally, we consider the $ t,t',U_{\rm eff} $ one-band Hubbard model (filled squares in Fig.~\ref{fig:doping}), i.e.\ we now approximate also the dispersion by the leading-order result
of a gradient expansion around the saddle points. Compared to the previous data set, the stopping scale is again lowered by almost a factor of two at van-Hove filling. Away from van-Hove filling, this depression of $ \lambda_{\rm c} $ again grows. At hole doping beyond van-Hove filling, this effect is more pronounced between half filling and van-Hove filling and at about
$ n_{\rm h} = 0.14 $ the system enters the pure $d$SC regime. 
Again, a form factor deformation occurs at van-Hove filling and larger hole doping. Compared to the original model, the stopping scale is a factor between five and ten too low. We therefore conclude that in the effective action for the conduction band long-range hopping terms play a significant role, since they enhance the stopping scale.

\section{Discussion and Outlook} \label{sec:conclusion}
 In this article, we have studied the hole-doped three-band Emery model at weak coupling by considering the RG flow of an effective interaction for the conduction band. This has been done within an improved channel-decomposed approach, which considerably improves the 
 momentum dependence of fermion-boson vertices of previous treatments of the Hubbard model within an exchange parametrization.
 In Sec.~\ref{sec:FFE}, we have classified the contributions that are present in our approach, but potentially neglected in Refs.~\onlinecite{husemann_2009,HGS-freq-dep,giering-new}, albeit without qualitative changes of the leading instabilities. As new results we present tentative phase diagrams for the Emery model, the wavevector structure of the effective interactions, in particular in the spin channel,  and the structure of the (deformed) $d$-wave pairing gap. We also provide a detailed comparison between the RG flow in the three-band case in one-band models with comparable Fermi surfaces.
 
In the Emery model for the parameters considered here, the leading instabilities are $d_{x^2-y^2}$-wave pairing and antiferromagnetism. The leading instability often can hardly be distinguished from the subleading one, as also found in the one-band case\cite{n-patch}. 
For most of the parameter sets considered,
the system appears to be in some intermediate region between the regimes with clear AFM and $d$SC instabilities, where both tendencies are comparably strong even very close to the critical scale. This so-called saddle point regime might contain a coexistence phase and should be superconducting to a large part. Signatures of this saddle-point behavior survive at doping levels a few percent away from van-Hove filling, again similarly to Ref.~\onlinecite{n-patch}. 
Between half filling and van-Hove filling, the stopping scale shows a plateau. 
Once the system is hole-doped beyond van-Hove filling, the stopping scale decreases rapidly. This goes along with an increasing incommensurability of the AFM ordering vector giving rise to a broadening of the superconducting gap at the saddle-points of the dispersion. In real space, this corresponds to Cooper pairing between electrons in
Wannier states that are centered around points in non-neighboring unit cells.
 Such effects could not be explored with previous $N$-patch fRG schemes, but are now available by virtue of the refined wavevector resolution of the channel-decomposed vertices. They are not intrinsically of
multiband nature, and can also be observed in the single-band model.
All these properties are quite robust against the omission of interaction terms involving the oxygen $p$-orbitals.

The effective second-nearest-neighbor hopping $t'$ of the conduction band can be tuned by changing the oxygen-oxygen hopping $ t_{pp} $ of the Emery model. An increased value of this parameter therefore leads to a more rounded Fermi surface at and close to van-Hove filling.
 We observe that the stopping scale then decreases only slowly with increasing $ t_{pp} $. 
For the corresponding single-band Hubbard model, we find a decrease of the stopping scale with $ t_{pp} $ as well, but now with a peculiar feature: Between $ t_{pp} = 0.53 \, t_{pd} $ and $  0.54 \, t_{pd} $, where the
copper-oxygen hybridization in the full Emery model rises abruptly at the van-Hove points, the stopping scale suddenly drops by a factor of two.
Therefore, orbital-makeup effects included in the three-band model apparently counteract and almost compensate the effect of a more rounded Fermi surface, similarly to what was found in Ref.~\onlinecite{uebelacker}. In other words, not only the Fermi surface shape and the density of states matters for the energy scale of those weak-coupling instabilities.

Close to van-Hove filling, it seems reasonable to determine the parameters of the effective one-band Hubbard Hamiltonian in Eq.~(\ref{eqn:hubbard}) from a gradient expansion around the saddle points of the dispersion, as the $ C_{4v} $ symmetry shared by both models in a suitable Bloch basis allows for such a procedure. In order to obtain a single parameter for the interaction, the 
effective on-site interaction is chosen as the average of the four two-patch couplings, which appear in leading order
 in a gradient expansion and would constitute the running couplings in the two-patch approximation. Except for a detuning of the two-patch couplings, the single-band and the three-band model coincide at the saddle points of the dispersion.

We find that the differences away from the saddle points are crucial in the sense that the Emery model formulated in the band language contains hopping and interaction terms connecting non-neighboring unit cells, which turn out to enhance the stopping scale.
Yet on a qualitative level, the one-band Hubbard model still has a similar phase diagram.
 On a more quantitative level, these longer-ranged terms play a role both in the dispersion, where 
hopping between non-neighboring unit cells has some impact, and in the interaction, which is decorated by orbital makeup. For the effective Hubbard model, the lowering of the stopping scale enhances the tendency to $d_{x^2-y^2}$-wave pairing, whereas the detuning of the two-patch couplings in the original model is of minor importance.
This is in analogy to recent VCA results\cite{Hanke-Kiesel-VCA} for the strong coupling case, where the effective one-band Hubbard model also accounts for the more universal features of the phase diagram.

Of course, a different prescription for calculating the parameters of the effective one-band Hubbard model could have been chosen in the present work. The one presented, however, appears to be rather systematic. It allows one to determine the origin of the deviations from the original model which should be related to ${\bf k}$-space regions away from the saddle points. More precisely, by using the interaction values a the saddle points for the effective one-band model, the interaction away from the Fermi surface and away from the saddle points is underestimated. This happens because the hybridization of  the $d_{x^2-y^2}$-orbital with the $p$-orbitals is largest at the saddle points. The relatively strong impact of the other regions on our results therefore 
suggests that those degrees of freedom are important in the sense that their presence does not only allow for a rich fixed point structure of the RG flow, but that they also directly influence the phase diagram.


A significantly large region of the parameter space constitutes the saddle point regime, where the $d$SC and AFM instabilities are closely competing and mutually reinforcing. The present instability analysis lacks directly accessible measurable quantities in potentially symmetry-broken phases which would facilitate the interpretation of such a behavior. In order to find a sharp phase boundary or a coexistence phase between AFM and $d$SC, it would be advantageous to enter the symmetry-broken phases within a purely fermionic approach. Furthermore, other, more exotic possibilities such as a Fermi surface truncation without long-range order\cite{furukawa,n-patch} are also inaccessible without proper flows for the self-energy.
Including these effects together is a formidable task beyond the current frontier in RG methods. We therefore have to refrain from explicitly breaking the U(1) and/or SU(2) symmetries, mentioning that fermionic flows have been continued into the  superfluid phase\cite{brokensy_SHML,gersch_superfluid,eberlein_param,eberlein-sssu,eberlein-private} and that
we have addressed parametrization questions and mean-field models for the AFM phase in a recent publication,\cite{ours-recent} while we are currently studying the AFM phase beyond mean-field models within a channel-decomposed approach. Moreover, as recently suggested by Giering and Salmhofer,\cite{giering-new} the parameters of an effective partially bosonized theory may be derived within a purely fermionic RG flow, whereas symmetry-broken phases are entered within a mixed flow. 

Such issues left aside, the strong-coupling nature of ab-initio parameter sets still prevents us from thoroughly discussing the applicability of our weak coupling approach (with interaction parameters one decade smaller than typical literature values) to real cuprate materials.
 We therefore refrain a discussion of some parameter trends observed in the strong-coupling literature (see, for examples, Refs.~\onlinecite{weber-charge-trans,Kent-emery-model}).
 However, the Emery model at weak coupling is shown to have the same leading instabilities than the one-band Hubbard model, complementing a strong-coupling VCA study by Kiesel \emph{et al.}\cite{Hanke-Kiesel-VCA} and a DMFT study by de' Medici \emph{et al.}\cite{Medici-emery-model}
Moreover, it has turned out to form a good, rather simple testbed for the new discretization scheme presented here. This scheme  can of course be carried over to other, more complicated multi-band systems where a weak-coupling approach is indeed realistic. Candidates are the iron superconductors or strontium ruthenates (for a recent SMFRG study, see Ref.~\onlinecite{Sruo}), both with three and more Fermi surfaces.


\section*{Acknowledgments}
 The authors thank A.~Eberlein, K.-U.~Giering, C.~Husemann, A.~A.~Katanin, T.~C.~Lang,  C.~Platt, M.~Salmhofer, M.~Scheb, R.~Thomale, S.~Uebelacker and Q.-H. Wang for fruitful discussions. We are also glad to acknowledge the friendly support provided by the
 high-performance computing team at RWTH. This work was supported by the DFG priority program SPP1458 on iron pnictide superconductors and by the DFG research unit FOR 723 on functional renormalization group methods. 

\begin{appendix}
\section{Point-group symmetries of the effective one-band action}
\label{sec:eff-pointgroup}
Although this might be hard to see at first glance, the Emery model possesses a hidden $ C_{4v} $-symmetry. In Ref.~\onlinecite{ours-point-group}, 
  two of us have discussed how point-group symmetries manifest themselves in a large class of multiband models.
 Also the Emery model belongs to that class. Namely,  there exists a three-dimensional representation
$ \{ M_{\hat{O}} \}_{\hat{O} \in C_{4v}} $ such that the Emery Hamiltonian (including the interaction terms) is invariant under the substitution 
\begin{equation*}
 \Psi_\sigma (\mathbf{k}) \overset{\hat{O}}{\longrightarrow} M_{\hat{O}} \, \Psi_\sigma (R_{\hat{O}} \mathbf{k}) \quad \forall \, \hat{O} \in C_{4v} \, .
\end{equation*}
 Here, the $ R_{\hat{O}} $ denote the two-dimensional rotation matrices, which form a faithful representation of $ C_{4v} $. 
 For a coordinate exchange $ \hat{I}' $ and a reflection $ \hat{I} $ with respect to the $ x $-axis,
 the respective three-dimensional representation matrices read as
\begin{equation*}
  M_{\hat{I}} ({\bf k}) = \left( \begin{array}{rrr}  1 & 0 & 0\\
                        0 & -1 & 0 \\
			0 & 0 & 1 \end{array} \right) \, ,  \quad
  M_{\hat{I}'} ({\bf k}) = \left( \begin{array}{rrr}  1 & 0 & 0\\
                        0 & 0 & 1 \\
			0 & 1 & 0 \end{array} \right) \, .
\end{equation*}
Since all point-group operations $ \hat{O} \in C_{4v} $ can be written as products of the identity operation, $ \hat{I} $ and $\hat{I}'$, the other $ M_{\hat{O}} $ follow from the group law $ M_{\hat{A}} = M_{\hat{C}} \,
M_{\hat{B}} $ for $ \hat{A} = \hat{B} \hat{C} $. 
Note that the representation matrices $M_{\hat{O}} $ decay into irreducible blocks --- a one-dimensional $A_1$ block for the Copper $d$-orbitals and a two-dimensional $E$ block for the oxygen $p$-orbitals.
 
In our fRG approach, the model is expressed in a band language, i.e.\ in terms of new pseudo-spinors
\begin{equation*}
 \bm{\chi}_\sigma (\mathbf{k}) = u (\mathbf{k}) \, \Psi_\sigma (\mathbf{k})
\end{equation*}
with a unitary, wavevector-dependent $ u (\mathbf{k}) $, where $ u(\mathbf{k}) \, \mathcal{H}_0 (\mathbf{k}) \,
 u^\dagger (\mathbf{k}) $ is diagonal. Of course, we would like to exploit the hidden $ C_{4v} $-symmetry of the model in the numerical integration of the RG flow equations.
 As shown in Ref.~\onlinecite{ours-point-group}, this can be accomplished as follows.
 There is some freedom in the transformation from orbitals to bands, corresponding to different choices of the phase of the eigenvectors in the
 lines of $ u (\mathbf{k}) $. These phases can be chosen individually for each wavevector $ \mathbf{k} $ and for each band. Since in the second-quantized language $\mathbf{k}$-dependent field operators create Bloch states from the vacuum state, this phase plays the role of a global phase of these Bloch states.
 In a so-called \emph{natural Bloch basis,} one has
 \begin{equation*}
  u ( R_{\hat{O}} \mathbf{k}) = u (\mathbf{k}) \, M_{\hat{O}}  \quad \forall \, \mathbf{k} \neq R_{\hat{O}} \mathbf{k} \, .
 \end{equation*}
Consequently, expressed in such a basis, the Hamiltonian is invariant under
\begin{equation*}
 \bm{\chi}_\sigma (\mathbf{k}) \to \bm{\chi}_\sigma (R_{\hat{O}} \mathbf{k}) \quad \forall \, \hat{O} \in C_{4v} 
\end{equation*}
and hence  the coupling function  $ U (k_1,k_2,k_3) $ in Eq.~(\ref{eqn:parameter}) shows a trivial point-group behavior, i.e.
\begin{equation*}
 U (R_{\hat{O}} \mathbf{k}_1,R_{\hat{O}} \mathbf{k}_2,R_{\hat{O}} \mathbf{k}_3) =
 U(\mathbf{k}_1,\mathbf{k}_2,\mathbf{k}_3)  \, .
\end{equation*}
This property is as well inherited by the coupling functions $ V(k_1,k_2,k_3) $  of the \emph{renormalized} interaction in Eq.~(\ref{eqn:cond-par}) and $ U_\beta (k_1,k_2,k_3) $ which enters the three-particle feedback in Eq.~(\ref{sec:3p-feedback}).

In this paper, we therefore work in a natural Bloch basis. According to Ref.~\onlinecite{ours-point-group},
discontinuities in the interaction then only occur in the terms involving the $p$-orbitals, which will turn out to be of minor importance.

\section{Flow equations}
In this Appendix, the flow equations for the two-particle interaction are given in a parametrization that exploits
U(1), SU(2) and translational symmetries. The quartic part $ \Gamma^{(4)} $ of the (scale-dependent) one-particle irreducible functional in the conduction band is parametrized in terms of a coupling function
$ V(k_1,k_2,k_3) $ according to
\begin{align} \notag
 \Gamma^{(4)}  [\bar{\chi}_-,\chi_-] &= - \frac{1}{4} \int \prod_i d (\sigma_i,k_i) \,  \bar{\chi}_{-,\sigma_1} (k_1) \, \bar{\chi}_{-,\sigma_2} (k_2) \\ \notag
 & \quad \times  \chi_{-,\sigma_3}  (k_3) \, \chi_{-,\sigma_4} (k_4) \,\delta (k_1 +k_2 -k_3 -k_4)\\ \notag
  & \quad \times  \left[ V (k_1,k_2,k_3) \, \delta_{\sigma_1, \sigma_4}
 \delta_{\sigma_2, \sigma_3} \right. \\ \label{eqn:cond-par}
 & \quad  \left. - V (k_2,k_1,k_3) \, \delta_{\sigma_1, \sigma_3} \delta_{\sigma_2, \sigma_4} \right] \, .
\end{align}

\subsection{Conventional truncation}
\label{sec:flow-eq}
In the conventional truncation, three-particle and higher vertices are neglected in the flow. (For a derivation see Refs.~\onlinecite{salm_hon_2001,Metzner-Rmp}.) The scale derivative of the coupling function $ V $ consists of three parts
\begin{align*}
 \partial_\lambda  V (k_1,k_2,k_3) & = {\cal T}_{\rm pp} (k_1,k_2,k_3)
 +  {\cal T}^{\rm cr}_{\rm ph}  (k_1,k_2,k_3) \\ & \quad + {\cal T}^{\rm d}_{\rm ph} (k_1,k_2,k_3) \, .
\end{align*}
The particle-particle contribution 
\begin{align*}
 {\cal T}_{\rm pp} = - & \int \! d p \, \left[ \partial_\lambda G(p) \, G(k_1+k_2-p) \right] \\
\times & V (k_1,k_2,p) \, V(k_1+k_2-p,p,k_3)
\end{align*}
and the crossed particle-particle part 
\begin{align*}
 {\cal T}^{\rm cr}_{\rm ph} = - & \int \! d p \, \left[ \partial_\lambda G(p) \, G(p+k_3-k_1) \right] \\
\times & V (k_1,p+k_3-k_1,k_3) \, V(p,k_2,p+k_3-k_1)
\end{align*}
can each be represented by one diagram with one loop containing (scale-dependent) propagators $ G (k) $ on the conduction band.
Since self-energy effects are neglected, we have
\begin{equation*}
 G(p) = R_\lambda (p_0) \, \left[i p_0 - \epsilon_1 ({\bf p}) \right]^{-1} \, ,
\end{equation*}
where $ \epsilon_1 $ denotes the energy of the conduction band and where the $ \Omega$-scheme regulator $ R_\lambda $ is given in Eq.~(\ref{eqn:Omega-reg}).
  Note that an RPA resummation in the Cooper or particle-hole channel is 
equivalent to an RG flow in which all terms except $ {\cal T}_{\rm pp} $ or $ {\cal T}^{\rm cr}_{\rm ph} $, respectively, are neglected. (The corresponding Bethe-Salpether equation is then equivalent to the flow equation for the two-particle vertex.\cite{husemann_2009}) Vertex corrections and screening, however, are accounted for by the direct particle-hole diagrams
\begin{align*}
 {\cal T}^{\rm d}_{\rm ph} =  & \int \! d p \, \left[ \partial_\lambda G(p) \, G(p+k_3-k_1) \right] \\
\times & \left[ 2 V (k_1,p+k_2-k_3,p) \, V(p,k_2,k_3) \right. \\
& -  V (k_1,p+k_2-k_3,k_1+k_2-k_3) \, V(p,k_2,k_3)  \\
& \left. -  V (k_1,p+k_2-k_3,p) \, V(p,k_2,p+k_2-k_3) \right] \, .
\end{align*}
If all external frequencies are projected to zero, the Matsubara sums in the loops can straightforwardly be evaluated analytically within the $ \Omega$-scheme using contour techniques.

The initial condition $ V=U $ is given by the quartic part 
\begin{align*}
 S^{(4)} [\bar{\psi},\psi] = & \, - \frac{1}{4} \sum_{\{X_i\}} F(X_1,X_2,X_3,X_4) \\
& \times \bar{\psi}(X_1) \, \bar{\psi}(X_2) \,  \psi(X_3) \, \psi (X_4)
\end{align*}
of the bare action for all three orbitals, where $ X_i = (k_i,\sigma_i,\alpha_i) $ is a short-hand notation for the 
quantum numbers of the fields and where the $ \alpha_i $ denote orbital indices. The coupling function $ F $ may be parametrized as in Eq.~(\ref{eqn:cond-par})
\begin{align*}
  F(X_1,X_2,X_3,X_4) & = 
  \left[ W^{\boldsymbol \alpha} (k_1,k_2,k_3) \, \delta_{\sigma_1, \sigma_4} \delta_{\sigma_2, \sigma_3}  \right. \\
& \quad - \left. W^{\tilde{\boldsymbol \alpha}} (k_2,k_1,k_3) \, \delta_{\sigma_1, \sigma_3} \delta_{\sigma_2, \sigma_4} \right] \, .
\end{align*}
with $ {\boldsymbol \alpha} = (\alpha_1,\alpha_2,\alpha_3,\alpha_4) $ and
$ \tilde{\boldsymbol \alpha} = (\alpha_2,\alpha_1,\alpha_3,\alpha_4) $.
We therefore have 
\begin{align*}
 U(k_1,k_2,k_3)  = \sum_{\boldsymbol \alpha} & W^{\boldsymbol \alpha} (k_1,k_2,k_3) \,\, u^\ast_{1,\alpha_1} (k_1) \,u^\ast_{1,\alpha_2} (k_2) \\
 & \times u_{\alpha_3,1} (k_3) \,u_{\alpha_4,1} (k_1+k_2-k_3) \, ,
\end{align*}
where $ u_{\alpha,\beta} (k) $ denote the matrix elements of the orbital-to-band transformation.
\subsection{Three-particle feedback}
\label{sec:3p-feedback}
 In the band picture, the two-particle term of the bare action also contains terms with three legs on the conduction band and one 
 on the valence band with band index $\beta$. In that case, the corresponding coupling function reads as
\begin{align*}
 U_{3,\beta}(k_1,k_2,k_3)  = \sum_{\boldsymbol \alpha} & W^{\boldsymbol \alpha} (k_1,k_2,k_3) \,\, u^\ast_{\beta,\alpha_1} (k_1) \,u^\ast_{1,\alpha_2} (k_2) \\
 & \times u_{\alpha_3,1} (k_3) \,u_{\alpha_4,1} (k_1+k_2-k_3) \, .
\end{align*}
Terms that have the valence-band index on another leg than the first can be reconstructed from this coupling function
by exploiting the Pauli principle and the particle-hole symmetry of the interaction.
When the valence bands are integrated out, a three-particle term is generated. In this Appendix, we will only give 
the resulting flow equation for the three-particle feedback and omit intermediate steps. (For a more detailed discussion of the case of one valence band, we refer to Ref.~\onlinecite{3particle}, as the generalization to a larger number of valence bands is straightforward.)

These three-particle feedback terms are now given separately as corrections $ {\cal R}_{\rm pp} $, $ {\cal R}^{\rm cr}_{\rm ph} $ and $ {\cal R}^{\rm d}_{\rm ph} $ to the particle-particle  and the crossed 
and direct particle-hole terms $ {\cal T}_{\rm pp} $, $ {\cal T}^{\rm cr}_{\rm ph} $ and $ {\cal T}^{\rm d}_{\rm ph} $, respectively.
\begin{widetext}
\begin{align*}
  {\cal R}_{\rm pp} (k_1,k_2,k_3)  = - \sum_{\beta =2,3} \int\! dq \, S(q) \, G_\beta(l-q) & \left[ U_{3,\beta} (l-q,q,k_1) \, U_{3,\beta} (l-q,q,l-k_3) \right. \\
 +  & \left.  U_{3,\beta} (l-q,q,k_2) \, U_{3,\beta} (l-q,q,k_3) \right]_{l=k_1+k_2} \, ,
\end{align*}
\begin{align*}
  {\cal R}^{\rm cr}_{\rm ph} (k_1,k_2,k_3)   = & - \sum_{\beta =2,3} \int\! dq  \, S(q) \, G_\beta(l+q)  \left. U_{3,\beta} (l+q,k_1,q) \, U_{3,\beta} (l+q,k_2-l,q) \right|_{l=k_3-k_1} \\
   & - \sum_{\beta =2,3} \int\! dq  \, S(q) \, G_\beta(l+q)  \left. U_{3,\beta} (l+q,k_2,q) \, U_{3,\beta} (l+q,k_3,q) \right|_{l=k_1-k_3} \, ,
\end{align*}  
\begin{align*} 
  {\cal R}^{\rm d}_{\rm ph} & (k_1,k_2,k_3) = - \sum_{\beta =2,3} \int\! dq \,  S(q) \, G_\beta(l+q)  \left[ -2 U_{3,\beta} (l+q,k_1,k_1+l) U_{3,\beta} (l+q,k_3,k_2) \right. \\
 & +  \left.  U_{3,\beta} (l+q,k_1,k_1+l) \, U_{3,\beta} (l+q,k_3,q) + U_{3,\beta} (l+q,k_1,q) \, U_{3,\beta} (l+q,k_3,k_2) \right]_{l=k_2-k_3} \\
 - & \sum_{\beta =2,3} \int\! dq \,  S(q) \, G_\beta(l+q)  \left[ -2 U_{3,\beta} (l+q,k_2,k_3) \, U_{3,\beta} (l+q,k_1-l,k_1) \right. \\ 
 & +  \left.  U_{3,\beta} (l+q,k_2,k_1-l) \, U_{3,\beta} (l+q,k_3,q) + U_{3,\beta} (l+q,k_2,q) \, U_{3,\beta} (l+q,k_1-l,k_1) \right]_{l=k_3-k_2} \, .
\end{align*} 
\end{widetext}
 In the band-mixed loops,
$ G_\beta (q) = \left[ i q_0 - \epsilon_\beta ({\bf q}) \right]^{-1} $
denotes the propagator of for valence-band hole with band index $ \beta = 2,3 $ and the single-scale propagator $ S(q) $  is given by the scale-derivate of the propagator $ G(q) $ for the conduction band.
Zero-frequency projection allows again for an analytical evaluation of the Matsubara sums. 

\section{Three group-theoretic corollaries on exchange parametrizations} \label{sec:group-th}
 In this appendix, we show that the statements made in Sec.~\ref{sec:FFE} directly follow from the two lemmas named after Schur. (In the literature, as in Ref.~\onlinecite{tinkham-groupth}, Chap.~3-2, the lemma we call Schur's first is often simply referred to as Schur's lemma. The lemma we call Schur's second appears as a nameless lemma right below the first in Ref.~\onlinecite{tinkham-groupth}.) 
 
 In this Appendix, we will not restrict ourselves to a specific point group but discuss an exchange parametrization 
 for some general point group ${\cal G}$. Therefore, the following statements are not limited to a particular lattice geometry or to the symmetric phase.
More precisely, we consider a coupling function $ \Phi (l,p,q) $ that depends strongly on $ l$ and weakly on $p$ and $q$.
This coupling function should be symmetric under the point group ${\cal G}$, i.e.\ we require $ \Phi (R_{\hat{O}}l,R_{\hat{O}}p,R_{\hat{O}}q) = \Phi (l,p,q) \,\, \forall \, \hat{O} \in {\cal G} $,
where the $ R_{\hat{O}} $ are rotation operators acting on the respective momenta.
 If the dependence of $\Phi$ on the weak frequencies $ p_0 $ and $ q_0$ is then dropped, it can then be expanded in form factors  $ f_i $ that transform according to irreducible representations (IRs) of the point-group $ {\cal G} $ of $\Phi $. This expansion reads as
\begin{equation*}
 \Phi (l,p,q) = \sum_{ij} f_i ({\bf p}-{\bf l}/2) \, f_j ({\bf q} \pm {\bf l}/2) \, P_{ij} (l) \, ,
\end{equation*}
 where the sign in the argument of $ f_j$ is $-$ in the particle-particle channel(s) and $+$ in the particle-hole
 channels. Since in this ansatz the dependence on the weak frequencies $p_0 $ and $ q_0$ is suppressed, the form factors can be chosen real.

 In a Hubbard-Stratonovich spirit, $ P_{ij} (l) $ may be interpreted as the propagator of an exchange boson. The $1+D$ momentum $l$ then corresponds to the center-of-mass motion of this composite particle, while $ p- l/2 $ and $ q \pm l/2 $ are the momenta of the relative motion of its constituents --- two electrons or two holes in the particle-particle channel(s) and one electron and one hole in the particle-hole channels. In this picture, the form factors then play the role of fermion-boson vertices with the indices $i$ and $j$ labeling different bosonic flavors.

If the form factors are chosen to be orthonormal, i.e.\ if 
\begin{equation*}
  \int \! d {\bf q} \, f_i ({\bf q}) \, f_j ({\bf q}) = \delta_{ij} \, ,
\end{equation*}
 the bosonic propagator can be straightforwardly obtained from a given $\Phi$ by applying the projection rule
\begin{align} \notag
 P_{ij} (l) & = \! \int \!\! d{\bf p} \, d{\bf q} \, \, f_i({\bf p}-\frac{\bf l}{2})\, f_j ({\bf q} \pm \frac{\bf l}{2}) \\ \label{eqn:proj-rule}
 & \quad \, \times \Phi \left( l,(l_0/2,{\bf p}),(\mp l_0/2,{\bf q}) \right) \, .
\end{align}
 
 We now continue with the proof of our non-mixing conjecture.
 \begin{corollary}[No mixing]
Let $ P (l)$ be a bosonic propagator that has been projected out of a ${\cal G} $-symmetric coupling function $\Phi$ 
 according to Eq.~(\ref{eqn:proj-rule}).
  Suppose that, for fixed $l$, $P(R_{\hat{O}} l) = P (l) \, \, \forall \, \hat{O} \in {\cal K} $, where $ {\cal K} $ is a subgroup of $ {\cal G} $. If the basis set of form factors is then organized in blocks corresponding to IRs of $ {\cal K}$, blocks that mix form factors of inequivalent IRs of ${\cal K} $ must vanish. \label{th:nomix}
\end{corollary}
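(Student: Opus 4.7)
The plan is to convert the ${\cal G}$-symmetry of $\Phi$ together with the hypothesis $P(R_{\hat{O}} l) = P(l)$ into an equivariance-type relation for the matrix $P(l)$ in the form-factor basis, and then to conclude via Schur's first lemma block by block. First I would write $P_{ij}(l) = P_{ij}(R_{\hat{O}} l)$ for $\hat{O} \in {\cal K}$ and expand the right-hand side via the projection rule~\eqref{eqn:proj-rule}. Performing the substitution ${\bf p} \to R_{\hat{O}} {\bf p}$, ${\bf q} \to R_{\hat{O}} {\bf q}$ inside the integral (the measure is rotation invariant) pulls $R_{\hat{O}} {\bf l}$ out of $\Phi$ and, using the ${\cal G}$-symmetry $\Phi(R_{\hat{O}} l, R_{\hat{O}} p, R_{\hat{O}} q) = \Phi(l,p,q)$, sends it back to ${\bf l}$; only the form factors then carry a rotated argument.

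Next I would feed in the transformation law of the form factors. Since the basis is assumed well-behaved in the sense of Appendix~\ref{sec:group-th}, each $f_i$ sits inside an IR $\alpha_i$ of ${\cal G}$ and transforms under $\hat{O}$ by a unitary (and, for real form factors, orthogonal) matrix $D^{(\alpha_i)}(\hat{O})$; its restriction to ${\cal K}$ remains unitary, although possibly reducible. Combining the two rotated form factors inside the integral yields an identity of the schematic form
\begin{equation*}
P_{ij}(l) = \sum_{k,m} D^{(\alpha_i)}_{ki}(\hat{O})\, D^{(\alpha_j)}_{mj}(\hat{O})\, P_{km}(l),
\end{equation*}
valid for every $\hat{O} \in {\cal K}$. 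Restricting $(i,j)$ to a block $P^{(\alpha\beta)}(l)$ connecting two IRs $\alpha,\beta$ of ${\cal K}$, and using orthogonality of $D^{(\alpha)}(\hat{O})$ to move one factor to the other side, this rearranges into the intertwining relation $D^{(\alpha)}(\hat{O})\, P^{(\alpha\beta)}(l) = P^{(\alpha\beta)}(l)\, D^{(\beta)}(\hat{O})$ for all $\hat{O} \in {\cal K}$. Schur's first lemma then forces $P^{(\alpha\beta)}(l) = 0$ whenever $\alpha$ and $\beta$ are inequivalent IRs of ${\cal K}$, which is the claim of the corollary.

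The main obstacle will be the group-theoretic bookkeeping in the middle step: distinguishing the particle-particle sign convention ${\bf q} - {\bf l}/2$ from the particle-hole one ${\bf q} + {\bf l}/2$ when performing the change of variables, tracking carefully whether $\hat{O}$ or $\hat{O}^{-1}$ appears in each representation matrix after the substitution, and isolating what the well-behavedness assumption actually contributes --- namely that each $f_i$ transforms irreducibly under ${\cal G}$ via a unitary $D^{(\alpha_i)}$, so that the restriction to ${\cal K}$ inherits unitarity even if the underlying ${\cal G}$-IR becomes reducible as a ${\cal K}$-representation. Once those conventions are pinned down, the application of Schur's first lemma is essentially a one-liner.
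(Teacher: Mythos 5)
Your proof is correct and follows essentially the same route as the paper's: rotate the bosonic momentum, substitute ${\bf p} \to R_{\hat{O}}{\bf p}$, ${\bf q} \to R_{\hat{O}}{\bf q}$ in the projection integral, use the ${\cal G}$-symmetry of $\Phi$ to land on the intertwining relation for the blocks $P^{\alpha\beta}(l)$, and conclude by Schur. One naming caveat: in the paper's convention (following Tinkham) the intertwiner statement you invoke --- that a matrix satisfying $M^\alpha_{\hat{O}}\,A = A\,M^\beta_{\hat{O}}$ for all $\hat{O}$ with $\alpha,\beta$ inequivalent must vanish --- is called Schur's \emph{second} lemma, with ``Schur's first'' reserved for the commutant statement used in Corollary~\ref{th:unit-mat}; the lemma you actually apply is the right one, and your added appeal to well-behavedness (unitarity) is harmless but not required by the corollary as stated.
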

 \begin{proof}
  Consider the projection rule~(\ref{eqn:proj-rule}) for $ P_{ij} (R_{\hat{O}} l) $ and matrices $ M^\alpha_{\hat{O}}$ of 
the $ \alpha$th IR of ${\cal K} $ which transform the form factors according to
\begin{equation*} 
 f_i (R_{\hat{O}} {\bf k} ) = \sum_{i'} \left( M^\alpha_{\hat{O}} \right)_{ii'} f_{i'} ({\bf k}) \, .
\end{equation*} 
  We further label the IR of $ {\cal K} $ that transforms $ f_i $ with $ \alpha$ and the one transforming $f_j$ with $ \beta$.
After substituting the integration variables ${\bf p} $ and ${\bf q} $ by $ R_{\hat{O}} {\bf p} $ and $ R_{\hat{O}} {\bf q} $, respectively, we exploit the point-group symmetry of $\Phi$. For the block $ P^{\alpha \beta} (l) $ relating the
 $\alpha$th and $\beta$th IR, we then find
 \begin{equation} \label{eqn:proptrans}
 P^{\alpha \beta} (R_{\hat{O}}  l) = \left( M^\alpha_{\hat{O}} \right)^\dagger P^{\alpha \beta} (l) \, M^\beta_{\hat{O}} \, .
 \end{equation}
 Since the left-hand side equals $ P^{\alpha \beta} (l) $ according to the premise, Schur's second lemma applies. 
Therefore, if the $ \alpha$th and $\beta$th IR of ${\cal K} $ cannot be related by an equivalence transformation, the block $ P^{\alpha \beta} $ must vanish. \qed
 \end{proof}
 In Sec.~\ref{sec:FFE}, we have used Corollary~\ref{th:nomix} for ${\cal K} = {\cal L}_{\bf l}$, ${\cal L}_{\bf l} $ being the little group of the bosonic momentum ${\bf l}$. If the premise $P(R_{\hat{O}} l) = P (l) \, \, \forall \, \hat{O} \in {\cal K} $ were satisfied as a consequence of some
approximation, this approximation neglects the mixing of inequivalent IRs of ${\cal K}$.

 For the next two corollaries, the following definition appears useful.
\begin{definition}
 We call a set of form factors \emph{well behaved} under a point-group ${\cal K} $ if its elements transform
according to either identical or inequivalent IRs of ${\cal K}$ consisting of unitary matrices.
\end{definition}
 The point-group behavior of the bosonic propagator is governed by the following law.

\begin{corollary}
 Consider again a bosonic propagator $P(l)$ that has been projected out of a ${\cal G} $-symmetric  coupling function $\Phi$ with form factors which are well behaved under ${\cal G}$. Blocks $P^{\alpha \beta} $ relating equivalent \emph{one-dimensional} IRs $\alpha $ and $\beta$ of ${\cal G}$ then are fully ${\cal G}$-symmetric, i.e.\
$ P^{\alpha \beta} (R_{\hat{O}}l) = P^{\alpha \beta} (l) \,\, \forall \, \hat{O} \in {\cal G}$ for arbitrary $l$. \label{th:1dtrans}
\end{corollary}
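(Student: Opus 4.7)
The plan is to reuse the transformation law derived in the proof of Corollary \ref{th:nomix} and then specialize to the one-dimensional case. First I would observe that the derivation of equation (\ref{eqn:proptrans}) in the preceding corollary nowhere uses the smaller subgroup $\mathcal{K}$ in an essential way: the only ingredients are the $\mathcal{G}$-symmetry of $\Phi$, the change of integration variables ${\bf p}\to R_{\hat{O}}{\bf p}$, ${\bf q}\to R_{\hat{O}}{\bf q}$ in (\ref{eqn:proj-rule}), and the transformation rule for the form factors under $\mathcal{G}$. Consequently, under the present hypotheses the identity
\begin{equation*}
 P^{\alpha\beta}(R_{\hat{O}} l) = \bigl(M^{\alpha}_{\hat{O}}\bigr)^{\!\dagger}\, P^{\alpha\beta}(l)\, M^{\beta}_{\hat{O}}
\end{equation*}
holds for every $\hat{O}\in\mathcal{G}$ and every $l$.

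Next I would exploit that $\alpha$ and $\beta$ are one-dimensional. Then $M^{\alpha}_{\hat{O}}$ and $M^{\beta}_{\hat{O}}$ are just complex numbers (i.e.\ $1\times 1$ matrices), so the similarity transformation that relates two equivalent one-dimensional IRs is trivial and forces $M^{\alpha}_{\hat{O}}=M^{\beta}_{\hat{O}}$ for all $\hat{O}\in\mathcal{G}$. Because the form factors are well behaved, the representation matrices are unitary; for a one-dimensional unitary representation this simply means $|M^{\alpha}_{\hat{O}}|=1$. Substituting back gives
\begin{equation*}
 P^{\alpha\beta}(R_{\hat{O}} l) = \bigl(M^{\alpha}_{\hat{O}}\bigr)^{\!*} M^{\alpha}_{\hat{O}}\, P^{\alpha\beta}(l) = |M^{\alpha}_{\hat{O}}|^{2}\, P^{\alpha\beta}(l) = P^{\alpha\beta}(l),
\end{equation*}
which is the claimed full $\mathcal{G}$-symmetry of the block.

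There is no real obstacle here beyond making two small points watertight: that the transformation law (\ref{eqn:proptrans}) really does extend from $\mathcal{K}$ to all of $\mathcal{G}$ under the current hypotheses (a one-line inspection of the previous proof), and that equivalent one-dimensional unitary IRs are literally equal, not merely conjugate by a nontrivial matrix. Both of these are standard facts from elementary representation theory, so the corollary follows as an essentially immediate consequence of Corollary \ref{th:nomix}, without requiring any new use of Schur's lemmata.
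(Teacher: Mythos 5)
Your proposal is correct and follows essentially the same route as the paper's proof: both extend Eq.~(\ref{eqn:proptrans}) from the subgroup $\mathcal{K}$ to the full group $\mathcal{G}$ (valid because the premise $P(R_{\hat{O}}l)=P(l)$ was never used in deriving it), identify $M^{\alpha}_{\hat{O}}=M^{\beta}_{\hat{O}}$ as phase factors for the one-dimensional IRs, and let unitarity cancel them. The only cosmetic difference is that you obtain $M^{\alpha}_{\hat{O}}=M^{\beta}_{\hat{O}}$ from the elementary fact that conjugation of scalars is trivial, whereas the paper reads it off directly from the definition of well-behaved form factors (equivalent IRs are taken identical); both justifications are sound.
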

\begin{proof}
 We observe that, also in the present case, Eq.~(\ref{eqn:proptrans}) holds, with $ \alpha $ and $\beta$ now labeling IRs of the \emph{full} point group ${\cal G} $. This equation is trivially fulfilled if a block vanishes according to Corollary~\ref{th:nomix}. For non-vanishing blocks, well-behaved form factors give rise to
$ M^\alpha_{\hat{O}} = M^\beta_{\hat{O}} $. If $ \alpha $ and $\beta$ then label one-dimensional IRs of $ {\cal G} $, these matrices are just phase factors, which cancel. \qed
\end{proof}

 If the mixing of inequivalent IRs of ${\cal G}$ is neglected, the remaining one-dimensional irreducible blocks of $P(l)$ are hence ${\cal G}$-symmetric.

 Finally, Schur's  first lemma directly gives rise to the following corollary.
 \begin{corollary}
 Suppose that, for fixed $l$, $P(R_{\hat{O}} l) = P (l) \, \, \forall \, \hat{O} \in {\cal K} $, where $P(l)$ is a bosonic propagator obtained from a $ {\cal G}$-symmetric coupling function and where $ {\cal K} $ is a subgroup of $ {\cal G} $. 
  For a well-behaved set of form factors under $ {\cal K}$, the non-vanishing irreducible blocks $ P^{\alpha \beta} (l) $ of $P(l)$ then are a multiple of a unit matrix,
 with $ \alpha $ and $\beta$ labeling IRs of ${\cal K}$. \label{th:unit-mat}
\end{corollary}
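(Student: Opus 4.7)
The plan is to reduce the statement to Schur's first lemma by re-using the transformation law for the blocks of $P(l)$ that was already derived in the proof of Corollary~\ref{th:nomix}. Concretely, I would start from Eq.~(\ref{eqn:proptrans}), which holds verbatim when $\mathcal{G}$ is replaced by any subgroup $\mathcal{K}$ under which the form factors are well behaved, because the derivation only used the point-group symmetry of $\Phi$ and the transformation law of the form factors (both of which remain valid under restriction to $\mathcal{K}$).

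Next I would combine Eq.~(\ref{eqn:proptrans}) with the premise $P(R_{\hat{O}}l)=P(l)$ for all $\hat{O}\in\mathcal{K}$ to obtain, for every such $\hat{O}$,
\begin{equation*}
P^{\alpha\beta}(l) = \bigl(M^\alpha_{\hat{O}}\bigr)^\dagger P^{\alpha\beta}(l)\, M^\beta_{\hat{O}}.
\end{equation*}
By Corollary~\ref{th:nomix}, a non-vanishing block $P^{\alpha\beta}(l)$ forces the IRs $\alpha$ and $\beta$ of $\mathcal{K}$ to be equivalent. Since the form factors are well behaved, the matrices of equivalent IRs are actually identical, so $M^\alpha_{\hat{O}} = M^\beta_{\hat{O}}$, and, by the same assumption, these matrices are unitary. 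Multiplying both sides on the left by $M^\alpha_{\hat{O}}$ then yields the commutation relation
\begin{equation*}
M^\alpha_{\hat{O}}\, P^{\alpha\beta}(l) = P^{\alpha\beta}(l)\, M^\alpha_{\hat{O}} \qquad \forall\,\hat{O}\in\mathcal{K}.
\end{equation*}

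At this point Schur's first lemma applies directly: a matrix that commutes with every element of an irreducible representation must be a multiple of the unit matrix. Hence $P^{\alpha\beta}(l)=c(l)\,\mathbf{1}$ for some scalar $c(l)$, which is the claim. I do not expect a serious obstacle here; the only point that requires care is the bookkeeping that ensures the replacement $M^\alpha_{\hat{O}}=M^\beta_{\hat{O}}$ is legitimate, which is precisely what the ``well behaved'' hypothesis is designed to guarantee, together with the unitarity needed to convert the sandwich relation into a commutator.
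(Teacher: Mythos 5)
Your proposal is correct and follows essentially the same route as the paper's own proof: it invokes Eq.~(\ref{eqn:proptrans}) restricted to $\mathcal{K}$, uses Corollary~\ref{th:nomix} plus well-behavedness to conclude $M^\alpha_{\hat{O}} = M^\beta_{\hat{O}}$ on non-vanishing blocks, converts the sandwich relation into a commutator via unitarity, and concludes with Schur's first lemma. The only difference is that you spell out the unitarity step explicitly, which the paper leaves implicit.
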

\begin{proof}
 Again, Eq.~(\ref{eqn:proptrans}) holds.
 For well-behaved form factors, the representation matrices $ M^\alpha_{\hat{O}} $ and $ M^\beta_{\hat{O}} $ with $ \hat{O} \in {\cal K} $ are equal to
one another, if $P^{\alpha \beta} $ does not vanish according to Corollary~\ref{th:nomix}. Since $ P (R_{\hat{O}} l) = 
P (l) \,\, \forall \hat{O} \in {\cal K}$, Eq.~(\ref{eqn:proptrans}) simply states that $ M^\alpha_{\hat{O}} $ and 
$ P^{\alpha \beta}$ commute $\forall \hat{O} \in {\cal K} $. According to Schur's first lemma, $ P^{\alpha \beta} $ then must be a multiple of a unit matrix. \qed
\end{proof}
\end{appendix}
\bibliography{biblio}

\begin{thebibliography}{81}%
\makeatletter
\providecommand \@ifxundefined [1]{%
 \@ifx{#1\undefined}
}%
\providecommand \@ifnum [1]{%
 \ifnum #1\expandafter \@firstoftwo
 \else \expandafter \@secondoftwo
 \fi
}%
\providecommand \@ifx [1]{%
 \ifx #1\expandafter \@firstoftwo
 \else \expandafter \@secondoftwo
 \fi
}%
\providecommand \natexlab [1]{#1}%
\providecommand \enquote  [1]{``#1''}%
\providecommand \bibnamefont  [1]{#1}%
\providecommand \bibfnamefont [1]{#1}%
\providecommand \citenamefont [1]{#1}%
\providecommand \href@noop [0]{\@secondoftwo}%
\providecommand \href [0]{\begingroup \@sanitize@url \@href}%
\providecommand \@href[1]{\@@startlink{#1}\@@href}%
\providecommand \@@href[1]{\endgroup#1\@@endlink}%
\providecommand \@sanitize@url [0]{\catcode `\\12\catcode `\$12\catcode
  `\&12\catcode `\#12\catcode `\^12\catcode `\_12\catcode `\%12\relax}%
\providecommand \@@startlink[1]{}%
\providecommand \@@endlink[0]{}%
\providecommand \url  [0]{\begingroup\@sanitize@url \@url }%
\providecommand \@url [1]{\endgroup\@href {#1}{\urlprefix }}%
\providecommand \urlprefix  [0]{URL }%
\providecommand \Eprint [0]{\href }%
\providecommand \doibase [0]{http://dx.doi.org/}%
\providecommand \selectlanguage [0]{\@gobble}%
\providecommand \bibinfo  [0]{\@secondoftwo}%
\providecommand \bibfield  [0]{\@secondoftwo}%
\providecommand \translation [1]{[#1]}%
\providecommand \BibitemOpen [0]{}%
\providecommand \bibitemStop [0]{}%
\providecommand \bibitemNoStop [0]{.\EOS\space}%
\providecommand \EOS [0]{\spacefactor3000\relax}%
\providecommand \BibitemShut  [1]{\csname bibitem#1\endcsname}%
\let\auto@bib@innerbib\@empty
\bibitem [{\citenamefont {Anderson}(1997)}]{RVB}%
  \BibitemOpen
  \bibfield  {author} {\bibinfo {author} {\bibfnamefont {P.~W.}\ \bibnamefont
  {Anderson}},\ }\href@noop {} {\emph {\bibinfo {title} {The Theory of
  Superconductivity in the High-Tc Cuprates}}}\ (\bibinfo  {publisher}
  {Princeton University Press, Princeton},\ \bibinfo {year} {1997})\BibitemShut
  {NoStop}%
\bibitem [{\citenamefont {Scalapino}(2012)}]{scalapino-review}%
  \BibitemOpen
  \bibfield  {author} {\bibinfo {author} {\bibfnamefont {D.~J.}\ \bibnamefont
  {Scalapino}},\ }\href {\doibase 10.1103/RevModPhys.84.1383} {\bibfield
  {journal} {\bibinfo  {journal} {Rev. Mod. Phys.}\ }\textbf {\bibinfo {volume}
  {84}},\ \bibinfo {pages} {1383} (\bibinfo {year} {2012})}\BibitemShut
  {NoStop}%
\bibitem [{\citenamefont {Schulz}(1987)}]{schulz}%
  \BibitemOpen
  \bibfield  {author} {\bibinfo {author} {\bibfnamefont {H.~J.}\ \bibnamefont
  {Schulz}},\ }\href {http://stacks.iop.org/0295-5075/4/i=5/a=016} {\bibfield
  {journal} {\bibinfo  {journal} {EPL (Europhysics Letters)}\ }\textbf
  {\bibinfo {volume} {4}},\ \bibinfo {pages} {609} (\bibinfo {year}
  {1987})}\BibitemShut {NoStop}%
\bibitem [{\citenamefont {Lederer}\ \emph {et~al.}(1987)\citenamefont
  {Lederer}, \citenamefont {Montambaux},\ and\ \citenamefont
  {Poilblanc}}]{lederer}%
  \BibitemOpen
  \bibfield  {author} {\bibinfo {author} {\bibfnamefont {P.}~\bibnamefont
  {Lederer}}, \bibinfo {author} {\bibfnamefont {G.}~\bibnamefont {Montambaux}},
  \ and\ \bibinfo {author} {\bibfnamefont {D.}~\bibnamefont {Poilblanc}},\
  }\href@noop {} {\bibfield  {journal} {\bibinfo  {journal} {J. Phys. (Paris)}\
  }\textbf {\bibinfo {volume} {48}},\ \bibinfo {pages} {1613} (\bibinfo {year}
  {1987})}\BibitemShut {NoStop}%
\bibitem [{\citenamefont {Furukawa}\ \emph {et~al.}(1998)\citenamefont
  {Furukawa}, \citenamefont {Rice},\ and\ \citenamefont
  {Salmhofer}}]{furukawa}%
  \BibitemOpen
  \bibfield  {author} {\bibinfo {author} {\bibfnamefont {N.}~\bibnamefont
  {Furukawa}}, \bibinfo {author} {\bibfnamefont {T.~M.}\ \bibnamefont {Rice}},
  \ and\ \bibinfo {author} {\bibfnamefont {M.}~\bibnamefont {Salmhofer}},\
  }\href {\doibase 10.1103/PhysRevLett.81.3195} {\bibfield  {journal} {\bibinfo
   {journal} {Phys. Rev. Lett.}\ }\textbf {\bibinfo {volume} {81}},\ \bibinfo
  {pages} {3195} (\bibinfo {year} {1998})}\BibitemShut {NoStop}%
\bibitem [{\citenamefont {Chubukov}\ \emph {et~al.}(2008)\citenamefont
  {Chubukov}, \citenamefont {Efremov},\ and\ \citenamefont
  {Eremin}}]{chubukov}%
  \BibitemOpen
  \bibfield  {author} {\bibinfo {author} {\bibfnamefont {A.~V.}\ \bibnamefont
  {Chubukov}}, \bibinfo {author} {\bibfnamefont {D.~V.}\ \bibnamefont
  {Efremov}}, \ and\ \bibinfo {author} {\bibfnamefont {I.}~\bibnamefont
  {Eremin}},\ }\href {\doibase 10.1103/PhysRevB.78.134512} {\bibfield
  {journal} {\bibinfo  {journal} {Phys. Rev. B}\ }\textbf {\bibinfo {volume}
  {78}},\ \bibinfo {pages} {134512} (\bibinfo {year} {2008})}\BibitemShut
  {NoStop}%
\bibitem [{\citenamefont {Metzner}\ \emph {et~al.}(2012)\citenamefont
  {Metzner}, \citenamefont {Salmhofer}, \citenamefont {Honerkamp},
  \citenamefont {Meden},\ and\ \citenamefont {Sch\"onhammer}}]{Metzner-Rmp}%
  \BibitemOpen
  \bibfield  {author} {\bibinfo {author} {\bibfnamefont {W.}~\bibnamefont
  {Metzner}}, \bibinfo {author} {\bibfnamefont {M.}~\bibnamefont {Salmhofer}},
  \bibinfo {author} {\bibfnamefont {C.}~\bibnamefont {Honerkamp}}, \bibinfo
  {author} {\bibfnamefont {V.}~\bibnamefont {Meden}}, \ and\ \bibinfo {author}
  {\bibfnamefont {K.}~\bibnamefont {Sch\"onhammer}},\ }\href {\doibase
  10.1103/RevModPhys.84.299} {\bibfield  {journal} {\bibinfo  {journal} {Rev.
  Mod. Phys.}\ }\textbf {\bibinfo {volume} {84}},\ \bibinfo {pages} {299}
  (\bibinfo {year} {2012})}\BibitemShut {NoStop}%
\bibitem [{\citenamefont {Halboth}\ and\ \citenamefont
  {Metzner}(2000)}]{halboth-metzner}%
  \BibitemOpen
  \bibfield  {author} {\bibinfo {author} {\bibfnamefont {C.~J.}\ \bibnamefont
  {Halboth}}\ and\ \bibinfo {author} {\bibfnamefont {W.}~\bibnamefont
  {Metzner}},\ }\href {\doibase 10.1103/PhysRevLett.85.5162} {\bibfield
  {journal} {\bibinfo  {journal} {Phys. Rev. Lett.}\ }\textbf {\bibinfo
  {volume} {85}},\ \bibinfo {pages} {5162} (\bibinfo {year}
  {2000})}\BibitemShut {NoStop}%
\bibitem [{\citenamefont {Honerkamp}\ \emph {et~al.}(2001)\citenamefont
  {Honerkamp}, \citenamefont {Salmhofer}, \citenamefont {Furukawa},\ and\
  \citenamefont {Rice}}]{n-patch}%
  \BibitemOpen
  \bibfield  {author} {\bibinfo {author} {\bibfnamefont {C.}~\bibnamefont
  {Honerkamp}}, \bibinfo {author} {\bibfnamefont {M.}~\bibnamefont
  {Salmhofer}}, \bibinfo {author} {\bibfnamefont {N.}~\bibnamefont {Furukawa}},
  \ and\ \bibinfo {author} {\bibfnamefont {T.~M.}\ \bibnamefont {Rice}},\
  }\href {\doibase 10.1103/PhysRevB.63.035109} {\bibfield  {journal} {\bibinfo
  {journal} {Phys. Rev. B}\ }\textbf {\bibinfo {volume} {63}},\ \bibinfo
  {pages} {035109} (\bibinfo {year} {2001})}\BibitemShut {NoStop}%
\bibitem [{\citenamefont {Reiss}\ \emph {et~al.}(2007)\citenamefont {Reiss},
  \citenamefont {Rohe},\ and\ \citenamefont {Metzner}}]{rohe}%
  \BibitemOpen
  \bibfield  {author} {\bibinfo {author} {\bibfnamefont {J.}~\bibnamefont
  {Reiss}}, \bibinfo {author} {\bibfnamefont {D.}~\bibnamefont {Rohe}}, \ and\
  \bibinfo {author} {\bibfnamefont {W.}~\bibnamefont {Metzner}},\ }\href
  {\doibase 10.1103/PhysRevB.75.075110} {\bibfield  {journal} {\bibinfo
  {journal} {Phys. Rev. B}\ }\textbf {\bibinfo {volume} {75}},\ \bibinfo
  {pages} {075110} (\bibinfo {year} {2007})}\BibitemShut {NoStop}%
\bibitem [{\citenamefont {Wang}\ \emph {et~al.}(2009)\citenamefont {Wang},
  \citenamefont {Zhai},\ and\ \citenamefont {Lee}}]{wang-2009}%
  \BibitemOpen
  \bibfield  {author} {\bibinfo {author} {\bibfnamefont {F.}~\bibnamefont
  {Wang}}, \bibinfo {author} {\bibfnamefont {H.}~\bibnamefont {Zhai}}, \ and\
  \bibinfo {author} {\bibfnamefont {D.-H.}\ \bibnamefont {Lee}},\ }\href
  {http://stacks.iop.org/0295-5075/85/i=3/a=37005} {\bibfield  {journal}
  {\bibinfo  {journal} {EPL (Europhysics Letters)}\ }\textbf {\bibinfo {volume}
  {85}},\ \bibinfo {pages} {37005} (\bibinfo {year} {2009})}\BibitemShut
  {NoStop}%
\bibitem [{\citenamefont {Wang}\ \emph {et~al.}(2010)\citenamefont {Wang},
  \citenamefont {Zhai},\ and\ \citenamefont {Lee}}]{wang-2010}%
  \BibitemOpen
  \bibfield  {author} {\bibinfo {author} {\bibfnamefont {F.}~\bibnamefont
  {Wang}}, \bibinfo {author} {\bibfnamefont {H.}~\bibnamefont {Zhai}}, \ and\
  \bibinfo {author} {\bibfnamefont {D.-H.}\ \bibnamefont {Lee}},\ }\href
  {\doibase 10.1103/PhysRevB.81.184512} {\bibfield  {journal} {\bibinfo
  {journal} {Phys. Rev. B}\ }\textbf {\bibinfo {volume} {81}},\ \bibinfo
  {pages} {184512} (\bibinfo {year} {2010})}\BibitemShut {NoStop}%
\bibitem [{\citenamefont {Platt}\ \emph {et~al.}(2009)\citenamefont {Platt},
  \citenamefont {Honerkamp},\ and\ \citenamefont {Hanke}}]{platt-first}%
  \BibitemOpen
  \bibfield  {author} {\bibinfo {author} {\bibfnamefont {C.}~\bibnamefont
  {Platt}}, \bibinfo {author} {\bibfnamefont {C.}~\bibnamefont {Honerkamp}}, \
  and\ \bibinfo {author} {\bibfnamefont {W.}~\bibnamefont {Hanke}},\ }\href
  {http://stacks.iop.org/1367-2630/11/i=5/a=055058} {\bibfield  {journal}
  {\bibinfo  {journal} {New Journal of Physics}\ }\textbf {\bibinfo {volume}
  {11}},\ \bibinfo {pages} {055058} (\bibinfo {year} {2009})}\BibitemShut
  {NoStop}%
\bibitem [{\citenamefont {Thomale}\ \emph {et~al.}(2011)\citenamefont
  {Thomale}, \citenamefont {Platt}, \citenamefont {Hanke},\ and\ \citenamefont
  {Bernevig}}]{pnictides-gap}%
  \BibitemOpen
  \bibfield  {author} {\bibinfo {author} {\bibfnamefont {R.}~\bibnamefont
  {Thomale}}, \bibinfo {author} {\bibfnamefont {C.}~\bibnamefont {Platt}},
  \bibinfo {author} {\bibfnamefont {W.}~\bibnamefont {Hanke}}, \ and\ \bibinfo
  {author} {\bibfnamefont {B.~A.}\ \bibnamefont {Bernevig}},\ }\href {\doibase
  10.1103/PhysRevLett.106.187003} {\bibfield  {journal} {\bibinfo  {journal}
  {Phys. Rev. Lett.}\ }\textbf {\bibinfo {volume} {106}},\ \bibinfo {pages}
  {187003} (\bibinfo {year} {2011})}\BibitemShut {NoStop}%
\bibitem [{\citenamefont {Platt}\ \emph {et~al.}(2012)\citenamefont {Platt},
  \citenamefont {Thomale}, \citenamefont {Honerkamp}, \citenamefont {Zhang},\
  and\ \citenamefont {Hanke}}]{sidplatt}%
  \BibitemOpen
  \bibfield  {author} {\bibinfo {author} {\bibfnamefont {C.}~\bibnamefont
  {Platt}}, \bibinfo {author} {\bibfnamefont {R.}~\bibnamefont {Thomale}},
  \bibinfo {author} {\bibfnamefont {C.}~\bibnamefont {Honerkamp}}, \bibinfo
  {author} {\bibfnamefont {S.-C.}\ \bibnamefont {Zhang}}, \ and\ \bibinfo
  {author} {\bibfnamefont {W.}~\bibnamefont {Hanke}},\ }\href {\doibase
  10.1103/PhysRevB.85.180502} {\bibfield  {journal} {\bibinfo  {journal} {Phys.
  Rev. B}\ }\textbf {\bibinfo {volume} {85}},\ \bibinfo {pages} {180502}
  (\bibinfo {year} {2012})}\BibitemShut {NoStop}%
\bibitem [{\citenamefont {Uebelacker}\ and\ \citenamefont
  {Honerkamp}(2012)}]{uebelacker}%
  \BibitemOpen
  \bibfield  {author} {\bibinfo {author} {\bibfnamefont {S.}~\bibnamefont
  {Uebelacker}}\ and\ \bibinfo {author} {\bibfnamefont {C.}~\bibnamefont
  {Honerkamp}},\ }\href {\doibase 10.1103/PhysRevB.85.155122} {\bibfield
  {journal} {\bibinfo  {journal} {Phys. Rev. B}\ }\textbf {\bibinfo {volume}
  {85}},\ \bibinfo {pages} {155122} (\bibinfo {year} {2012})}\BibitemShut
  {NoStop}%
\bibitem [{\citenamefont {Kiesel}\ \emph {et~al.}(2012)\citenamefont {Kiesel},
  \citenamefont {Platt}, \citenamefont {Hanke}, \citenamefont {Abanin},\ and\
  \citenamefont {Thomale}}]{kiesel1}%
  \BibitemOpen
  \bibfield  {author} {\bibinfo {author} {\bibfnamefont {M.~L.}\ \bibnamefont
  {Kiesel}}, \bibinfo {author} {\bibfnamefont {C.}~\bibnamefont {Platt}},
  \bibinfo {author} {\bibfnamefont {W.}~\bibnamefont {Hanke}}, \bibinfo
  {author} {\bibfnamefont {D.~A.}\ \bibnamefont {Abanin}}, \ and\ \bibinfo
  {author} {\bibfnamefont {R.}~\bibnamefont {Thomale}},\ }\href {\doibase
  10.1103/PhysRevB.86.020507} {\bibfield  {journal} {\bibinfo  {journal} {Phys.
  Rev. B}\ }\textbf {\bibinfo {volume} {86}},\ \bibinfo {pages} {020507}
  (\bibinfo {year} {2012})}\BibitemShut {NoStop}%
\bibitem [{\citenamefont {Kiesel}\ and\ \citenamefont
  {Thomale}(2012)}]{kiesel2}%
  \BibitemOpen
  \bibfield  {author} {\bibinfo {author} {\bibfnamefont {M.~L.}\ \bibnamefont
  {Kiesel}}\ and\ \bibinfo {author} {\bibfnamefont {R.}~\bibnamefont
  {Thomale}},\ }\href {\doibase 10.1103/PhysRevB.86.121105} {\bibfield
  {journal} {\bibinfo  {journal} {Phys. Rev. B}\ }\textbf {\bibinfo {volume}
  {86}},\ \bibinfo {pages} {121105} (\bibinfo {year} {2012})}\BibitemShut
  {NoStop}%
\bibitem [{\citenamefont {Varma}(2006)}]{varma}%
  \BibitemOpen
  \bibfield  {author} {\bibinfo {author} {\bibfnamefont {C.~M.}\ \bibnamefont
  {Varma}},\ }\href {\doibase 10.1103/PhysRevB.73.155113} {\bibfield  {journal}
  {\bibinfo  {journal} {Phys. Rev. B}\ }\textbf {\bibinfo {volume} {73}},\
  \bibinfo {pages} {155113} (\bibinfo {year} {2006})}\BibitemShut {NoStop}%
\bibitem [{\citenamefont {Fischer}\ and\ \citenamefont {Kim}(2011)}]{fischer}%
  \BibitemOpen
  \bibfield  {author} {\bibinfo {author} {\bibfnamefont {M.~H.}\ \bibnamefont
  {Fischer}}\ and\ \bibinfo {author} {\bibfnamefont {E.-A.}\ \bibnamefont
  {Kim}},\ }\href {\doibase 10.1103/PhysRevB.84.144502} {\bibfield  {journal}
  {\bibinfo  {journal} {Phys. Rev. B}\ }\textbf {\bibinfo {volume} {84}},\
  \bibinfo {pages} {144502} (\bibinfo {year} {2011})}\BibitemShut {NoStop}%
\bibitem [{\citenamefont {Emery}(1987)}]{Emery}%
  \BibitemOpen
  \bibfield  {author} {\bibinfo {author} {\bibfnamefont {V.~J.}\ \bibnamefont
  {Emery}},\ }\href {\doibase 10.1103/PhysRevLett.58.2794} {\bibfield
  {journal} {\bibinfo  {journal} {Phys. Rev. Lett.}\ }\textbf {\bibinfo
  {volume} {58}},\ \bibinfo {pages} {2794} (\bibinfo {year}
  {1987})}\BibitemShut {NoStop}%
\bibitem [{\citenamefont {Andersen}\ \emph {et~al.}(1995)\citenamefont
  {Andersen}, \citenamefont {Liechtenstein}, \citenamefont {Jepsen},\ and\
  \citenamefont {Paulsen}}]{andersen-YBCO}%
  \BibitemOpen
  \bibfield  {author} {\bibinfo {author} {\bibfnamefont {O.}~\bibnamefont
  {Andersen}}, \bibinfo {author} {\bibfnamefont {A.}~\bibnamefont
  {Liechtenstein}}, \bibinfo {author} {\bibfnamefont {O.}~\bibnamefont
  {Jepsen}}, \ and\ \bibinfo {author} {\bibfnamefont {F.}~\bibnamefont
  {Paulsen}},\ }\href {\doibase 10.1016/0022-3697(95)00269-3} {\bibfield
  {journal} {\bibinfo  {journal} {Journal of Physics and Chemistry of Solids}\
  }\textbf {\bibinfo {volume} {56}},\ \bibinfo {pages} {1573 } (\bibinfo {year}
  {1995})}\BibitemShut {NoStop}%
\bibitem [{\citenamefont {Maier}\ \emph {et~al.}(2009)\citenamefont {Maier},
  \citenamefont {Graser}, \citenamefont {Scalapino},\ and\ \citenamefont
  {Hirschfeld}}]{tmaier}%
  \BibitemOpen
  \bibfield  {author} {\bibinfo {author} {\bibfnamefont {T.~A.}\ \bibnamefont
  {Maier}}, \bibinfo {author} {\bibfnamefont {S.}~\bibnamefont {Graser}},
  \bibinfo {author} {\bibfnamefont {D.~J.}\ \bibnamefont {Scalapino}}, \ and\
  \bibinfo {author} {\bibfnamefont {P.~J.}\ \bibnamefont {Hirschfeld}},\ }\href
  {\doibase 10.1103/PhysRevB.79.224510} {\bibfield  {journal} {\bibinfo
  {journal} {Phys. Rev. B}\ }\textbf {\bibinfo {volume} {79}},\ \bibinfo
  {pages} {224510} (\bibinfo {year} {2009})}\BibitemShut {NoStop}%
\bibitem [{\citenamefont {Kotliar}\ \emph {et~al.}(2006)\citenamefont
  {Kotliar}, \citenamefont {Savrasov}, \citenamefont {Haule}, \citenamefont
  {Oudovenko}, \citenamefont {Parcollet},\ and\ \citenamefont
  {Marianetti}}]{Kotliar-review-DMFT}%
  \BibitemOpen
  \bibfield  {author} {\bibinfo {author} {\bibfnamefont {G.}~\bibnamefont
  {Kotliar}}, \bibinfo {author} {\bibfnamefont {S.~Y.}\ \bibnamefont
  {Savrasov}}, \bibinfo {author} {\bibfnamefont {K.}~\bibnamefont {Haule}},
  \bibinfo {author} {\bibfnamefont {V.~S.}\ \bibnamefont {Oudovenko}}, \bibinfo
  {author} {\bibfnamefont {O.}~\bibnamefont {Parcollet}}, \ and\ \bibinfo
  {author} {\bibfnamefont {C.~A.}\ \bibnamefont {Marianetti}},\ }\href
  {\doibase 10.1103/RevModPhys.78.865} {\bibfield  {journal} {\bibinfo
  {journal} {Rev. Mod. Phys.}\ }\textbf {\bibinfo {volume} {78}},\ \bibinfo
  {pages} {865} (\bibinfo {year} {2006})}\BibitemShut {NoStop}%
\bibitem [{\citenamefont {Vollhardt}\ \emph {et~al.}(2012)\citenamefont
  {Vollhardt}, \citenamefont {Byczuk},\ and\ \citenamefont
  {Kollar}}]{Vollhardt-review-DMFT}%
  \BibitemOpen
  \bibfield  {author} {\bibinfo {author} {\bibfnamefont {D.}~\bibnamefont
  {Vollhardt}}, \bibinfo {author} {\bibfnamefont {K.}~\bibnamefont {Byczuk}}, \
  and\ \bibinfo {author} {\bibfnamefont {M.}~\bibnamefont {Kollar}},\ }in\
  \href {\doibase 10.1007/978-3-642-21831-6_7} {\emph {\bibinfo {booktitle}
  {Strongly Correlated Systems}}},\ \bibinfo {series} {Springer Series in
  Solid-State Sciences}, Vol.\ \bibinfo {volume} {171},\ \bibinfo {editor}
  {edited by\ \bibinfo {editor} {\bibfnamefont {A.}~\bibnamefont {Avella}}\
  and\ \bibinfo {editor} {\bibfnamefont {F.}~\bibnamefont {Mancini}}}\
  (\bibinfo  {publisher} {Springer Berlin Heidelberg},\ \bibinfo {year}
  {2012})\ pp.\ \bibinfo {pages} {203--236}\BibitemShut {NoStop}%
\bibitem [{\citenamefont {Weber}\ \emph {et~al.}(2008)\citenamefont {Weber},
  \citenamefont {Haule},\ and\ \citenamefont {Kotliar}}]{weber-2008}%
  \BibitemOpen
  \bibfield  {author} {\bibinfo {author} {\bibfnamefont {C.}~\bibnamefont
  {Weber}}, \bibinfo {author} {\bibfnamefont {K.}~\bibnamefont {Haule}}, \ and\
  \bibinfo {author} {\bibfnamefont {G.}~\bibnamefont {Kotliar}},\ }\href
  {\doibase 10.1103/PhysRevB.78.134519} {\bibfield  {journal} {\bibinfo
  {journal} {Phys. Rev. B}\ }\textbf {\bibinfo {volume} {78}},\ \bibinfo
  {pages} {134519} (\bibinfo {year} {2008})}\BibitemShut {NoStop}%
\bibitem [{\citenamefont {de' Medici}\ \emph {et~al.}(2009)\citenamefont {de'
  Medici}, \citenamefont {Wang}, \citenamefont {Capone},\ and\ \citenamefont
  {Millis}}]{Medici-emery-model}%
  \BibitemOpen
  \bibfield  {author} {\bibinfo {author} {\bibfnamefont {L.}~\bibnamefont {de'
  Medici}}, \bibinfo {author} {\bibfnamefont {X.}~\bibnamefont {Wang}},
  \bibinfo {author} {\bibfnamefont {M.}~\bibnamefont {Capone}}, \ and\ \bibinfo
  {author} {\bibfnamefont {A.~J.}\ \bibnamefont {Millis}},\ }\href {\doibase
  10.1103/PhysRevB.80.054501} {\bibfield  {journal} {\bibinfo  {journal} {Phys.
  Rev. B}\ }\textbf {\bibinfo {volume} {80}},\ \bibinfo {pages} {054501}
  (\bibinfo {year} {2009})}\BibitemShut {NoStop}%
\bibitem [{\citenamefont {Weber}\ \emph
  {et~al.}(2010{\natexlab{a}})\citenamefont {Weber}, \citenamefont {Haule},\
  and\ \citenamefont {Kotliar}}]{weber-2010a}%
  \BibitemOpen
  \bibfield  {author} {\bibinfo {author} {\bibfnamefont {C.}~\bibnamefont
  {Weber}}, \bibinfo {author} {\bibfnamefont {K.}~\bibnamefont {Haule}}, \ and\
  \bibinfo {author} {\bibfnamefont {G.}~\bibnamefont {Kotliar}},\ }\href
  {\doibase 10.1038/nphys1706} {\bibfield  {journal} {\bibinfo  {journal} {Nat
  Phys}\ }\textbf {\bibinfo {volume} {6}},\ \bibinfo {pages} {574} (\bibinfo
  {year} {2010}{\natexlab{a}})}\BibitemShut {NoStop}%
\bibitem [{\citenamefont {Weber}\ \emph
  {et~al.}(2010{\natexlab{b}})\citenamefont {Weber}, \citenamefont {Haule},\
  and\ \citenamefont {Kotliar}}]{weber-2010b}%
  \BibitemOpen
  \bibfield  {author} {\bibinfo {author} {\bibfnamefont {C.}~\bibnamefont
  {Weber}}, \bibinfo {author} {\bibfnamefont {K.}~\bibnamefont {Haule}}, \ and\
  \bibinfo {author} {\bibfnamefont {G.}~\bibnamefont {Kotliar}},\ }\href
  {\doibase 10.1103/PhysRevB.82.125107} {\bibfield  {journal} {\bibinfo
  {journal} {Phys. Rev. B}\ }\textbf {\bibinfo {volume} {82}},\ \bibinfo
  {pages} {125107} (\bibinfo {year} {2010}{\natexlab{b}})}\BibitemShut
  {NoStop}%
\bibitem [{\citenamefont {Wang}\ \emph {et~al.}(2011)\citenamefont {Wang},
  \citenamefont {de' Medici},\ and\ \citenamefont {Millis}}]{wang-tpp}%
  \BibitemOpen
  \bibfield  {author} {\bibinfo {author} {\bibfnamefont {X.}~\bibnamefont
  {Wang}}, \bibinfo {author} {\bibfnamefont {L.}~\bibnamefont {de' Medici}}, \
  and\ \bibinfo {author} {\bibfnamefont {A.~J.}\ \bibnamefont {Millis}},\
  }\href {\doibase 10.1103/PhysRevB.83.094501} {\bibfield  {journal} {\bibinfo
  {journal} {Phys. Rev. B}\ }\textbf {\bibinfo {volume} {83}},\ \bibinfo
  {pages} {094501} (\bibinfo {year} {2011})}\BibitemShut {NoStop}%
\bibitem [{\citenamefont {Weber}\ \emph {et~al.}(2012)\citenamefont {Weber},
  \citenamefont {Yee}, \citenamefont {Haule},\ and\ \citenamefont
  {Kotliar}}]{weber-charge-trans}%
  \BibitemOpen
  \bibfield  {author} {\bibinfo {author} {\bibfnamefont {C.}~\bibnamefont
  {Weber}}, \bibinfo {author} {\bibfnamefont {C.}~\bibnamefont {Yee}}, \bibinfo
  {author} {\bibfnamefont {K.}~\bibnamefont {Haule}}, \ and\ \bibinfo {author}
  {\bibfnamefont {G.}~\bibnamefont {Kotliar}},\ }\href {\doibase
  10.1209/0295-5075/100/37001} {\bibfield  {journal} {\bibinfo  {journal} {EPL
  (Europhysics Letters)}\ }\textbf {\bibinfo {volume} {100}},\ \bibinfo {pages}
  {37001} (\bibinfo {year} {2012})}\BibitemShut {NoStop}%
\bibitem [{\citenamefont {Hettler}\ \emph {et~al.}(1998)\citenamefont
  {Hettler}, \citenamefont {Tahvildar-Zadeh}, \citenamefont {Jarrell},
  \citenamefont {Pruschke},\ and\ \citenamefont {Krishnamurthy}}]{Hettler-DCA}%
  \BibitemOpen
  \bibfield  {author} {\bibinfo {author} {\bibfnamefont {M.~H.}\ \bibnamefont
  {Hettler}}, \bibinfo {author} {\bibfnamefont {A.~N.}\ \bibnamefont
  {Tahvildar-Zadeh}}, \bibinfo {author} {\bibfnamefont {M.}~\bibnamefont
  {Jarrell}}, \bibinfo {author} {\bibfnamefont {T.}~\bibnamefont {Pruschke}}, \
  and\ \bibinfo {author} {\bibfnamefont {H.~R.}\ \bibnamefont
  {Krishnamurthy}},\ }\href {\doibase 10.1103/PhysRevB.58.R7475} {\bibfield
  {journal} {\bibinfo  {journal} {Phys. Rev. B}\ }\textbf {\bibinfo {volume}
  {58}},\ \bibinfo {pages} {R7475} (\bibinfo {year} {1998})}\BibitemShut
  {NoStop}%
\bibitem [{\citenamefont {Kent}\ \emph {et~al.}(2008)\citenamefont {Kent},
  \citenamefont {Saha-Dasgupta}, \citenamefont {Jepsen}, \citenamefont
  {Andersen}, \citenamefont {Macridin}, \citenamefont {Maier}, \citenamefont
  {Jarrell},\ and\ \citenamefont {Schulthess}}]{Kent-emery-model}%
  \BibitemOpen
  \bibfield  {author} {\bibinfo {author} {\bibfnamefont {P.~R.~C.}\
  \bibnamefont {Kent}}, \bibinfo {author} {\bibfnamefont {T.}~\bibnamefont
  {Saha-Dasgupta}}, \bibinfo {author} {\bibfnamefont {O.}~\bibnamefont
  {Jepsen}}, \bibinfo {author} {\bibfnamefont {O.~K.}\ \bibnamefont
  {Andersen}}, \bibinfo {author} {\bibfnamefont {A.}~\bibnamefont {Macridin}},
  \bibinfo {author} {\bibfnamefont {T.~A.}\ \bibnamefont {Maier}}, \bibinfo
  {author} {\bibfnamefont {M.}~\bibnamefont {Jarrell}}, \ and\ \bibinfo
  {author} {\bibfnamefont {T.~C.}\ \bibnamefont {Schulthess}},\ }\href
  {\doibase 10.1103/PhysRevB.78.035132} {\bibfield  {journal} {\bibinfo
  {journal} {Phys. Rev. B}\ }\textbf {\bibinfo {volume} {78}},\ \bibinfo
  {pages} {035132} (\bibinfo {year} {2008})}\BibitemShut {NoStop}%
\bibitem [{\citenamefont {Potthoff}(2003)}]{Potthof-VCA}%
  \BibitemOpen
  \bibfield  {author} {\bibinfo {author} {\bibfnamefont {M.}~\bibnamefont
  {Potthoff}},\ }\href {\doibase 10.1140/epjb/e2003-00121-8} {\bibfield
  {journal} {\bibinfo  {journal} {The European Physical Journal B}\ }\textbf
  {\bibinfo {volume} {32}},\ \bibinfo {pages} {429} (\bibinfo {year}
  {2003})}\BibitemShut {NoStop}%
\bibitem [{\citenamefont {Dahnken}\ \emph {et~al.}(2004)\citenamefont
  {Dahnken}, \citenamefont {Aichhorn}, \citenamefont {Hanke}, \citenamefont
  {Arrigoni},\ and\ \citenamefont {Potthoff}}]{wuerzburg-graz-VCA}%
  \BibitemOpen
  \bibfield  {author} {\bibinfo {author} {\bibfnamefont {C.}~\bibnamefont
  {Dahnken}}, \bibinfo {author} {\bibfnamefont {M.}~\bibnamefont {Aichhorn}},
  \bibinfo {author} {\bibfnamefont {W.}~\bibnamefont {Hanke}}, \bibinfo
  {author} {\bibfnamefont {E.}~\bibnamefont {Arrigoni}}, \ and\ \bibinfo
  {author} {\bibfnamefont {M.}~\bibnamefont {Potthoff}},\ }\href {\doibase
  10.1103/PhysRevB.70.245110} {\bibfield  {journal} {\bibinfo  {journal} {Phys.
  Rev. B}\ }\textbf {\bibinfo {volume} {70}},\ \bibinfo {pages} {245110}
  (\bibinfo {year} {2004})}\BibitemShut {NoStop}%
\bibitem [{\citenamefont {Aichhorn}\ \emph {et~al.}(2007)\citenamefont
  {Aichhorn}, \citenamefont {Arrigoni}, \citenamefont {Huang},\ and\
  \citenamefont {Hanke}}]{VCA-twogap}%
  \BibitemOpen
  \bibfield  {author} {\bibinfo {author} {\bibfnamefont {M.}~\bibnamefont
  {Aichhorn}}, \bibinfo {author} {\bibfnamefont {E.}~\bibnamefont {Arrigoni}},
  \bibinfo {author} {\bibfnamefont {Z.~B.}\ \bibnamefont {Huang}}, \ and\
  \bibinfo {author} {\bibfnamefont {W.}~\bibnamefont {Hanke}},\ }\href
  {\doibase 10.1103/PhysRevLett.99.257002} {\bibfield  {journal} {\bibinfo
  {journal} {Phys. Rev. Lett.}\ }\textbf {\bibinfo {volume} {99}},\ \bibinfo
  {pages} {257002} (\bibinfo {year} {2007})}\BibitemShut {NoStop}%
\bibitem [{\citenamefont {Hanke}\ \emph {et~al.}(2010)\citenamefont {Hanke},
  \citenamefont {Kiesel}, \citenamefont {Aichhorn}, \citenamefont {Brehm},\
  and\ \citenamefont {Arrigoni}}]{Hanke-Kiesel-VCA}%
  \BibitemOpen
  \bibfield  {author} {\bibinfo {author} {\bibfnamefont {W.}~\bibnamefont
  {Hanke}}, \bibinfo {author} {\bibfnamefont {M.}~\bibnamefont {Kiesel}},
  \bibinfo {author} {\bibfnamefont {M.}~\bibnamefont {Aichhorn}}, \bibinfo
  {author} {\bibfnamefont {S.}~\bibnamefont {Brehm}}, \ and\ \bibinfo {author}
  {\bibfnamefont {E.}~\bibnamefont {Arrigoni}},\ }\href {\doibase
  10.1140/epjst/e2010-01294-y} {\bibfield  {journal} {\bibinfo  {journal} {The
  European Physical Journal Special Topics}\ }\textbf {\bibinfo {volume}
  {188}},\ \bibinfo {pages} {15} (\bibinfo {year} {2010})}\BibitemShut
  {NoStop}%
\bibitem [{\citenamefont {Karrasch}\ \emph {et~al.}(2008)\citenamefont
  {Karrasch}, \citenamefont {Hedden}, \citenamefont {Peters}, \citenamefont
  {Pruschke}, \citenamefont {Sch\"onhammer},\ and\ \citenamefont
  {Meden}}]{karrasch}%
  \BibitemOpen
  \bibfield  {author} {\bibinfo {author} {\bibfnamefont {C.}~\bibnamefont
  {Karrasch}}, \bibinfo {author} {\bibfnamefont {R.}~\bibnamefont {Hedden}},
  \bibinfo {author} {\bibfnamefont {R.}~\bibnamefont {Peters}}, \bibinfo
  {author} {\bibfnamefont {T.}~\bibnamefont {Pruschke}}, \bibinfo {author}
  {\bibfnamefont {K.}~\bibnamefont {Sch\"onhammer}}, \ and\ \bibinfo {author}
  {\bibfnamefont {V.}~\bibnamefont {Meden}},\ }\href
  {http://stacks.iop.org/0953-8984/20/i=34/a=345205} {\bibfield  {journal}
  {\bibinfo  {journal} {Journal of Physics: Condensed Matter}\ }\textbf
  {\bibinfo {volume} {20}},\ \bibinfo {pages} {345205} (\bibinfo {year}
  {2008})}\BibitemShut {NoStop}%
\bibitem [{\citenamefont {Husemann}\ and\ \citenamefont
  {Salmhofer}(2009)}]{husemann_2009}%
  \BibitemOpen
  \bibfield  {author} {\bibinfo {author} {\bibfnamefont {C.}~\bibnamefont
  {Husemann}}\ and\ \bibinfo {author} {\bibfnamefont {M.}~\bibnamefont
  {Salmhofer}},\ }\href {\doibase 10.1103/PhysRevB.79.195125} {\bibfield
  {journal} {\bibinfo  {journal} {Phys. Rev. B}\ }\textbf {\bibinfo {volume}
  {79}},\ \bibinfo {pages} {195125} (\bibinfo {year} {2009})}\BibitemShut
  {NoStop}%
\bibitem [{\citenamefont {Husemann}\ \emph {et~al.}(2012)\citenamefont
  {Husemann}, \citenamefont {Giering},\ and\ \citenamefont
  {Salmhofer}}]{HGS-freq-dep}%
  \BibitemOpen
  \bibfield  {author} {\bibinfo {author} {\bibfnamefont {C.}~\bibnamefont
  {Husemann}}, \bibinfo {author} {\bibfnamefont {K.-U.}\ \bibnamefont
  {Giering}}, \ and\ \bibinfo {author} {\bibfnamefont {M.}~\bibnamefont
  {Salmhofer}},\ }\href {\doibase 10.1103/PhysRevB.85.075121} {\bibfield
  {journal} {\bibinfo  {journal} {Phys. Rev. B}\ }\textbf {\bibinfo {volume}
  {85}},\ \bibinfo {pages} {075121} (\bibinfo {year} {2012})}\BibitemShut
  {NoStop}%
\bibitem [{\citenamefont {Giering}\ and\ \citenamefont
  {Salmhofer}(2012)}]{giering-new}%
  \BibitemOpen
  \bibfield  {author} {\bibinfo {author} {\bibfnamefont {K.-U.}\ \bibnamefont
  {Giering}}\ and\ \bibinfo {author} {\bibfnamefont {M.}~\bibnamefont
  {Salmhofer}},\ }\href {\doibase 10.1103/PhysRevB.86.245122} {\bibfield
  {journal} {\bibinfo  {journal} {Phys. Rev. B}\ }\textbf {\bibinfo {volume}
  {86}},\ \bibinfo {pages} {245122} (\bibinfo {year} {2012})}\BibitemShut
  {NoStop}%
\bibitem [{\citenamefont {Ortloff}\ \emph {et~al.}()\citenamefont {Ortloff}
  \emph {et~al.}}]{jutta}%
  \BibitemOpen
  \bibfield  {author} {\bibinfo {author} {\bibfnamefont {J.}~\bibnamefont
  {Ortloff}} \emph {et~al.},\ }\href@noop {} {}\bibinfo {note} {In
  preparation}\BibitemShut {NoStop}%
\bibitem [{\citenamefont {Wang}\ \emph {et~al.}(2012)\citenamefont {Wang},
  \citenamefont {Xiang}, \citenamefont {Wang}, \citenamefont {Wang},
  \citenamefont {Yang},\ and\ \citenamefont {Lee}}]{ch-dec-graphene}%
  \BibitemOpen
  \bibfield  {author} {\bibinfo {author} {\bibfnamefont {W.-S.}\ \bibnamefont
  {Wang}}, \bibinfo {author} {\bibfnamefont {Y.-Y.}\ \bibnamefont {Xiang}},
  \bibinfo {author} {\bibfnamefont {Q.-H.}\ \bibnamefont {Wang}}, \bibinfo
  {author} {\bibfnamefont {F.}~\bibnamefont {Wang}}, \bibinfo {author}
  {\bibfnamefont {F.}~\bibnamefont {Yang}}, \ and\ \bibinfo {author}
  {\bibfnamefont {D.-H.}\ \bibnamefont {Lee}},\ }\href {\doibase
  10.1103/PhysRevB.85.035414} {\bibfield  {journal} {\bibinfo  {journal} {Phys.
  Rev. B}\ }\textbf {\bibinfo {volume} {85}},\ \bibinfo {pages} {035414}
  (\bibinfo {year} {2012})}\BibitemShut {NoStop}%
\bibitem [{\citenamefont {Wang}\ \emph
  {et~al.}(2013{\natexlab{a}})\citenamefont {Wang}, \citenamefont {Li},
  \citenamefont {Xiang},\ and\ \citenamefont {Wang}}]{ch-dec-kagome}%
  \BibitemOpen
  \bibfield  {author} {\bibinfo {author} {\bibfnamefont {W.-S.}\ \bibnamefont
  {Wang}}, \bibinfo {author} {\bibfnamefont {Z.-Z.}\ \bibnamefont {Li}},
  \bibinfo {author} {\bibfnamefont {Y.-Y.}\ \bibnamefont {Xiang}}, \ and\
  \bibinfo {author} {\bibfnamefont {Q.-H.}\ \bibnamefont {Wang}},\ }\href
  {\doibase 10.1103/PhysRevB.87.115135} {\bibfield  {journal} {\bibinfo
  {journal} {Phys. Rev. B}\ }\textbf {\bibinfo {volume} {87}},\ \bibinfo
  {pages} {115135} (\bibinfo {year} {2013}{\natexlab{a}})}\BibitemShut
  {NoStop}%
\bibitem [{\citenamefont {Hybertsen}\ \emph {et~al.}(1989)\citenamefont
  {Hybertsen}, \citenamefont {Schl\"uter},\ and\ \citenamefont
  {Christensen}}]{hybertsen}%
  \BibitemOpen
  \bibfield  {author} {\bibinfo {author} {\bibfnamefont {M.~S.}\ \bibnamefont
  {Hybertsen}}, \bibinfo {author} {\bibfnamefont {M.}~\bibnamefont
  {Schl\"uter}}, \ and\ \bibinfo {author} {\bibfnamefont {N.~E.}\ \bibnamefont
  {Christensen}},\ }\href {\doibase 10.1103/PhysRevB.39.9028} {\bibfield
  {journal} {\bibinfo  {journal} {Phys. Rev. B}\ }\textbf {\bibinfo {volume}
  {39}},\ \bibinfo {pages} {9028} (\bibinfo {year} {1989})}\BibitemShut
  {NoStop}%
\bibitem [{\citenamefont {Maier}\ \emph {et~al.}(2013)\citenamefont {Maier},
  \citenamefont {Honerkamp},\ and\ \citenamefont {Wang}}]{ours-point-group}%
  \BibitemOpen
  \bibfield  {author} {\bibinfo {author} {\bibfnamefont {S.~A.}\ \bibnamefont
  {Maier}}, \bibinfo {author} {\bibfnamefont {C.}~\bibnamefont {Honerkamp}}, \
  and\ \bibinfo {author} {\bibfnamefont {Q.-H.}\ \bibnamefont {Wang}},\ }\href
  {\doibase 10.3390/sym5040313} {\bibfield  {journal} {\bibinfo  {journal}
  {Symmetry}\ }\textbf {\bibinfo {volume} {5}},\ \bibinfo {pages} {313}
  (\bibinfo {year} {2013})}\BibitemShut {NoStop}%
\bibitem [{\citenamefont {Thomale}\ \emph {et~al.}(2009)\citenamefont
  {Thomale}, \citenamefont {Platt}, \citenamefont {Hu}, \citenamefont
  {Honerkamp},\ and\ \citenamefont {Bernevig}}]{Thomale-Platt-pnictides}%
  \BibitemOpen
  \bibfield  {author} {\bibinfo {author} {\bibfnamefont {R.}~\bibnamefont
  {Thomale}}, \bibinfo {author} {\bibfnamefont {C.}~\bibnamefont {Platt}},
  \bibinfo {author} {\bibfnamefont {J.}~\bibnamefont {Hu}}, \bibinfo {author}
  {\bibfnamefont {C.}~\bibnamefont {Honerkamp}}, \ and\ \bibinfo {author}
  {\bibfnamefont {B.~A.}\ \bibnamefont {Bernevig}},\ }\href {\doibase
  10.1103/PhysRevB.80.180505} {\bibfield  {journal} {\bibinfo  {journal} {Phys.
  Rev. B}\ }\textbf {\bibinfo {volume} {80}},\ \bibinfo {pages} {180505}
  (\bibinfo {year} {2009})}\BibitemShut {NoStop}%
\bibitem [{\citenamefont {Maier}\ and\ \citenamefont
  {Honerkamp}(2012{\natexlab{a}})}]{3particle}%
  \BibitemOpen
  \bibfield  {author} {\bibinfo {author} {\bibfnamefont {S.~A.}\ \bibnamefont
  {Maier}}\ and\ \bibinfo {author} {\bibfnamefont {C.}~\bibnamefont
  {Honerkamp}},\ }\href {\doibase 10.1103/PhysRevB.85.064520} {\bibfield
  {journal} {\bibinfo  {journal} {Phys. Rev. B}\ }\textbf {\bibinfo {volume}
  {85}},\ \bibinfo {pages} {064520} (\bibinfo {year}
  {2012}{\natexlab{a}})}\BibitemShut {NoStop}%
\bibitem [{\citenamefont {Salmhofer}\ and\ \citenamefont
  {Honerkamp}(2001)}]{salm_hon_2001}%
  \BibitemOpen
  \bibfield  {author} {\bibinfo {author} {\bibfnamefont {M.}~\bibnamefont
  {Salmhofer}}\ and\ \bibinfo {author} {\bibfnamefont {C.}~\bibnamefont
  {Honerkamp}},\ }\href@noop {} {\bibfield  {journal} {\bibinfo  {journal}
  {Prog. Theor. Phys.}\ }\textbf {\bibinfo {volume} {105(1)}},\ \bibinfo
  {pages} {1} (\bibinfo {year} {2001})}\BibitemShut {NoStop}%
\bibitem [{\citenamefont {Zhang}\ and\ \citenamefont
  {Rice}(1988)}]{zhang-rice}%
  \BibitemOpen
  \bibfield  {author} {\bibinfo {author} {\bibfnamefont {F.~C.}\ \bibnamefont
  {Zhang}}\ and\ \bibinfo {author} {\bibfnamefont {T.~M.}\ \bibnamefont
  {Rice}},\ }\href {\doibase 10.1103/PhysRevB.37.3759} {\bibfield  {journal}
  {\bibinfo  {journal} {Phys. Rev. B}\ }\textbf {\bibinfo {volume} {37}},\
  \bibinfo {pages} {3759} (\bibinfo {year} {1988})}\BibitemShut {NoStop}%
\bibitem [{\citenamefont {Berges}\ \emph {et~al.}(2002)\citenamefont {Berges},
  \citenamefont {Tetradis},\ and\ \citenamefont {Wetterich}}]{btw-review}%
  \BibitemOpen
  \bibfield  {author} {\bibinfo {author} {\bibfnamefont {J.}~\bibnamefont
  {Berges}}, \bibinfo {author} {\bibfnamefont {N.}~\bibnamefont {Tetradis}}, \
  and\ \bibinfo {author} {\bibfnamefont {C.}~\bibnamefont {Wetterich}},\ }\href
  {\doibase 10.1016/S0370-1573(01)00098-9} {\bibfield  {journal} {\bibinfo
  {journal} {Physics Reports}\ }\textbf {\bibinfo {volume} {363}},\ \bibinfo
  {pages} {223 } (\bibinfo {year} {2002})}\BibitemShut {NoStop}%
\bibitem [{\citenamefont {{Uebelacker}}\ and\ \citenamefont
  {{Honerkamp}}(2012)}]{uebelacker-se}%
  \BibitemOpen
  \bibfield  {author} {\bibinfo {author} {\bibfnamefont {S.}~\bibnamefont
  {{Uebelacker}}}\ and\ \bibinfo {author} {\bibfnamefont {C.}~\bibnamefont
  {{Honerkamp}}},\ }\href {\doibase 10.1103/PhysRevB.86.235140} {\bibfield
  {journal} {\bibinfo  {journal} {\prb}\ }\textbf {\bibinfo {volume} {86}},\
  \bibinfo {eid} {235140} (\bibinfo {year} {2012})}\BibitemShut {NoStop}%
\bibitem [{\citenamefont {Kiesel}\ \emph {et~al.}(2013)\citenamefont {Kiesel},
  \citenamefont {Platt},\ and\ \citenamefont {Thomale}}]{kiesel3}%
  \BibitemOpen
  \bibfield  {author} {\bibinfo {author} {\bibfnamefont {M.~L.}\ \bibnamefont
  {Kiesel}}, \bibinfo {author} {\bibfnamefont {C.}~\bibnamefont {Platt}}, \
  and\ \bibinfo {author} {\bibfnamefont {R.}~\bibnamefont {Thomale}},\ }\href
  {\doibase 10.1103/PhysRevLett.110.126405} {\bibfield  {journal} {\bibinfo
  {journal} {Phys. Rev. Lett.}\ }\textbf {\bibinfo {volume} {110}},\ \bibinfo
  {pages} {126405} (\bibinfo {year} {2013})}\BibitemShut {NoStop}%
\bibitem [{\citenamefont {Berntsen}\ \emph {et~al.}(1991)\citenamefont
  {Berntsen}, \citenamefont {Espelid},\ and\ \citenamefont {Genz}}]{genz}%
  \BibitemOpen
  \bibfield  {author} {\bibinfo {author} {\bibfnamefont {J.}~\bibnamefont
  {Berntsen}}, \bibinfo {author} {\bibfnamefont {T.~O.}\ \bibnamefont
  {Espelid}}, \ and\ \bibinfo {author} {\bibfnamefont {A.}~\bibnamefont
  {Genz}},\ }\href {\doibase http://doi.acm.org/10.1145/210232.210234}
  {\bibfield  {journal} {\bibinfo  {journal} {ACM Trans. Math. Softw.}\
  }\textbf {\bibinfo {volume} {17}},\ \bibinfo {pages} {452} (\bibinfo {year}
  {1991})}\BibitemShut {NoStop}%
\bibitem [{Note1()}]{Note1}%
  \BibitemOpen
  \bibinfo {note} {As for a conventional momentum shell cutoff, the
  particle-particle susceptibility in the $\Omega $-scheme diverges at a scale
  $ \lambda _{\protect \rm c} \propto e^{- 1/(U \rho _0)} $ for a BCS-type
  interaction and a constant density of states near the Fermi level. Therefore,
  $ \lambda _{\protect \rm c} $ is proportional to the BCS critical
  temperature.}\BibitemShut {Stop}%
\bibitem [{\citenamefont {Vojta}\ \emph {et~al.}(2000)\citenamefont {Vojta},
  \citenamefont {Zhang},\ and\ \citenamefont {Sachdev}}]{vojta-class}%
  \BibitemOpen
  \bibfield  {author} {\bibinfo {author} {\bibfnamefont {M.}~\bibnamefont
  {Vojta}}, \bibinfo {author} {\bibfnamefont {Y.}~\bibnamefont {Zhang}}, \ and\
  \bibinfo {author} {\bibfnamefont {S.}~\bibnamefont {Sachdev}},\ }\href
  {\doibase 10.1103/PhysRevLett.85.4940} {\bibfield  {journal} {\bibinfo
  {journal} {Phys. Rev. Lett.}\ }\textbf {\bibinfo {volume} {85}},\ \bibinfo
  {pages} {4940} (\bibinfo {year} {2000})}\BibitemShut {NoStop}%
\bibitem [{\citenamefont {Wang}(2013)}]{wang-private}%
  \BibitemOpen
  \bibfield  {author} {\bibinfo {author} {\bibfnamefont {Q.-H.}\ \bibnamefont
  {Wang}},\ }\href@noop {} {} (\bibinfo {year} {2013}),\ \bibinfo {note}
  {private communication}\BibitemShut {NoStop}%
\bibitem [{\citenamefont {Tinkham}(1964)}]{tinkham-groupth}%
  \BibitemOpen
  \bibfield  {author} {\bibinfo {author} {\bibfnamefont {M.}~\bibnamefont
  {Tinkham}},\ }\href@noop {} {\emph {\bibinfo {title} {Group theory and
  quantum mechanics}}}\ (\bibinfo  {publisher} {McGraw-Hill Book Company},\
  \bibinfo {year} {New York, San Francisco, Toronto, London, 1964})\BibitemShut
  {NoStop}%
\bibitem [{\citenamefont {Otsuki}(2012)}]{Otsuki}%
  \BibitemOpen
  \bibfield  {author} {\bibinfo {author} {\bibfnamefont {J.}~\bibnamefont
  {Otsuki}},\ }\href {\doibase 10.1103/PhysRevB.85.104513} {\bibfield
  {journal} {\bibinfo  {journal} {Phys. Rev. B}\ }\textbf {\bibinfo {volume}
  {85}},\ \bibinfo {pages} {104513} (\bibinfo {year} {2012})}\BibitemShut
  {NoStop}%
\bibitem [{\citenamefont {Giering}(2013)}]{giering-personal}%
  \BibitemOpen
  \bibfield  {author} {\bibinfo {author} {\bibfnamefont {K.-U.}\ \bibnamefont
  {Giering}},\ }\href@noop {} {} (\bibinfo {year} {2013}),\ \bibinfo {note}
  {private communication}\BibitemShut {NoStop}%
\bibitem [{\citenamefont {Katanin}\ and\ \citenamefont
  {Kampf}(2005)}]{katanin-gap}%
  \BibitemOpen
  \bibfield  {author} {\bibinfo {author} {\bibfnamefont {A.~A.}\ \bibnamefont
  {Katanin}}\ and\ \bibinfo {author} {\bibfnamefont {A.~P.}\ \bibnamefont
  {Kampf}},\ }\href {\doibase 10.1103/PhysRevB.72.205128} {\bibfield  {journal}
  {\bibinfo  {journal} {Phys. Rev. B}\ }\textbf {\bibinfo {volume} {72}},\
  \bibinfo {pages} {205128} (\bibinfo {year} {2005})}\BibitemShut {NoStop}%
\bibitem [{\citenamefont {Hankevych}\ \emph {et~al.}(2002)\citenamefont
  {Hankevych}, \citenamefont {Grote},\ and\ \citenamefont
  {Wegner}}]{pominst-wegner}%
  \BibitemOpen
  \bibfield  {author} {\bibinfo {author} {\bibfnamefont {V.}~\bibnamefont
  {Hankevych}}, \bibinfo {author} {\bibfnamefont {I.}~\bibnamefont {Grote}}, \
  and\ \bibinfo {author} {\bibfnamefont {F.}~\bibnamefont {Wegner}},\ }\href
  {\doibase 10.1103/PhysRevB.66.094516} {\bibfield  {journal} {\bibinfo
  {journal} {Phys. Rev. B}\ }\textbf {\bibinfo {volume} {66}},\ \bibinfo
  {pages} {094516} (\bibinfo {year} {2002})}\BibitemShut {NoStop}%
\bibitem [{\citenamefont {Neumayr}\ and\ \citenamefont
  {Metzner}(2003)}]{neumayr-metzner}%
  \BibitemOpen
  \bibfield  {author} {\bibinfo {author} {\bibfnamefont {A.}~\bibnamefont
  {Neumayr}}\ and\ \bibinfo {author} {\bibfnamefont {W.}~\bibnamefont
  {Metzner}},\ }\href {\doibase 10.1103/PhysRevB.67.035112} {\bibfield
  {journal} {\bibinfo  {journal} {Phys. Rev. B}\ }\textbf {\bibinfo {volume}
  {67}},\ \bibinfo {pages} {035112} (\bibinfo {year} {2003})}\BibitemShut
  {NoStop}%
\bibitem [{\citenamefont {Bulut}\ \emph {et~al.}(2013)\citenamefont {Bulut},
  \citenamefont {Atkinson},\ and\ \citenamefont {Kampf}}]{bulut-nematic}%
  \BibitemOpen
  \bibfield  {author} {\bibinfo {author} {\bibfnamefont {S.}~\bibnamefont
  {Bulut}}, \bibinfo {author} {\bibfnamefont {W.~A.}\ \bibnamefont {Atkinson}},
  \ and\ \bibinfo {author} {\bibfnamefont {A.~P.}\ \bibnamefont {Kampf}},\
  }\href {\doibase 10.1103/PhysRevB.88.155132} {\bibfield  {journal} {\bibinfo
  {journal} {Phys. Rev. B}\ }\textbf {\bibinfo {volume} {88}},\ \bibinfo
  {pages} {155132} (\bibinfo {year} {2013})}\BibitemShut {NoStop}%
\bibitem [{\citenamefont {Husemann}\ and\ \citenamefont
  {Metzner}(2012)}]{Husemann-pomer}%
  \BibitemOpen
  \bibfield  {author} {\bibinfo {author} {\bibfnamefont {C.}~\bibnamefont
  {Husemann}}\ and\ \bibinfo {author} {\bibfnamefont {W.}~\bibnamefont
  {Metzner}},\ }\href {\doibase 10.1103/PhysRevB.86.085113} {\bibfield
  {journal} {\bibinfo  {journal} {Phys. Rev. B}\ }\textbf {\bibinfo {volume}
  {86}},\ \bibinfo {pages} {085113} (\bibinfo {year} {2012})}\BibitemShut
  {NoStop}%
\bibitem [{\citenamefont {Greiter}\ and\ \citenamefont
  {Thomale}(2007)}]{thomale}%
  \BibitemOpen
  \bibfield  {author} {\bibinfo {author} {\bibfnamefont {M.}~\bibnamefont
  {Greiter}}\ and\ \bibinfo {author} {\bibfnamefont {R.}~\bibnamefont
  {Thomale}},\ }\href {\doibase 10.1103/PhysRevLett.99.027005} {\bibfield
  {journal} {\bibinfo  {journal} {Phys. Rev. Lett.}\ }\textbf {\bibinfo
  {volume} {99}},\ \bibinfo {pages} {027005} (\bibinfo {year}
  {2007})}\BibitemShut {NoStop}%
\bibitem [{\citenamefont {Ando}\ \emph {et~al.}(2002)\citenamefont {Ando},
  \citenamefont {Segawa}, \citenamefont {Komiya},\ and\ \citenamefont
  {Lavrov}}]{ando-nema-exp}%
  \BibitemOpen
  \bibfield  {author} {\bibinfo {author} {\bibfnamefont {Y.}~\bibnamefont
  {Ando}}, \bibinfo {author} {\bibfnamefont {K.}~\bibnamefont {Segawa}},
  \bibinfo {author} {\bibfnamefont {S.}~\bibnamefont {Komiya}}, \ and\ \bibinfo
  {author} {\bibfnamefont {A.~N.}\ \bibnamefont {Lavrov}},\ }\href {\doibase
  10.1103/PhysRevLett.88.137005} {\bibfield  {journal} {\bibinfo  {journal}
  {Phys. Rev. Lett.}\ }\textbf {\bibinfo {volume} {88}},\ \bibinfo {pages}
  {137005} (\bibinfo {year} {2002})}\BibitemShut {NoStop}%
\bibitem [{\citenamefont {Daou}\ \emph {et~al.}(2010)\citenamefont {Daou},
  \citenamefont {Chang}, \citenamefont {LeBoeuf}, \citenamefont
  {Cyr-Choiniere}, \citenamefont {Laliberte}, \citenamefont {Doiron-Leyraud},
  \citenamefont {Ramshaw}, \citenamefont {Liang}, \citenamefont {Bonn},
  \citenamefont {Hardy},\ and\ \citenamefont {Taillefer}}]{daou-nema-exp}%
  \BibitemOpen
  \bibfield  {author} {\bibinfo {author} {\bibfnamefont {R.}~\bibnamefont
  {Daou}}, \bibinfo {author} {\bibfnamefont {J.}~\bibnamefont {Chang}},
  \bibinfo {author} {\bibfnamefont {D.}~\bibnamefont {LeBoeuf}}, \bibinfo
  {author} {\bibfnamefont {O.}~\bibnamefont {Cyr-Choiniere}}, \bibinfo {author}
  {\bibfnamefont {F.}~\bibnamefont {Laliberte}}, \bibinfo {author}
  {\bibfnamefont {N.}~\bibnamefont {Doiron-Leyraud}}, \bibinfo {author}
  {\bibfnamefont {B.~J.}\ \bibnamefont {Ramshaw}}, \bibinfo {author}
  {\bibfnamefont {R.}~\bibnamefont {Liang}}, \bibinfo {author} {\bibfnamefont
  {D.~A.}\ \bibnamefont {Bonn}}, \bibinfo {author} {\bibfnamefont {W.~N.}\
  \bibnamefont {Hardy}}, \ and\ \bibinfo {author} {\bibfnamefont
  {L.}~\bibnamefont {Taillefer}},\ }\href {\doibase 10.1038/nature08716}
  {\bibfield  {journal} {\bibinfo  {journal} {Nature}\ }\textbf {\bibinfo
  {volume} {463}},\ \bibinfo {pages} {519} (\bibinfo {year}
  {2010})}\BibitemShut {NoStop}%
\bibitem [{\citenamefont {Hinkov}\ \emph {et~al.}(2004)\citenamefont {Hinkov},
  \citenamefont {Pailhes}, \citenamefont {Bourges}, \citenamefont {Sidis},
  \citenamefont {Ivanov}, \citenamefont {Kulakov}, \citenamefont {Lin},
  \citenamefont {Chen}, \citenamefont {Bernhard},\ and\ \citenamefont
  {Keimer}}]{Hinkov-nema-exp2004}%
  \BibitemOpen
  \bibfield  {author} {\bibinfo {author} {\bibfnamefont {V.}~\bibnamefont
  {Hinkov}}, \bibinfo {author} {\bibfnamefont {S.}~\bibnamefont {Pailhes}},
  \bibinfo {author} {\bibfnamefont {P.}~\bibnamefont {Bourges}}, \bibinfo
  {author} {\bibfnamefont {Y.}~\bibnamefont {Sidis}}, \bibinfo {author}
  {\bibfnamefont {A.}~\bibnamefont {Ivanov}}, \bibinfo {author} {\bibfnamefont
  {A.}~\bibnamefont {Kulakov}}, \bibinfo {author} {\bibfnamefont {C.~T.}\
  \bibnamefont {Lin}}, \bibinfo {author} {\bibfnamefont {D.~P.}\ \bibnamefont
  {Chen}}, \bibinfo {author} {\bibfnamefont {C.}~\bibnamefont {Bernhard}}, \
  and\ \bibinfo {author} {\bibfnamefont {B.}~\bibnamefont {Keimer}},\ }\href
  {\doibase 10.1038/nature02774} {\bibfield  {journal} {\bibinfo  {journal}
  {Nature}\ }\textbf {\bibinfo {volume} {430}},\ \bibinfo {pages} {650}
  (\bibinfo {year} {2004})}\BibitemShut {NoStop}%
\bibitem [{\citenamefont {Hinkov}\ \emph {et~al.}(2008)\citenamefont {Hinkov},
  \citenamefont {Haug}, \citenamefont {Fauqué}, \citenamefont {Bourges},
  \citenamefont {Sidis}, \citenamefont {Ivanov}, \citenamefont {Bernhard},
  \citenamefont {Lin},\ and\ \citenamefont {Keimer}}]{Hinkov-nema-exp2008}%
  \BibitemOpen
  \bibfield  {author} {\bibinfo {author} {\bibfnamefont {V.}~\bibnamefont
  {Hinkov}}, \bibinfo {author} {\bibfnamefont {D.}~\bibnamefont {Haug}},
  \bibinfo {author} {\bibfnamefont {B.}~\bibnamefont {Fauqué}}, \bibinfo
  {author} {\bibfnamefont {P.}~\bibnamefont {Bourges}}, \bibinfo {author}
  {\bibfnamefont {Y.}~\bibnamefont {Sidis}}, \bibinfo {author} {\bibfnamefont
  {A.}~\bibnamefont {Ivanov}}, \bibinfo {author} {\bibfnamefont
  {C.}~\bibnamefont {Bernhard}}, \bibinfo {author} {\bibfnamefont {C.~T.}\
  \bibnamefont {Lin}}, \ and\ \bibinfo {author} {\bibfnamefont
  {B.}~\bibnamefont {Keimer}},\ }\href {\doibase 10.1126/science.1152309}
  {\bibfield  {journal} {\bibinfo  {journal} {Science}\ }\textbf {\bibinfo
  {volume} {319}},\ \bibinfo {pages} {597} (\bibinfo {year}
  {2008})}\BibitemShut {NoStop}%
\bibitem [{\citenamefont {Fernandes}\ and\ \citenamefont
  {Schmalian}(2010)}]{schmalian}%
  \BibitemOpen
  \bibfield  {author} {\bibinfo {author} {\bibfnamefont {R.~M.}\ \bibnamefont
  {Fernandes}}\ and\ \bibinfo {author} {\bibfnamefont {J.}~\bibnamefont
  {Schmalian}},\ }\href {\doibase 10.1103/PhysRevB.82.014521} {\bibfield
  {journal} {\bibinfo  {journal} {Phys. Rev. B}\ }\textbf {\bibinfo {volume}
  {82}},\ \bibinfo {pages} {014521} (\bibinfo {year} {2010})}\BibitemShut
  {NoStop}%
\bibitem [{\citenamefont {Friederich}\ \emph {et~al.}(2011)\citenamefont
  {Friederich}, \citenamefont {Krahl},\ and\ \citenamefont
  {Wetterich}}]{friederich}%
  \BibitemOpen
  \bibfield  {author} {\bibinfo {author} {\bibfnamefont {S.}~\bibnamefont
  {Friederich}}, \bibinfo {author} {\bibfnamefont {H.~C.}\ \bibnamefont
  {Krahl}}, \ and\ \bibinfo {author} {\bibfnamefont {C.}~\bibnamefont
  {Wetterich}},\ }\href {\doibase 10.1103/PhysRevB.83.155125} {\bibfield
  {journal} {\bibinfo  {journal} {Phys. Rev. B}\ }\textbf {\bibinfo {volume}
  {83}},\ \bibinfo {pages} {155125} (\bibinfo {year} {2011})}\BibitemShut
  {NoStop}%
\bibitem [{\citenamefont {Eberlein}\ and\ \citenamefont
  {Metzner}()}]{eberlein-private}%
  \BibitemOpen
  \bibfield  {author} {\bibinfo {author} {\bibfnamefont {A.}~\bibnamefont
  {Eberlein}}\ and\ \bibinfo {author} {\bibfnamefont {W.}~\bibnamefont
  {Metzner}},\ }\href@noop {} {}\Eprint {http://arxiv.org/abs/1308.4845}
  {arXiv:1308.4845} \BibitemShut {NoStop}%
\bibitem [{\citenamefont {Le~Tacon}\ \emph {et~al.}(2006)\citenamefont
  {Le~Tacon}, \citenamefont {Sacuto}, \citenamefont {Georges}, \citenamefont
  {Kotliar}, \citenamefont {Gallais}, \citenamefont {Colson},\ and\
  \citenamefont {Forget}}]{leTacon-twogap}%
  \BibitemOpen
  \bibfield  {author} {\bibinfo {author} {\bibfnamefont {M.}~\bibnamefont
  {Le~Tacon}}, \bibinfo {author} {\bibfnamefont {A.}~\bibnamefont {Sacuto}},
  \bibinfo {author} {\bibfnamefont {A.}~\bibnamefont {Georges}}, \bibinfo
  {author} {\bibfnamefont {G.}~\bibnamefont {Kotliar}}, \bibinfo {author}
  {\bibfnamefont {Y.}~\bibnamefont {Gallais}}, \bibinfo {author} {\bibfnamefont
  {D.}~\bibnamefont {Colson}}, \ and\ \bibinfo {author} {\bibfnamefont
  {A.}~\bibnamefont {Forget}},\ }\href {\doibase 10.1038/nphys362} {\bibfield
  {journal} {\bibinfo  {journal} {Nat Phys}\ }\textbf {\bibinfo {volume} {2}},\
  \bibinfo {pages} {537} (\bibinfo {year} {2006})}\BibitemShut {NoStop}%
\bibitem [{\citenamefont {Tanaka}\ \emph {et~al.}(2006)\citenamefont {Tanaka},
  \citenamefont {Lee}, \citenamefont {Lu}, \citenamefont {Fujimori},
  \citenamefont {Fujii}, \citenamefont {Risdiana}, \citenamefont {Terasaki},
  \citenamefont {Scalapino}, \citenamefont {Devereaux}, \citenamefont
  {Hussain},\ and\ \citenamefont {Shen}}]{ARPES-twogap}%
  \BibitemOpen
  \bibfield  {author} {\bibinfo {author} {\bibfnamefont {K.}~\bibnamefont
  {Tanaka}}, \bibinfo {author} {\bibfnamefont {W.~S.}\ \bibnamefont {Lee}},
  \bibinfo {author} {\bibfnamefont {D.~H.}\ \bibnamefont {Lu}}, \bibinfo
  {author} {\bibfnamefont {A.}~\bibnamefont {Fujimori}}, \bibinfo {author}
  {\bibfnamefont {T.}~\bibnamefont {Fujii}}, \bibinfo {author} {\bibnamefont
  {Risdiana}}, \bibinfo {author} {\bibfnamefont {I.}~\bibnamefont {Terasaki}},
  \bibinfo {author} {\bibfnamefont {D.~J.}\ \bibnamefont {Scalapino}}, \bibinfo
  {author} {\bibfnamefont {T.~P.}\ \bibnamefont {Devereaux}}, \bibinfo {author}
  {\bibfnamefont {Z.}~\bibnamefont {Hussain}}, \ and\ \bibinfo {author}
  {\bibfnamefont {Z.-X.}\ \bibnamefont {Shen}},\ }\href {\doibase
  10.1126/science.1133411} {\bibfield  {journal} {\bibinfo  {journal}
  {Science}\ }\textbf {\bibinfo {volume} {314}},\ \bibinfo {pages} {1910}
  (\bibinfo {year} {2006})}\BibitemShut {NoStop}%
\bibitem [{\citenamefont {Salmhofer}\ \emph {et~al.}(2004)\citenamefont
  {Salmhofer}, \citenamefont {Honerkamp}, \citenamefont {Metzner},\ and\
  \citenamefont {Lauscher}}]{brokensy_SHML}%
  \BibitemOpen
  \bibfield  {author} {\bibinfo {author} {\bibfnamefont {M.}~\bibnamefont
  {Salmhofer}}, \bibinfo {author} {\bibfnamefont {C.}~\bibnamefont
  {Honerkamp}}, \bibinfo {author} {\bibfnamefont {W.}~\bibnamefont {Metzner}},
  \ and\ \bibinfo {author} {\bibfnamefont {O.}~\bibnamefont {Lauscher}},\
  }\href {\doibase 10.1143/PTP.112.943} {\bibfield  {journal} {\bibinfo
  {journal} {Progress of Theoretical Physics}\ }\textbf {\bibinfo {volume}
  {112}},\ \bibinfo {pages} {943} (\bibinfo {year} {2004})}\BibitemShut
  {NoStop}%
\bibitem [{\citenamefont {Gersch}\ \emph {et~al.}(2008)\citenamefont {Gersch},
  \citenamefont {Honerkamp},\ and\ \citenamefont
  {Metzner}}]{gersch_superfluid}%
  \BibitemOpen
  \bibfield  {author} {\bibinfo {author} {\bibfnamefont {R.}~\bibnamefont
  {Gersch}}, \bibinfo {author} {\bibfnamefont {C.}~\bibnamefont {Honerkamp}}, \
  and\ \bibinfo {author} {\bibfnamefont {W.}~\bibnamefont {Metzner}},\ }\href
  {http://stacks.iop.org/1367-2630/10/i=4/a=045003} {\bibfield  {journal}
  {\bibinfo  {journal} {New Journal of Physics}\ }\textbf {\bibinfo {volume}
  {10}},\ \bibinfo {pages} {045003} (\bibinfo {year} {2008})}\BibitemShut
  {NoStop}%
\bibitem [{\citenamefont {Eberlein}\ and\ \citenamefont
  {Metzner}(2010)}]{eberlein_param}%
  \BibitemOpen
  \bibfield  {author} {\bibinfo {author} {\bibfnamefont {A.}~\bibnamefont
  {Eberlein}}\ and\ \bibinfo {author} {\bibfnamefont {W.}~\bibnamefont
  {Metzner}},\ }\href {\doibase 10.1143/PTP.124.471} {\bibfield  {journal}
  {\bibinfo  {journal} {Progress of Theoretical Physics}\ }\textbf {\bibinfo
  {volume} {124}},\ \bibinfo {pages} {471} (\bibinfo {year}
  {2010})}\BibitemShut {NoStop}%
\bibitem [{\citenamefont {Eberlein}\ and\ \citenamefont
  {Metzner}(2013)}]{eberlein-sssu}%
  \BibitemOpen
  \bibfield  {author} {\bibinfo {author} {\bibfnamefont {A.}~\bibnamefont
  {Eberlein}}\ and\ \bibinfo {author} {\bibfnamefont {W.}~\bibnamefont
  {Metzner}},\ }\href {\doibase 10.1103/PhysRevB.87.174523} {\bibfield
  {journal} {\bibinfo  {journal} {Phys. Rev. B}\ }\textbf {\bibinfo {volume}
  {87}},\ \bibinfo {pages} {174523} (\bibinfo {year} {2013})}\BibitemShut
  {NoStop}%
\bibitem [{\citenamefont {Maier}\ and\ \citenamefont
  {Honerkamp}(2012{\natexlab{b}})}]{ours-recent}%
  \BibitemOpen
  \bibfield  {author} {\bibinfo {author} {\bibfnamefont {S.~A.}\ \bibnamefont
  {Maier}}\ and\ \bibinfo {author} {\bibfnamefont {C.}~\bibnamefont
  {Honerkamp}},\ }\href {\doibase 10.1103/PhysRevB.86.134404} {\bibfield
  {journal} {\bibinfo  {journal} {Phys. Rev. B}\ }\textbf {\bibinfo {volume}
  {86}},\ \bibinfo {pages} {134404} (\bibinfo {year}
  {2012}{\natexlab{b}})}\BibitemShut {NoStop}%
\bibitem [{\citenamefont {Wang}\ \emph
  {et~al.}(2013{\natexlab{b}})\citenamefont {Wang}, \citenamefont {Platt},
  \citenamefont {Yang}, \citenamefont {Honerkamp}, \citenamefont {Zhang},
  \citenamefont {Hanke}, \citenamefont {Rice},\ and\ \citenamefont
  {Thomale}}]{Sruo}%
  \BibitemOpen
  \bibfield  {author} {\bibinfo {author} {\bibfnamefont {Q.-H.}\ \bibnamefont
  {Wang}}, \bibinfo {author} {\bibfnamefont {C.}~\bibnamefont {Platt}},
  \bibinfo {author} {\bibfnamefont {Y.}~\bibnamefont {Yang}}, \bibinfo {author}
  {\bibfnamefont {C.}~\bibnamefont {Honerkamp}}, \bibinfo {author}
  {\bibfnamefont {F.~C.}\ \bibnamefont {Zhang}}, \bibinfo {author}
  {\bibfnamefont {W.}~\bibnamefont {Hanke}}, \bibinfo {author} {\bibfnamefont
  {T.~M.}\ \bibnamefont {Rice}}, \ and\ \bibinfo {author} {\bibfnamefont
  {R.}~\bibnamefont {Thomale}},\ }\href {\doibase 10.1209/0295-5075/104/17013}
  {\bibfield  {journal} {\bibinfo  {journal} {EPL (Europhysics Letters)}\
  }\textbf {\bibinfo {volume} {104}},\ \bibinfo {pages} {17013} (\bibinfo
  {year} {2013}{\natexlab{b}})}\BibitemShut {NoStop}%
\end{thebibliography}%

\end{document}